\providecommand\theHALG@line{\thealgorithm.\arabic{ALG@line}}
\providecommand{\algorithmname}{Algorithm}
\newcounter{rmq}[section]
\newcommand{\setX}{\mathcal{X}}
\newcommand{\ui}{[0,1)} 
\renewcommand{\P}{\mathbb{P}}
\newcommand{\E}{\mathbb{E}}
\newcommand{\Prob}{\mathrm{Pr}}
\newcommand{\F}{\mathcal{F}} 
\newcommand{\Unif}{\mathcal{U}} 
\newcommand{\bx}{\mathbf{x}}
\newcommand{\bu}{\mathbf{u}}
\newcommand{\bbu}{\mathbf{U}}
\newcommand{\bv}{\mathbf{v}}
\newcommand{\by}{\mathbf{y}}
\newcommand{\bz}{\mathbf{z}}
\newcommand{\ind}{\mathds{1}}
\newcommand{\dd}{\mathrm{d}}
\newcommand{\dx}{\dd \bx}
\newcommand{\bigO}{\mathcal{O}} 
\newcommand{\cvz}{\rightarrow 0} 
\renewcommand{\emptyset}{\varnothing} 
\newcommand{\comment}[1]{ \ifthenelse{ \equal{\showcomment}{true} }{ {\bf #1} }{} }
\newcommand{\showcomment}{true}
\newcommand{\subscript}[2]{$#1 _ #2$}
\newtheorem{theorem}{Theorem}
\newtheorem{corollary}{Corollary}
\newtheorem{lemma}{Lemma}
\newtheorem{definition}{Definition}
\begin{document}

\title{Improving Simulated Annealing through Derandomization}
\author{Mathieu Gerber\thanks{corresponding author: mathieugerber@fas.harvard.edu} 
\and Luke Bornn}
\date{Department of Statistics\\ \vspace{0.3cm}Harvard University}
\maketitle

\begin{abstract}
We propose and study a version of simulated annealing (SA) on continuous state spaces based on $(t,s)_R$-sequences. The parameter $R\in\bar{\mathbb{N}}$  regulates the degree of randomness of the input sequence, with the case $R=0$ corresponding to IID uniform random numbers  and the limiting case $R=\infty$ to   $(t,s)$-sequences. Our main result, obtained for rectangular domains, shows that the resulting optimization  method, which we refer to as QMC-SA, converges almost surely to the global optimum of the objective function $\varphi$ for any $R\in\mathbb{N}$. When  $\varphi$ is univariate, we are in addition  able to show that the completely deterministic version of QMC-SA is convergent. A key property of these  results is that they do not require objective-dependent conditions on the cooling schedule. As a corollary of our theoretical analysis, we provide a new almost sure convergence result for SA which shares this property under minimal assumptions on $\varphi$. We further explain how our results in fact apply to a broader class of optimization methods including for example threshold accepting, for which to our knowledge no convergence results currently exist. We finally illustrate the superiority of QMC-SA over  SA algorithms in a numerical study.

\textit{Keywords:} Global optimization; Quasi-Monte Carlo;  Randomized quasi-Monte Carlo;  Simulated annealing; Threshold accepting
\end{abstract}

\section{Introduction}

Simulated annealing (SA)  belongs to the class of stochastic optimization techniques, a popular suite of tools to find the global optimum of multi-modal and/or non-differentiable functions defined on continuous sets \citep{Geman1986}. For instance, SA has been successfully used to solve complicated continuous optimization problems arising in  signal processing \citep{Chen1999}, antenna array synthesis \citep{Girard2001}  and thermodynamics \citep{Zhang2011}; see \citet{Locatelli2002} and references therein for more applications of SA on continuous state spaces. In addition, since many inferential problems amount to optimizing an objective function, SA has proved to be useful both with frequentist \citep[see, e.g.,][]{Goffe1994, Rubenthaler2009,  Chen2011maximum} and Bayesian \citep[see, e.g.,][]{Andrieu2000, Ireland2007} statistical methods.

In this paper, we propose and study  SA algorithms based on quasi-Monte Carlo (QMC) point sets -- informally, sets of points that more evenly cover the unit hypercube than uniform random numbers. QMC point sets are already widely used in numerical analysis to evaluate the integral of a function over a hypercube, and an important literature on QMC integration error rate has been developed during the last 30 years  \citep[][]{dick2010digital}. Since QMC point sets are designed to evenly cover hypercubes, they intuitively should be an efficient alternative to pseudo-random numbers inside stochastic optimization routines. However, their use for global optimization is surprisingly scarce and is mainly limited to quasi-random search and some ad-hoc improvements thereof; see Section \ref{sub:QMCopt} for a   literature review.

More precisely,  we develop a SA algorithm (hereafter referred to as QMC-SA)  using as input $(t,s)_R$-sequences. The parameter $R\in\bar{\mathbb{N}}$ regulates the degree of randomness of the sequence, with the case $R=0$ corresponding to IID uniform random numbers in $\ui^s$ and the limiting case $R=\infty$ to   $(t,s)$-sequences, which  encompass the most classical constructions of QMC sequences such as Sobol', Faure and Niederreiter-Xing sequences \citep[see, e.g,][Chapter 8]{dick2010digital}.  The case $R=\infty$ (i.e. deterministic QMC-SA) is of particular interest but our main result only holds for $R\in\mathbb{N}$ because, for multivariate objective functions, some  randomness is needed to rule out  odd behaviours that may be hard to exclude with deterministic point sets. Note that our convergence result is valid for \emph{any} $R\in\mathbb{N}$. Consequently, only arbitrary small randomness is needed  and thus, as explained below (Section \ref{sub:(t-s)_R}), can be omitted in practice. For univariate test functions, our convergence result  also applies for the limiting case $R=\infty$.

For ``standard'' (i.e. Monte Carlo) SA algorithms, the choice of the cooling schedule $(T_n)_{n\geq 1}$ is a critical one since most convergence results for SA impose conditions on this sequence that are model-dependent and usually intractable. This is for instance the case for the almost sure convergence result of \citet{Belisle1992} on compact spaces or the convergence in probability result of \cite{Andrieu2001} on unbounded domains.

Regarding this point, the theoretical guarantees of QMC-SA, established on rectangular spaces, have the advantage to impose no problem-dependent conditions on the cooling schedule, only requiring one to choose a sequence  $(T_n)_{n\geq 1}$ such that the series $\sum_{n=1}^{\infty}(T_n\log n)$ is convergent. As a corollary of our analysis, we show that, under weaker assumptions than in \citet{Belisle1992}'s result,  selecting such a cooling schedule is enough to ensure the almost sure convergence  of  SA. We also note that our analysis applies for a broader class of optimization methods, which in particular encompass SA with arbitrary acceptance probability function and threshold accepting \citep{dueck1990,Moscato1990}. To the best of our knowledge, no convergence result exists for this latter method and thus this paper provides the first theoretical guarantees for this class of optimization techniques.

From an algorithmic point of view, QMC-SA simply amounts to replacing the pseudo-random random numbers used as input to SA with points taken from a $(t,s)_R$-sequence. The cost of generating the first $N\geq 1$ points of a  $(t,s)_R$-sequence is $\bigO(N\log N)$ for any $R\in\bar{\mathbb{N}}$ and therefore,  for a fixed number of function evaluations,  QMC-SA is slower than classical SA algorithms. However, the extra cost is small because in most cases we can generate points from $(t,s)_R$-sequences using efficient logical operations; further, the cost of simulating these sequences is typically minuscule in comparison to other algorithmic steps, such as evaluating the objective function. In addition, and as illustrated below, the deterministic version of QMC-SA typically requires many fewer function evaluations than SA to reach the same approximation error.

The rest of the paper is organized as follows. In Section \ref{sec:preliminaries} we set the notation and present the background material necessary to follow this work. The QMC-SA algorithm is presented in Section \ref{sec:QMC-SA} and its  convergence study  is given in Section \ref{sec:convergence}. In Section \ref{sec:extension} we discuss some extensions of QMC-SA; in particular, we will see how our convergence results apply to threshold accepting. Section \ref{sec:num} proposes  a numerical study to compare SA and QMC-SA, notably on a non-differentiable and high dimensional  optimization problem that arises in spatial statistics. Section \ref{sec:conc} concludes.

\section{Preliminaries\label{sec:preliminaries}}

In this section we first set the notation we will use throughout this work before giving a brief introduction to SA and to QMC methods. In particular, the use of QMC point sets for global optimization is motivated and discussed in Section \ref{sub:QMCopt}.

\subsection{Notation}

Let  $\setX\subseteq \mathbb{R}^d$ be a measurable set and $\varphi:\setX\rightarrow\mathbb{R}$. Then, we write $\varphi^*=\sup_{\bx\in\setX}\varphi(\bx)$ the quantity of interest in this work and $\mathcal{P}(\setX)$  the set of all probability measures on  $\setX$ which are absolutely continuous with respect to  $\lambda_d(\dx)$, the Lebesgue measure on $\mathbb{R}^d$. Vectors in $\mathbb{R}^d$ are denoted using bold notation and, for two integers $a$ and $b$, $b\geq a$, we write $a:b$ the set of integers $\{a,\dots,b\}$. Similarly, $\bx^{1:N}$ is the set of $N$ points $\{\bx^1,\dots,\bx^N\}$  in $\mathbb{R}^d$ and, for a vector $\bx\in\mathbb{R}^d$, $x_{1:i}:=(x_1,\dots,x_i)$, $i\in 1: d$. For two real numbers $a$ and $b$, we will sometimes use the notation $a\vee b$ (resp. $a\wedge b$) to denote the maximum (resp. the minimum) between $a$ and $b$.

Let $K:\setX\rightarrow \mathcal{P}(\setX)$ be a Markov kernel whose  density with respect to the Lebesgue measure is denoted by $K(\by|\bx$). We write $K_i(\bx,y_{1:i-1}, \dd y_i)$, $i\in 1:d$, the distribution of $y_i$ conditional on $y_{1:i-1}$, relative to $K(\bx,\dd\by)$ (with the convention $K_i(\bx,y_{1:i-1}, \dd y_i)=K_1(\bx, \dd y_1)$ when $i=1$) and the corresponding density  function is denoted by $K_i(y_i|y_{1:i-1},\bx)$ (again with the convention $K_i(y_i|y_{1:i-1},\bx)=K_1(y_1|\bx)$ when $i=1$).

For a distribution $\pi\in\mathcal{P}(\setX)$, we denote by $F_{\pi}:\setX\rightarrow [0,1]^d$ (resp. $F^{-1}_{\pi}:[0,1]^d\rightarrow \setX$) its Rosenblatt (resp. inverse Rosenblatt) transformation \citep{Rosenblatt1952}. When $d=1$, the (inverse)  Rosenblatt transformation of $\pi$ reduces to its (inverse) CDF and, when $d>1$,  the $i$-th component of $F_\pi$ (resp. of $F_\pi^{-1}$) is the CDF (resp. the inverse CDF) of  component $x_i$ conditionally on $(x_1,\dots, x_{i-1})$. Similarly, for a kernel $K:\setX\rightarrow \mathcal{P}(\setX)$, we denote by $F_{K}(\bx,\cdot)$ (resp. $F^{-1}_{K}(\bx,\cdot)$)  the Rosenblatt  (resp. the inverse Rosenblatt) transformation of the probability measure $K(\bx,\dd\by)\in \mathcal{P}(\setX)$. We   refer the reader e.g. to \citet[][Section 3.1]{SQMC} for a more detailed presentation of these mappings.

Finally, we write $B_{\delta}(\tilde{\bx})\subseteq\setX$  the ball of radius $\delta>0$ around $\tilde{\bx}\in\setX$ with respect to the supremum norm, i.e. $B_{\delta}(\tilde{\bx})=\{\bx\in \setX:\, \|\bx-\tilde{\bx}\|_{\infty}\leq \delta\}\cap\setX$ where, for vectors $\bz\in\mathbb{R}^d$, $\|\bz\|_\infty=\max_{i\in 1:d}|z_i|$.

\subsection{Introduction to simulated annealing\label{sub:SA}}

As already mentioned, simulated annealing \citep[see, e.g,][for a recent overview ]{Nikolaev2010}  is an iterative stochastic optimization techniques designed to evaluate the \textit{supremum} of a function $\varphi:\setX\subseteq\mathbb{R}^d\rightarrow \mathbb{R}$; see Algorithm \ref{alg:SAQMC} below for a pseudo-code version of SA. At iteration $n\geq 1$ of SA, and  given a current location $\bx^{n-1}\in\setX$, a candidate value $\by^{n}$ is generated using the probability distribution $K(\bx^{n-1}, \dd\by)\in\mathcal{P}(\setX)$, where $K:\setX\rightarrow\mathcal{P}(\setX)$ is a Markov kernel. Then, a Metropolis step is performed to decide whether we move or not to this proposed location. In particular, if $\by^{n}$ increases the function value, i.e. $\varphi(\by^{n})\geq \varphi(\bx^{n-1})$, then we move with probability one to the location  $\by^{n}$, i.e. $\bx^{n}=\by^{n}$ almost surely. However,  some downwards moves are also accepted in order to escape quickly from local maxima. Given a current location $\bx^{n-1}$ and a candidate value $\by^{n}$, the probability to accept a downward move depends on $T_{n}$, the level of temperature at iteration $n$. The sequence $(T_n)_{n\geq 1}$ is strictly positive and converges to zero as $n\rightarrow\infty$, allowing for more exploration early in the algorithm.

There exist various convergence results for SA algorithms, the vast majority of them being based on the assumption that the state space is compact; see however \citet{Andrieu2001}   for  results on  non-compact  spaces. For compact  spaces, convergence results can be found, e.g., in \citet{Belisle1992, Locatelli1996,Locatelli2000b,Haario1991}; see also \citet{Lecchini2010} for results on the finite time behaviour of SA on compact spaces.

\subsection{Introduction to quasi-Monte Carlo}\label{sub:QMC}

As mentioned in the introduction, QMC point sets are sets of points in $\ui^d$ that are more evenly distributed  than IID uniform random numbers. There exist in the QMC literature many different measures of uniformity  of a point set $\bu^{1:N}$ in $\ui^d$, the most popular of them being the \emph{star discrepancy}, defined as
$$
D^{\star}(\bu^{1:N})=\sup_{\bm{b}\in (0,1]^d}\Big|\sum_{n=1}^N\ind\big(\bu^n\in [\bm{0},\bm{b})\big)-\prod_{i=1}^db_i\Big|,
$$ 
where $\ind(\cdot)$ denotes the indicator function. A QMC (or low discrepancy) point set may be formally defined as a set of points $\bu^{1:N}$ in $\ui^d$ such that $D^{\star}(\bu^{1:N})=\bigO(N^{-1}(\log N)^d)$. Note that for a set of IID  uniform random numbers, $D^{\star}(\bu^{1:N})=\bigO(N^{-1/2}\log\log N)$ almost surely \citep[see, e.g.,][p.167]{Niederreiter1992}. There exist various constructions of QMC point sets $\bu^{1:N}$; see  for instance the books of \citet{Niederreiter1992,dick2010digital} for more details on this topic.

\subsubsection{Quasi-Monte Carlo optimization}\label{sub:QMCopt}

If QMC point sets are  widely used to evaluate the integral of a function, there has been remarkably little use of QMC for global optimization, with efforts primarily limited to QMC versions of random search. We start by discussing this work in order to motivate why and how the good equidistribution properties of QMC point sets may be useful to improve stochastic optimization methods.

Consider the problem of computing $\varphi^*$ for a function $\varphi:\ui^d\rightarrow\mathbb{R}$. Then, a simple way to approximate $\varphi^*$ is to compute $m(\varphi,\bu^{1:N}):=\max_{1\leq n\leq N}\varphi(\bu^n)$,
where  $\bu^{1:N}$ is a set  of $N$ points in $\ui^d$. \citet{Niederreiter1983b} shows that, for Lipschitz functions $\varphi$,
$$
\varphi^*-m(\varphi,\bu^{1:N})\leq C(\varphi) d\big(\bu^{1:N}\big)
$$
for a constant $C(\varphi)<\infty$ which depends only on $\varphi$ and where
$$
d\big(\bu^{1:N}\big)=\max_{\bx\in\ui^d}\min_{1\leq n\leq N}\|\bu^n-\bx\|_{\infty}
$$
is the \emph{dispersion} of $\bu^{1:N}$ on $\ui^d$, which measures the maximum distance between a point in the state space and the points used in the search. Intuitively, the good partition of the points of a QMC point set  inside the unit hypercube should translate into a small dispersion. And indeed, there exist several constructions of QMC point sets which verify $d\big(\bu^{1:N}\big)\leq \bar{C} N^{-{}1/d}$ for a constant $\bar{C}<\infty$, where  the dependence in $N\geq 1$ is optimal \citep[][Theorem 6.8, p.154]{Niederreiter1992}.

Ad hoc improvements of quasi-random search, designed to improve upon the non-adaptive $N^{-1/d}$ convergence rate, have been proposed by \citet{Niederreiter1986}, \citet{Hickernell1997}, \citet{Lei2002},  and \citet{Jiao2006}, see also \citet{Fang2002} and references therein. Finally, note that QMC optimization has been applied successfully  in statistics for, e.g., maximum likelihood estimation and parameter estimation in nonlinear regression models \citep[see][and references therein]{Fang1996, Fang2002} as well as for  portfolio management \citep{Pistovvcak2004} and power system tuning \citep{Alabduljabbar2008}.

In optimization, we often run the algorithm until a  given tolerance criterion is fulfilled, and thus the number of function evaluations is not known in advance. Therefore, we will  focus in this paper on QMC point sets $\bu^{1:N}$ obtained by taking the first $N$ points of a sequence $(\bu^{n})_{n\geq 1}$. Most constructions of QMC sequences belong to the class of the so-called $(t,s)$-sequences \citep{Niederreiter1987} which are, as shown below, well adapted for optimization and therefore constitute the key ingredient of QMC-SA.

\subsubsection{Definition and main properties of $(t,s)$-sequences\label{sub:(t-s)}}

For integers $b\geq 2$ and $s\geq 1$, let 
$$
\mathcal{E}_s^b=\Big\{\prod_{j=1}^s\big[a_j b^{-d_j},(a_j+1)b^{-d_j}\big)\subseteq \ui^s,\, a_j,\,d_j\in\mathbb{N},\, a_j< b^{d_j},
\, j\in 1:s\Big\}
$$
be the  set of all $b$-ary boxes (or elementary intervals in base $b$) in $\ui^s$. Then, we introduce the notion of $(t,s)$-sequences through two definitions:

\begin{definition}
Let $m\geq 0$, $b\geq 2$, $0\leq t\leq m$ and $s\geq 1$  be integers. Then,  the set $\{\bu^n\}_{n=0}^{b^m-1}$ of $b^m$ points in $\ui^s$ is called a $(t,m,s)$-net in base $b$ in every $b$-ary box $E\in \mathcal{E}_s^b$ of volume $b^{t-m}$ contains exactly $b^t$ points of the point set $\{\bu^n\}_{n=0}^{b^m-1}$.

\end{definition}

\begin{definition}
Let $b\geq 2$, $t\geq 0$, $s\geq 1$  be integers. Then, the sequence $(\bu^n)_{n\geq 0}$ of points in $\ui^s$ is  called a  $(t,s)$-sequence in base $b$ if, for any integers $a\geq 0$ and $m\geq  t$, the set  $\{\bu^n\}_{n=ab^m}^{(a+1)b^m-1}$ is a $(t,m,s)$-net in base $b$.
\end{definition}


An interesting property of $(t,s)$-sequences for optimization is that,  for any $N\geq 1$, the dispersion of the  first $N$ points of  a $(t,s)$-sequence is bounded by $C_{b,t,s}N^{-1/s}$ for a constant  $C_{b,t,s}<\infty$ \citep[][Theorem 6.11, p.156]{Niederreiter1992}, where the dependence in $N$ is optimal as recalled in the previous subsection.  Over $(t,m,s)$-nets, the dispersion is bounded by $C'_{b,t,s}b^{-m/s}$ for a constant $C'_{b,t,s}<C_{b,t,s}$ \citep[][Theorem 6.10, p.156 ]{Niederreiter1992}. Note that the integer $b$  determines how often sets of points with good coverage of the hypercube arrive as we go through  the sequence, while  the parameter $t\geq 0$ measures the quality of these sets. In particular, for a given value of $s$ and $b$, the smaller $t$  is the better the $(t,m,s)$-net spreads over the unit hypercube. However, for a given value of $b\geq 2$ and $s$, $(t,s)$-sequences exist only for some values of $t$ and, notably, a $(0,s)$-sequence exists only if $b\geq s$ \citep[see, e.g.,][Corollary 4.36, p.141]{dick2010digital}.  We refer the reader to \citet[][Chapter 4]{Niederreiter1992} and \citet[][Chapter 4]{dick2010digital} for a more complete exposition of $(t,s)$-sequences.

\section{Quasi-Monte Carlo simulated annealing\label{sec:QMC-SA}}

The QMC-SA algorithm we study  in this work is presented in Algorithm \ref{alg:SAQMC}. Note that, when the input sequences  are IID uniform random numbers in $\ui^{d+1}$,  Algorithm \ref{alg:SAQMC} reduces to standard SA  where we have written the simulations of the candidate value $\by^n$ and the Metropolis steps as a function of $d+1$ random numbers using the inverse Rosenblatt transformation of the Markov kernel.

To obtain a QMC version of SA, Algorithm \ref{alg:SAQMC} simply amounts to replacing the pseudo-random numbers used in classical simulated annealing algorithms by specific QMC sequences, whose choice  is crucial for the performance of the optimization routine. The SA algorithm  proposed in this work is based on what we call $(t,s)_R$-sequences, that we introduce in the the next subsection.

\subsection{$(t,s)_R$-sequences: Definition and their use in QMC-SA\label{sub:(t-s)_R}}
$ $
\begin{definition}
Let $b\geq 2$, $t\geq 0$, $s\geq 1$ and $R\in\bar{\mathbb{N}}$ be integers. Then, we say that the  sequence $(\bu_R^n)_{n\geq 0}$ of points in $\ui^{s}$ is a $(t,s)_R$-sequence in base $b$ if, for all $n\geq 0$ (using the convention that empty sums are null),
$$
\bu_R^n=(u_{R,1}^n,\dots,u_{R,s}^n),\quad u_{R,i}^n=\sum_{k=1}^{R}a_{ki}^nb^{-k}+b^{-R}z_i^n,\quad i\in 1:s
$$
where the $z_i^n$'s are IID uniform random variables in $\ui$ and where the digits $a_{ki}^n$'s in $\{0,\dots,b-1\}$ are such that  $(\bu_{\infty}^n)_{n\geq 0}$ is a $(t,s)$-sequence in base $b$. 
\end{definition}

The parameter $R$ therefore regulates the degree of randomness  of the sequence $(\bu_R^n)_{n\geq 0}$, with the two extreme cases $R=0$ and $R=\infty$ corresponding, respectively, to   a sequence of IID uniform random variables  and  a completely deterministic sequence. For $R\geq t$, note that, for all $a\in\mathbb{N}$ and for all $m\in t:R$, the set $\{\bu_R^n\}_{n=ab^m}^{(a+1)b^m-1}$ is a $(t,m,s)$-net in base $b$. In addition, for any $R<\infty$ and $n\geq 0$,  $\bu_R^n$ is uniformly distributed in a $b$-ary box $E_n\in\mathcal{E}_s^b$  of volume $b^{-sR}$, 
whose position in $\ui^s$ depends on the deterministic part of $\bu_{R}^n$.

In practice, one will often use   $R=0$ or the  limiting case $R=\infty$. But, to rule out some odd behaviours that cannot be excluded  due to the deterministic nature of $(t,s)$-sequences, our general consistency result only applies for $R\in\mathbb{N}$. As already mentioned, note that an arbitrary small degree of randomness suffices to guarantee the validity of QMC-SA since our consistency results holds for \textit{all} $R\in\mathbb{N}$.

However, in practice, the randomization can be omitted. Indeed, SA algorithms are often run for a maximum of $N<\infty$ iterations and, if the result is not satisfactory, instead of increasing $N$,  it is usually  run again using a different starting value  and/or a different input sequence. For most constructions of $(t,s)$-sequences (e.g. Sobol', Faure and  Niederreiter-Xing sequences),  $a_{ik}^n=0$ for all $k\geq k_n$, where $k_n$ denotes the smallest integer such that $n<b^{k_n}$. Thus, for a fixed $N$, if one chooses $R\geq k_N$, the randomization is not necessary. 

Note also that $(t,s)_R$-sequences contain the practical versions of scrambled $(t,s)$-sequences \citep{Owen1995}. A scrambled $(t,s)$-sequence $(\tilde{\bu}^n)_{n\geq 0}$ is obtained from a $(t,s)$-sequence $(\bu^n)_{n\geq 0}$ by  randomly permuting (`scrambling') the digits $a_{ki}^n$'s of its components in a way that preserves the  equidistribution properties of $(\bu^n)_{n\geq 0}$. In particular, the  random sequence $(\tilde{\bu}^n)_{n\geq 0}$ is a $(t,s)$-sequence in base $b$ with probability one and, in addition, $\tilde{\bu}^n\sim\Unif(\ui^{s})$ for all $n\geq 0$.  The  sequence $(\tilde{\bu}^n)_{n\geq 0}$ has component $\tilde{u}_i^n=\sum_{k=1}^{\infty}\tilde{a}_{ki}^nb^{-k}$ such that $\tilde{a}_{ki}^n\neq 0$ for all $k,i,s$ but, in practice, the infinite sum is truncated  and instead we use the sequence $(\check{\bu}^n)_{n\geq 0}$, with component $\check{u}_i^n=\sum_{k=1}^{K_{\text{max}}}\tilde{a}_{ki}^nb^{-k}+b^{-K_{\text{max}}}z_i^n$ where, as above, the $z_i^n$'s are IID uniform random variables in $\ui$ \citep[see][]{Owen1995}.

\begin{algorithm}
\begin{algorithmic}[1]
\caption{Generic QMC-SA algorithm to compute $\sup_{\bx\in\setX}\varphi(\bx)$\label{alg:SAQMC}}
\Require A starting point $\bx^0\in\setX$, a Markov kernel $K$ acting from $\setX$ to itself, $T_1>0$,  $(\bu_R^n)_{n\geq 0}$, a $(t,d)_R$-sequence in base $b\geq 2$ and $(v^n)_{n\geq 0}$ a $(0,1)$-sequence in base $b$.

\For{$n=1\to N$}

\State Compute  $\by^n=F_{K}^{-1}(\bx^{n-1},\bu_R^n)$
\If{$v^n \leq  \exp\{(\varphi(\by^n)-\varphi(\bx^{n-1}))/T_{n}\}$}
\State $\bx^n=\by^n$
\Else 
\State $\bx^n=\bx^{n-1}$
\EndIf
\State Select $T_{n+1}\in(0, T_n]$
\EndFor
\end{algorithmic}
\end{algorithm}

\subsection{Practical implementation\label{sub:implemenation}}

Algorithm \ref{alg:SAQMC} simply amounts to replacing the pseudo-random numbers used in classical simulated annealing algorithms by points taken from  a $(0,1)$-sequence  and from a $(t,s)_R$-sequences $(\bu_R^n)_{n\geq 0}$. In practice, the generation of $(t,s$)-sequences is   an easy task  because  most statistical software contain routines to generate them (e.g. the package  \texttt{randtoolbox} in R or  the class \texttt{qrandset} in the statistical toolbox of Matlab). Generating  $(\bu_R^n)_{n\geq 0}$ for $R\in\mathbb{N}_+$ is more complicated because one should act at the digit level. However, as a  rough proxy, one can generate $(\bu_R^n)_{n\geq 0}$ as follows: generate a  $(t,d)$-sequence $(\bu^n)_{n\geq 0}$ and set $\bv_R^n=\bu^n+b^{-R-1}\bm{z}^n$, where the $\bm{z}^n$'s are IID uniform random variables in $\ui^d$.

Concerning the computational cost,  generating the first $N$ points  of most constructions of $(t,s)$-sequences requires $\bigO(N\log N)$ operations \citep{Hong2003}. This is slower than random sampling but the extra cost is particularly small when $b=2$ since, in that case, bits operations can be used to generate the points of the sequence.

Steps 2-7 of Algorithm \ref{alg:SAQMC} sample $\bx^n$ from a distribution $\tilde{K}(\bx^{n-1},\dx)\in\mathcal{P}(\setX)$ using the point $(\bu_R^n,v^n)\in\ui^{d+1}$. Thus, although not required for our consistency results, it seems a good idea to choose the $(0,1)$-sequence $(v^n)_{n\geq 0}$ and the limiting $(t,d)$-sequence  $(\bu_{\infty}^n)_{n\geq 0}$ such that the sequence $(\bv^n)_{n\geq 0}$ in $\ui^{d+1}$, with component $\bv^n=(\bu_{\infty}^n,v^n)$, is a $(t,d+1)$-sequence. Note that this is a weak requirement because  most standard constructions of $(t,d+1)$-sequences  $((\bu_{\infty}^n,v^n))_{n\geq 0}$, such as, e.g.,  the  Sobol' or the Faure sequences \citep[see][Chapter 8, a definitions]{dick2010digital},  are such that $(\bu_{\infty}^n)_{n\geq 0}$ and $(v^n)_{n\geq 0}$ have the right properties.

Finally, compared to Monte Carlo SA, QMC-SA has the drawback of requiring Markov kernels $K$ that  can be sampled efficiently using the inverse Rosenblatt transformation approach. However, this is not an important practical limitation because SA algorithms are often based on (truncated)  Gaussian or Cauchy kernels for which it is trivial to compute the inverse Rosenblatt transformation. Again, we refer the reader e.g. to \citet[][Section 3.1]{SQMC} for a detailed presentation of this sampling strategy.

\section{Convergence results\label{sec:convergence}}

The convergence study of Algorithm \ref{alg:SAQMC} is  conducted under the assumption  that $\setX$ is a non-empty closed hyperrectangle of $\mathbb{R}^d$. Without loss of generality,  we assume that $\setX=[0,1]^d$ in what follows.

Our results require some (weak) regularity assumptions on the Markov kernel in order to preserve the good equidistribution properties of the QMC sequences. More precisely, we consider the following two assumptions on $K:[0,1]^d\rightarrow\mathcal{P}([0,1]^d)$ :

\begin{enumerate}[label=(\subscript{A}{\arabic*})]
\item\label{H:lem:K1}  For a fixed $\bx\in\setX$, the $i$-th component of  $F_{K}(\bx,\by)$ is strictly increasing in $y_i\in[0,1]$, $i\in 1:d$;
\item\label{H:lem:K2} The Markov kernel $K(\bx,\dd \by)$ admits a continuous density $K(\by|\bx)$ (with respect to the Lebesgue measure on $[0,1]^d$) such that, for a constant $\underline{K}>0$,   $K(\by|\bx) \geq \underline{K}$ for all $(\bx,\by)\in \setX^2$.
\end{enumerate}
Assumption \ref{H:lem:K1} ensures that, for any $\bx\in\setX$, the inverse Rosenblatt transformation of $F_K(\bx,\dd\by)\in\mathcal{P}(\setX)$ is a well defined function while Assumption \ref{H:lem:K2} is  used, e.g., in \citet{Belisle1992}.


In the next subsection we  state some preliminary results on the dispersion of point sets obtained by transforming a $(t,s)$-sequence through inverse Rosenblatt transformations. These results provide key insight into how QMC may improve the performance of SA; readers mostly interested in the main results can however skip  this  subsection and go directly to Sections \ref{sub:consisistency} and \ref{sub:R0}.

\subsection{Preliminary results\label{sub:prel}}

The following lemma provides conditions on the Markov kernel $K:\setX\rightarrow\mathcal{P}(\setX)$ so that the inverse Rosenblatt transformation $F_K^{-1}(\bx,\cdot)$ converts a low dispersion point set in $\ui^d$ into a low dispersion point set in $\setX$.

\begin{lemma}\label{lem:dense}
Let $\setX=[0,1]^d$ and $(\bu^n)_{n\geq 0}$ be a $(t,d)$-sequence in base $b\geq 2$. Let $K:\setX\rightarrow\mathcal{P}(\setX)$ be a Markov kernel such that Assumptions \ref{H:lem:K1}-\ref{H:lem:K2} hold. Let $(\tilde{\bx},\bx')\in\setX^2$ and $\by^n=F_K^{-1}(\tilde{\bx},\bu^n)$, $n\geq 0$. Then, there exists a $\bar{\delta}_K>0$  such that, for any $\delta\in (0,\bar{\delta}_K]$, there is a $k_{\delta}\in\mathbb{N}$, $k_{\delta}\geq t+d$,  such that, for any $a\in\mathbb{N}$, the point set $\{\by^n\}_{n=ab^{k_{\delta}}}^{(a+1)b^{k_{\delta}}-1}$ contains at least $b^t$ points in $B_{\delta}(\bx')$. In addition, $\bar{\delta}_K$ and $k_{\delta}$ do not depend on the point $(\tilde{\bx},\bx')\in\setX^2$. Moreover, the   result  remains true if $\by^n=F_K^{-1}(\bx^n,\bu^n)$ with $\bx^n\in B_{v_K(\delta)}(\tilde{\bx})$ for all $n\geq 0$ and where $v_K:(0,\bar{\delta}_K]\rightarrow\mathbb{R^+}$ is independent of $(\tilde{\bx},\bx')\in\setX^2$, continuous and strictly increasing, and such that $v_K(\delta)\cvz$ as $\delta\cvz$.
\end{lemma}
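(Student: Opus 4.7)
My plan is to locate, for every small $\delta>0$, a $b$-ary box $E\subset\ui^d$ whose image under $F_K^{-1}(\tilde{\bx},\cdot)$ lies inside $B_{\delta}(\bx')$, and then invoke the defining net property of $(t,d)$-sequences to guarantee that every block $\{\bu^n\}_{n=ab^{k_{\delta}}}^{(a+1)b^{k_{\delta}}-1}$ places exactly $b^t$ of its members inside $E$. First I would observe that under \ref{H:lem:K1}--\ref{H:lem:K2} the map $F_K(\tilde{\bx},\cdot):[0,1]^d\to[0,1]^d$ is, for each fixed $\tilde{\bx}$, a continuous bijection of a compact cube (coordinate-wise strict monotonicity plus continuity) and hence a homeomorphism. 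A short compactness argument---if $(\tilde{\bx}_n,\bu_n)\to(\tilde{\bx}_0,\bu_0)$, every subsequential limit of $F_K^{-1}(\tilde{\bx}_n,\bu_n)$ is forced by joint continuity of $F_K$ to equal $F_K^{-1}(\tilde{\bx}_0,\bu_0)$---promotes this to joint, and therefore uniform, continuity of $(\tilde{\bx},\bu)\mapsto F_K^{-1}(\tilde{\bx},\bu)$ on $\setX\times\ui^d$. In particular there is a modulus $\eta(\cdot)$ with $\eta(\epsilon)\cvz$ as $\epsilon\cvz$ such that $\|\bu-\bu'\|_{\infty}\leq\eta(\epsilon)$ implies $\|F_K^{-1}(\bx,\bu)-F_K^{-1}(\bx,\bu')\|_{\infty}\leq\epsilon$ for every $\bx\in\setX$.

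Set $\bu^{\star}=F_K(\tilde{\bx},\bx')$ and $\eta=\eta(\delta/2)$. The intersection $B_{\eta}(\bu^{\star})\cap\ui^d$ is a cartesian product of intervals of length at least $\eta$, so a one-line pigeonhole produces, for every integer $k\geq 1$ with $b^{-k}\leq\eta/2$, a $b$-ary cube $E\in\mathcal{E}_d^b$ of side $b^{-k}$ inside it. I would take the smallest such $k$ and set $k_{\delta}=dk+t$, so that $E$ has volume $b^{t-k_{\delta}}$ and $k_{\delta}\geq t+d$. By the definition of a $(t,d)$-sequence, for every $a\in\mathbb{N}$ the block $\{\bu^n\}_{n=ab^{k_{\delta}}}^{(a+1)b^{k_{\delta}}-1}$ is a $(t,k_{\delta},d)$-net in base $b$ and therefore contains exactly $b^t$ points inside $E$; by the modulus their $F_K^{-1}(\tilde{\bx},\cdot)$-images all lie in $B_{\delta/2}(\bx')\subseteq B_{\delta}(\bx')$. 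Choosing $\bar{\delta}_K$ small enough that $\eta(\bar{\delta}_K/2)\leq 1$ makes the construction feasible for every $\delta\in(0,\bar{\delta}_K]$, and because $\eta(\cdot)$ is a property of $K$ alone, neither $\bar{\delta}_K$ nor $k_{\delta}$ depends on $(\tilde{\bx},\bx')$.

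For the ``moreover'' statement I would introduce the companion modulus
\[
\omega_K(\eta)=\sup\bigl\{\|F_K^{-1}(\bx,\bu)-F_K^{-1}(\tilde{\bx},\bu)\|_{\infty}:\|\bx-\tilde{\bx}\|_{\infty}\leq\eta,\ \bu\in\ui^d\bigr\},
\]
which is non-decreasing with $\omega_K(\eta)\cvz$ as $\eta\cvz$ by the same uniform continuity, and then define $v_K:(0,\bar{\delta}_K]\to\mathbb{R}^+$ as any continuous, strictly increasing function vanishing at $0^+$ with $\omega_K(v_K(\delta))\leq\delta/2$; such a $v_K$ exists by a routine smoothing of $\delta\mapsto\sup\{\eta:\omega_K(\eta)\leq\delta/2\}$. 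For any $\bx^n\in B_{v_K(\delta)}(\tilde{\bx})$ and any of the $b^t$ points $\bu^n\in E$ found above, the triangle inequality
\[
\|F_K^{-1}(\bx^n,\bu^n)-\bx'\|_{\infty}\leq\|F_K^{-1}(\bx^n,\bu^n)-F_K^{-1}(\tilde{\bx},\bu^n)\|_{\infty}+\|F_K^{-1}(\tilde{\bx},\bu^n)-\bx'\|_{\infty}\leq\tfrac{\delta}{2}+\tfrac{\delta}{2}
\]
then delivers the conclusion with the same $k_{\delta}$.

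The only genuinely technical point is verifying that $F_K^{-1}$ is uniformly continuous in $\bu$ with a modulus independent of the base point $\tilde{\bx}$; this is where \ref{H:lem:K1}--\ref{H:lem:K2} together with compactness of $\setX$ are essentially used. Once that modulus---and the analogous $\omega_K$---are in hand, the rest is just an elementary pigeonhole estimate plus a direct appeal to the $(t,d)$-sequence definition.
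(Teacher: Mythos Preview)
Your argument is correct and takes a genuinely different route from the paper. The paper works entirely with the \emph{forward} Rosenblatt map: from the lower bound $K_i(y_i\mid y_{1:i-1},\bx)\geq\tilde K$ it obtains $F_{K_i}(x_i'+\delta\mid\cdot)-F_{K_i}(x_i'-\delta\mid\cdot)\geq\tilde K\delta$, then uses the moduli of continuity $\omega_i$ of the $F_{K_i}$ to build a nested chain $\delta_1\leq\cdots\leq\delta_d\leq\delta$ and a closed hypercube $\underline{W}(\tilde\bx,\bx',\delta)$ of side $\underline S_\delta=0.5\tilde K\,\delta_1$ with $\underline W\subseteq F_K(\bx,B_\delta(\bx'))$ for every $\bx\in B_{v_K(\delta)}(\tilde\bx)$, setting $v_K(\delta)=\delta_1$. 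You instead invert first: the compactness argument gives joint uniform continuity of $F_K^{-1}$, a modulus $\eta(\cdot)$ produces the box around $\bu^\star=F_K(\tilde\bx,\bx')$, and the ``moreover'' falls out of a triangle inequality with the companion modulus $\omega_K$. Your path is shorter and uses less of \ref{H:lem:K2} (essentially only continuity and that $F_K(\tilde\bx,\cdot)$ is a bijection of $[0,1]^d$). What the paper's more laborious construction buys is explicitness: the concrete side length $\underline S_\delta$ and the formula $v_K=\delta_1\circ\cdots\circ\delta_d$ are reused verbatim in Lemma~\ref{lem:dense2} to obtain the quantitative $k_\delta=\lceil t+d-d\log(\delta\tilde C_K/3)/\log b\rceil$ under a Lipschitz hypothesis, and the same $\delta_i$ scaffolding is recycled in the proof of the companion upper bound (Lemma~\ref{lem:upBound}). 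Your abstract $v_K$ would still make those downstream statements true, but their proofs would need to be rewritten rather than read off.
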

\begin{proof}
See Appendix \ref{p-lem:dense} for a proof.
\end{proof}

As a corollary, note the following. Let $P_{b,k}=\{\bu^n\}_{n=0}^{b^k-1}$, $k\geq t$, be a $(t,k,d)$-net in base $b\geq 2$ and define
\begin{align}\label{eq:dense}
d_{\setX}(P_{b,k},\tilde{\bx},K)=\sup_{\bx\in\setX}\min_{n\in 0: b^k-1}\|F_K^{-1}(\tilde{\bx},\bu^n)-\bx\|_{\infty},\quad \tilde{\bx}\in\setX
\end{align}
as the dispersion of the point set $\{F_K^{-1}(\tilde{\bx},\bu^n)\}_{n=0}^{b^k-1}$ in $\setX$. Then, under the conditions of the previous lemma,  $d_{\setX}(P_{b,k},\tilde{\bx},K)\cvz$ as $k\rightarrow\infty$, uniformly on $\tilde{\bx}\in\setX$.

Lemma \ref{lem:dense} provides an iterative random search interpretation of Algorithm \ref{alg:SAQMC}. Indeed, at location $\bx^n=\tilde{\bx}$, this latter go through a low dispersion sequence in the state space $\setX$   until a candidate value  is accepted, and then the process starts again at this new location. The last part of the lemma shows that it is not a good idea to re-start the sequence each time we move to a new location because  the point set $\{F_K^{-1}(\bx^{n},\bu^{n})\}_{n=0}^{b^k-1}$ may have good dispersion properties in $\setX$ even when  $\bx^{n'}\neq\bx^n$ for some $n,n'\in 0:(b^k-1)$.

It is also worth noting that Lemma \ref{lem:dense} gives an upper bound for the jumping time of the deterministic version of Algorithm \ref{alg:SAQMC}; that is, for the number $m_n$ of candidate values  required to move from location $\bx^n=\tilde{\bx}$ to a new location $\bx^{n+m_n}\neq\tilde{\bx}$. Indeed, let $\bx^*\in\setX$ be a global maximizer of $\varphi$, $\delta_n=\|\bx^n-\bx^*\|_{\infty}$ and assume that there exists a $\delta\in(0,\delta_n)$  such that $\varphi(\bx)>\varphi(\bx^n)$ for all $\bx\in B_{\delta}(\bx^*)$. Then, by the properties of $(t,s)$-sequences (see Section \ref{sub:(t-s)}), the point set $\{\bu_{\infty}^m\}_{m=n+1}^{n+2k_{\delta}}$, with $k_\delta$ as in Lemma \ref{lem:dense}, contains at least one $(t,k_{\delta},s)$-net and thus, by this latter, there exists at least one $m\in 1:2k_{\delta}$ such that $\tilde{\by}^{n+m}=F_{K}^{-1}(\tilde{\bx},\bu_{\infty}^{n+m})$ belongs to $B_{\delta}(\bx^*)$. Since $\varphi(\tilde{\by}^m)>\varphi(\tilde{\bx})$, we indeed have $m_n\leq 2k_{\delta}<2k_{\delta_n}$.

It is clear that the speed at which $d_{\setX}(P_{b,k},\tilde{\bx},K)$ goes to zero as $k$ increases depends on the smoothness of the transformation $F^{-1}_{K}(\bx,\cdot)$, $\bx\in\setX$. Thus, it is sensible to choose $K$ such that $d_{\setX}(P_{b,k},\tilde{\bx},K)$ goes to zero at the optimal  non-adaptive convergence rate (see Section \ref{sub:QMCopt}). The next lemma provides sufficient conditions for the kernel $K$ to fulfil this requirement.

\begin{lemma}\label{lem:dense2}
Consider the set-up of Lemma \ref{lem:dense} and, in addition, assume that, viewed as a function of $(\bx,\by)\in\setX^2$, $F_K(\bx,\by)$ is Lipschitz with Lipschitz constant $C_K<\infty$ for the  supremum norm. Let 
$$
\tilde{C}_{K}=0.5\tilde{K}\big(1 \wedge (0.25\tilde{K}/C_K)^{d}\big),\quad \tilde{K}=\min_{i\in 1:d}\{\inf_{(\bx,\by)\in \setX^2}K_i(y_i|y_{1:i-1},\bx)\}.
$$ 
Then, the   result of Lemma \ref{lem:dense} holds for   $
k_{\delta}=\big\lceil t+d-d\log(\delta \tilde{C}_{K}/3)/\log b\big\rceil$ and for $\bar{\delta}_K= (3/\tilde{C}_K)\wedge 0.5$.
\end{lemma}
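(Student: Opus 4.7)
The strategy is to sharpen the argument of Lemma~\ref{lem:dense} by quantitatively tracking how the inverse Rosenblatt transformation $F_K^{-1}(\tilde{\bx},\cdot)$ distorts $b$-ary boxes, making use of the additional joint Lipschitz assumption on $F_K$.

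The first step is to establish an explicit modulus-of-continuity estimate for $F_K^{-1}(\tilde{\bx},\cdot)$. Since this map is defined coordinate-wise via the lower-triangular system $F_{K,i}(\tilde{\bx},y_{1:i-1},y_i)=u_i$, the density lower bound $\tilde{K}$ (from Assumption \ref{H:lem:K2}) combined with the joint Lipschitz property of $F_K$ yields, by adding and subtracting $F_{K,i}(\tilde{\bx},x'_{1:i-1},y_i)$ and using the triangle inequality, the one-step recursion
\[
|y_i - x'_i| \leq \tilde{K}^{-1}\bigl(|u_i - v_i| + C_K \max_{j<i}|y_j - x'_j|\bigr),
\]
where $\by = F_K^{-1}(\tilde{\bx},\bu)$ and $\bx' = F_K^{-1}(\tilde{\bx},\bv)$. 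Unfolding this recursion across all $d$ coordinates gives a closed-form linear bound $\|\by-\bx'\|_\infty \leq \Phi(\|\bu-\bv\|_\infty)$, where the slope of $\Phi$ depends on $\tilde{K}$, $C_K$ and $d$. The constant $\tilde{C}_K$ is exactly engineered so that $\Phi(\delta \tilde{C}_K / 3) \leq \delta$ whenever $\delta \leq \bar{\delta}_K$; the $\wedge 1$ inside $\tilde C_K$ absorbs the near-contractive regime $C_K \leq 0.25 \tilde{K}$, while the $(0.25\tilde{K}/C_K)^d$ branch handles the regime where errors amplify geometrically across coordinates.

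Next, I would use this estimate to identify an appropriate $b$-ary box. For $k_\delta$ as stated, I would consider the elementary interval $E\in\mathcal{E}_d^b$ of volume $b^{t-k_\delta}$ with equal side lengths $b^{-\lceil(k_\delta-t)/d\rceil}$ that contains $\bv := F_K(\tilde{\bx},\bx')$. The choice of $k_\delta$ makes this common side length at most $\delta\tilde{C}_K/3$, and the contraction estimate then gives $F_K^{-1}(\tilde{\bx},E) \subseteq B_\delta(\bx')$. Because $k_\delta \geq t+d \geq t$, the set $\{\bu^n\}_{n=ab^{k_\delta}}^{(a+1)b^{k_\delta}-1}$ is a $(t,k_\delta,d)$-net (by the defining property of a $(t,d)$-sequence), so $E$ contains exactly $b^t$ of these points; their images under $F_K^{-1}(\tilde{\bx},\cdot)$ therefore lie in $B_\delta(\bx')$, as required. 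The cutoff $\bar{\delta}_K = (3/\tilde C_K)\wedge 0.5$ is precisely what is needed to guarantee both that $B_\delta(\bx')\cap\setX$ is large enough to be the target (via $\wedge 0.5$) and that the required side length does not exceed one so that the box $E$ actually exists (via $3/\tilde C_K$).

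For the final assertion with $\by^n = F_K^{-1}(\bx^n,\bu^n)$ and $\bx^n \in B_{v_K(\delta)}(\tilde{\bx})$, I would invoke the joint Lipschitz regularity of $F_K$: perturbing the first argument from $\tilde{\bx}$ to $\bx^n$ shifts $F_K(\cdot,\by)$ by at most $C_K v_K(\delta)$ uniformly, so the same coordinate-wise recursion run with an enlarged perturbation budget still places each image inside $B_\delta(\bx')$, provided $v_K(\delta)$ is of order $\delta\tilde{C}_K/C_K$ (taken strictly increasing and continuous with $v_K(\delta)\to 0$ as $\delta\to 0$). The main obstacle is the bookkeeping in the first step: one must unfold the recursion so that the accumulated factors of $(1+C_K/\tilde K)$ across $d$ coordinates fit cleanly into the stated form of $\tilde{C}_K$, and one must choose the shape of the $b$-ary box so that both regimes ($C_K$ small and $C_K$ large relative to $\tilde K$) are covered by the same value of $k_\delta$.
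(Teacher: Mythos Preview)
Your approach is genuinely different from the paper's, and the difference is worth flagging. The paper does not redo any geometric argument at all: Lemma~\ref{lem:dense2} is obtained as a direct specialisation of the proof of Lemma~\ref{lem:dense}. In that proof the only non-explicit ingredient is the modulus of continuity $\omega_i$ of $F_{K_i}(\cdot|\cdot)$, through the functions $\tilde{\delta}_i(z)=\omega_i^{-1}(0.25\tilde K z)$. Under the joint Lipschitz hypothesis one may simply take $\omega_i(\delta')=C_K\delta'$, whence $\tilde{\delta}_i(\delta)=\delta(0.25\tilde K/C_K)$; composing the maps $\delta_i(\cdot)=\tilde\delta_i(\cdot)\wedge\mathrm{id}$ across $d$ coordinates gives $\delta_1=\delta\bigl((0.25\tilde K/C_K)^d\wedge 1\bigr)$, hence $\underline{S}_\delta=0.5\tilde K\delta_1=\delta\tilde C_K$, and the stated $k_\delta$ and $\bar\delta_K$ are then read off from the formula $k_\delta=\lceil t+d-d\log(\underline{S}_\delta/3)/\log b\rceil$ already derived in Lemma~\ref{lem:dense}. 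This is a two-line proof.

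By contrast, you work in the reverse direction, bounding the distortion of $F_K^{-1}(\tilde\bx,\cdot)$ via a coordinate-wise recursion. That route is conceptually sound, but it does not naturally produce the specific constant $\tilde C_K$ in the statement. Unfolding your recursion gives $\|\by-\bx'\|_\infty\le \tilde K^{-1}\|\bu-\bv\|_\infty\sum_{j=0}^{d-1}(C_K/\tilde K)^j$, and matching this to $0.5\tilde K\bigl(1\wedge(0.25\tilde K/C_K)^d\bigr)$ is not mere bookkeeping: the factor $0.25$ is a fingerprint of the forward bounds in the proof of Lemma~\ref{lem:dense} (where one splits the interval $[F_{K_i}(x'_i-\delta),F_{K_i}(x'_i+\delta)]$ of length $\ge\tilde K\delta$ and sacrifices a quarter at each end to absorb the perturbation of the conditioning variables), and it is not the constant your inverse recursion delivers. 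Your argument would establish a lemma of the same shape with a different (and somewhat worse) constant; to obtain the exact $\tilde C_K$ and $k_\delta$ stated, you should instead recognise that Lemma~\ref{lem:dense2} is a plug-in corollary of the construction already carried out in Lemma~\ref{lem:dense}.
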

\begin{proof}
See Appendix \ref{p-lem:dense2} for a proof.
\end{proof}

Let $\delta_k\in (0,\bar{\delta}_K]$ be the size of the smallest ball around $\bx'\in\setX$ that can be reached by the point set $\{F_K^{-1}(\tilde{\bx},\bu^n)\}_{n=0}^{b^k-1}$, with $k\geq k_{\bar{\delta}_{K}}$, $\tilde{\bx}\in\setX$ and where, as above, $P_{b,k}=\{\bu^n\}_{n=0}^{b^k-1}$ is a $(t,k,d)$-net in base $b\geq 2$. Then, under the assumptions of Lemma \ref{lem:dense2}, $\delta_k=Cb^{-k/d}$ for a constant $C>0$ independent of $\bx'$ and thus $
\sup_{\tilde{\bx}\in\setX}d_{\setX}(P_{b,k},\tilde{\bx},K)=\bigO(b^{-k/d})$
as required.

\subsection{General consistency results for  QMC-SA\label{sub:consisistency}}
 
To obtain a convergence result for Algorithm \ref{alg:SAQMC} we need some regularity assumptions concerning the objective function $\varphi$. To this end, and borrowing an idea of \citet{He2015} (who study QMC integration of functions restricted to a non-rectangular set), we impose a condition on the Minkovski content of the level sets 
$$
\setX_{\varphi(\bx)}:=\{\bx'\in\setX:\varphi(\bx')=\varphi(\bx)\},\quad \bx\in\setX.
$$

\begin{definition}
The measurable set $A\subseteq\setX$  has a $(d-1)$-dimensional Minkovski content if
$M(A):=\lim_{\epsilon \downarrow 0}\epsilon^{-1} \lambda_d\big((A)_{\epsilon}\big)$ exists and is finite, 
where, for $\epsilon>0$,  we use the shorthand $(A)_{\epsilon}=\{\bx\in\setX:\exists \bx'\in A, \|\bx-\bx'\|_{\infty}\leq\epsilon\}$.
\end{definition}

The following lemma is the key result to establish the consistency of QMC-SA.

\begin{lemma}\label{lem:convPhi}
Consider Algorithm \ref{alg:SAQMC} where $\setX=[0,1]^d$ and assume the following conditions are verified:
\begin{enumerate}
\item The Markov kernel $K:\setX\rightarrow\mathcal{P}(\setX)$ is such that Assumptions \ref{H:lem:K1}-\ref{H:lem:K2} hold;
\item $\varphi$ is continuous on $\setX$ and such that $
\sup_{x\in\setX:\,\varphi(x)<\varphi^*}M(\setX_{\varphi(\bx)})<\infty$.
\end{enumerate} 
Let $R\in\mathbb{N}$. Then, if the sequence $(\varphi(\bx^n))_{n\geq 1}$ is almost surely convergent,   with probability one $\varphi(\bx^n)\rightarrow \varphi^*$ as $n\rightarrow\infty$.
\end{lemma}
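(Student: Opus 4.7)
My plan is to argue by contradiction. Suppose there is a measurable event $A$ of positive probability on which $\varphi(\bx^n) \to c$ with $c < \varphi^*$. Since $\varphi$ is continuous on the compact set $\setX = [0,1]^d$, the supremum is attained at some $\bx^{\star}\in\setX$, and continuity gives $\delta,\epsilon > 0$ with $\varphi(\bx) \geq c + 2\epsilon$ on $B_\delta(\bx^{\star})$. On $A$, for $n$ past some (random) index, $\varphi(\bx^{n-1}) < c + \epsilon$; then any candidate $\by^n = F_K^{-1}(\bx^{n-1},\bu_R^n)$ landing in $B_\delta(\bx^{\star})$ is an uphill move of size at least $\epsilon$, so the Metropolis ratio in Algorithm \ref{alg:SAQMC} exceeds $1$ and the move is accepted, yielding $\varphi(\bx^n) \geq c + 2\epsilon$, a contradiction with $\varphi(\bx^n) \to c$. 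The whole task reduces to showing that, almost surely on $A$, at least one such candidate is produced.

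To produce one, I would invoke the last part of Lemma \ref{lem:dense} blockwise. Pick $k_\delta$ and $v_K$ as in that lemma applied to the target ball $B_\delta(\bx^{\star})$, and partition the iterations past $N$ into successive blocks of length $2 b^{k_\delta}$. If inside a block starting at index $M$ the path stays within the $v_K(\delta)$-neighbourhood of $\bx^M$, Lemma \ref{lem:dense} guarantees that at least $b^t \geq 1$ candidates in the block fall in $B_\delta(\bx^{\star})$, which by the previous paragraph closes the argument. It therefore suffices to show that on $A$ at least one such ``quiet'' block must occur.

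If, on the contrary, every block contained an accepted move of displacement exceeding $v_K(\delta)$, there would be infinitely many accepted jumps $\bx^{n-1}\mapsto\by^n$ with $\|\by^n-\bx^{n-1}\|_\infty>v_K(\delta)$. For $n$ large, both $\varphi(\bx^{n-1})$ and $\varphi(\by^n)=\varphi(\bx^n)$ are within $\eta$ of $c$ for arbitrarily small $\eta>0$; by continuity of $\varphi$ and compactness of $\setX$ this forces $\by^n\in(\setX_c)_{\eta'}$ with $\eta'=\eta'(\eta)\downarrow 0$ as $\eta\downarrow 0$, where $\setX_c=\{\varphi=c\}$. The Minkowski-content hypothesis, applied at any $\bx\in\setX_c$ (with $\varphi(\bx)=c<\varphi^{\star}$), yields $\lambda_d\bigl((\setX_c)_{\eta'}\bigr) = \bigO(\eta')$. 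Combining this with the lower bound $K(\by|\bx)\geq\underline{K}$ from Assumption \ref{H:lem:K2}, and exploiting the fact that for $R\in\mathbb{N}$ each $\bu_R^n$ contains an independent uniform perturbation of scale $b^{-R}$ across $n$, the conditional probability (given $\mathcal{F}_{n-1}$) of an accepted large-displacement jump is at most $C\eta'$ for some $C<\infty$. Choosing $M$ large makes $\eta'$ arbitrarily small, and a conditional Borel--Cantelli argument then rules out infinitely many such jumps almost surely on $A$, delivering the required quiet block.

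The step I expect to be genuinely delicate is the last one: ruling out persistent drift of $\bx^n$ along the level set $\setX_c$. This is precisely where both the restriction $R\in\mathbb{N}$ (which supplies the independent randomness needed for the Borel--Cantelli estimate) and the Minkowski-content hypothesis (which converts the horizontal concentration $\varphi(\bx^n)\to c$ into a geometrically thin admissible region for the candidate) enter; the dispersion bounds from Lemma \ref{lem:dense} alone handle only the ``quiet block'' regime and cannot by themselves exclude pathological quasi-horizontal wandering.
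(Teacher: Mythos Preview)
Your architecture---contradiction, a quiet/non-quiet block dichotomy, Minkowski content plus randomness to rule out drift along the limiting level set---mirrors the paper's. But the quiet-block half has a real gap. You invoke Lemma~\ref{lem:dense} to guarantee that, in a block where the path stays within $v_K(\delta)$ of its start, at least $b^t$ candidates land in $B_\delta(\bx^*)$. That lemma is stated for a $(t,d)$-sequence, i.e.\ the deterministic case $R=\infty$. For a $(t,d)_R$-sequence with $R<k_\delta$ the $(t,k_\delta,d)$-net structure is absent---only the first $R$ $b$-ary digits are prescribed---so the block $\{\bu_R^n\}$ need not touch the required elementary interval at all; at $R=0$ the $\bu_R^n$ are IID and the deterministic guarantee vanishes entirely. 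The paper supplies instead a probabilistic replacement (its Lemma~\ref{lem:part_2}): over any block of length $b^{k^R}$ there is probability at least $p(\delta)>0$, uniform in the current state, that some $\bu_R^n$ falls in the target elementary interval. The quiet-block conclusion then becomes: infinitely many quiet blocks plus a uniform positive chance per block force, almost surely, infinitely many candidates in $B_\delta(\bx^*)$. In short, the randomness from finite $R$ is needed on \emph{both} sides of your dichotomy, not only the non-quiet side.

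Two smaller points. To bound $\P\big(\by^n\in(\setX_c)_{\eta'}\mid\mathcal{F}_{n-1}\big)$ you need an \emph{upper} bound on $K(\by\mid\bx)$ (available since the density is continuous on the compact $\setX^2$), not the lower bound $\underline{K}$ you cite. And your Borel--Cantelli sketch is not summable as written: with $\eta'$ fixed past a random time, the per-block bound $C\eta'$ sums to infinity over the infinitely many blocks. The paper (its Lemma~\ref{lem:E_Set}) sidesteps summability by bounding the probability that $m$ consecutive blocks each contain a move remaining inside the thin tube by $\rho^m$ for some $\rho<1$, using the IID structure of the random digits together with the uniform Minkowski bound $\sup_{\varphi(\bx)<\varphi^*}M(\setX_{\varphi(\bx)})<\infty$; the infinite intersection then has probability zero. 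You will need an argument of that shape, with the uniformity in the (random) limit level $c$ coming from the supremum in the hypothesis.
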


\begin{proof}
See Appendix \ref{p-lem:convPhi} for a proof.
\end{proof}

Note that the assumption on the Minkovski content of the level sets notably implies that  $\varphi$ has a finite number of modes.

Thus, if in Algorithm \ref{alg:SAQMC} only upward moves are accepted, then the resulting ascendant algorithm is convergent. To establish the consistency of QMC-SA, we therefore need to ensure that not too many ``large'' downward moves  are accepted. This is done by controlling the rate at which the sequence of temperature $(T_n)_{n\geq 1}$ converges toward zero, as shown in the next  results. We recall that     $k_n$ denotes the smallest integer such that $n<b^{k_n}$.

\begin{lemma}\label{lem:Tn}
Consider Algorithm \ref{alg:SAQMC} with $\setX\subseteq\mathbb{R}^d$ and where  $(v^n)_{n\geq 0}$ is such that  $v^0=0$. Then, at iteration $n\geq 1$,  $\by^n$ is rejected if $\varphi(\by^n)<\varphi(\bx^{n-1})-T_nk_n\log b$.
\end{lemma}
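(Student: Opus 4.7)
The statement is an entirely deterministic consequence of the acceptance rule combined with the equidistribution property of the $(0,1)$-sequence $(v^n)_{n\geq 0}$, together with the anchoring condition $v^0=0$. My plan is to first reformulate the hypothesis on $\varphi(\by^n)$ as an upper bound on the Metropolis acceptance threshold, and then to show that $v^n$ cannot lie below that bound.

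First I would write down what rejection means in Algorithm \ref{alg:SAQMC}: the candidate $\by^n$ is rejected whenever $v^n > \exp\{(\varphi(\by^n)-\varphi(\bx^{n-1}))/T_n\}$. Assuming $\varphi(\by^n)<\varphi(\bx^{n-1})-T_n k_n\log b$ and dividing by $T_n>0$ gives
\[
\exp\bigl\{(\varphi(\by^n)-\varphi(\bx^{n-1}))/T_n\bigr\} < \exp(-k_n\log b) = b^{-k_n}.
\]
So it suffices to establish that $v^n \geq b^{-k_n}$ for every $n\geq 1$.

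The key step, and the only substantive one, is this last inequality. By the definition of $k_n$, we have $1\le n < b^{k_n}$. Since $(v^n)_{n\geq 0}$ is a $(0,1)$-sequence in base $b$, the initial block $\{v^n\}_{n=0}^{b^{k_n}-1}$ is a $(0,k_n,1)$-net in base $b$, meaning that each of the $b^{k_n}$ elementary intervals of the form $[jb^{-k_n},(j+1)b^{-k_n})$, $j\in 0{:}(b^{k_n}-1)$, contains exactly one element of this block. Because $v^0=0$ lies in the interval $[0,b^{-k_n})$, that interval is already occupied; no other index $n\in 1{:}(b^{k_n}-1)$ can place $v^n$ there, and hence $v^n\geq b^{-k_n}$. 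Combining with the display above yields $v^n \geq b^{-k_n} > \exp\{(\varphi(\by^n)-\varphi(\bx^{n-1}))/T_n\}$, so $\by^n$ is rejected.

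I do not anticipate any real obstacles: the argument requires only the definition of $(t,s)$-sequences/nets given in Section \ref{sub:(t-s)} (specialized to $t=0$, $s=1$), the monotonicity of the exponential, and the placement of $v^0$ at the origin. The only subtlety worth flagging is to be explicit about the range $1\le n < b^{k_n}$ so that $v^n$ and $v^0$ do in fact belong to the same $(0,k_n,1)$-net, which is precisely what forces $v^n$ out of the leftmost elementary interval.
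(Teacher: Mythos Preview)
Your proof is correct and follows essentially the same approach as the paper: both use the $(0,k_n,1)$-net property of the initial block $\{v^n\}_{n=0}^{b^{k_n}-1}$ together with $v^0=0$ to conclude $v^n\geq b^{-k_n}$ for $1\leq n<b^{k_n}$, and then compare this with the acceptance threshold. Your write-up is slightly more explicit about the logical order (first translating the hypothesis into $\exp\{\cdot\}<b^{-k_n}$, then bounding $v^n$), but the substance is the same.
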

\begin{proof}
To prove the lemma, note that for any $k\in\mathbb{N}$ the point set $\{v^n\}_{n=0}^{b^k-1}$ is a $(0,k,1)$-net in base $b$ and therefore contains exactly one point in each elementary interval of length $b^{-k}$. Hence, because $v^0=0$,  the point set $\{v^n\}_{n=1}^{b^{k}-1}$ contains no point in the interval $[0, b^{-k})$ and thus, for all $n\geq 1$, $\by^n$ is rejected if 
$$
\exp\big(\big\{\varphi(\by^n)-\varphi(\bx^{n-1})\big\}/T_n\big)<b^{-k_n}.
$$
\end{proof}

\begin{lemma}\label{lem:ln}
Consider Algorithm \ref{alg:SAQMC} with $\setX\subseteq\mathbb{R}^d$ and assume that there exists a sequence $(l_n)_{n\geq 1}$ of positive numbers which verifies $\sum_{n=1}^{\infty}l_n^{-1}<\infty$ and such that $\bx^{n}=\bx^{n-1}$ when $\varphi(\by^n)< \varphi(\bx^{n-1})-l_n^{-1}$. Then, if  $\varphi$ is bounded, the sequence $(\varphi(\bx^n))_{n\geq 0}$ is almost surely convergent.
\end{lemma}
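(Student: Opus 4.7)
The plan is to recognize that the hypothesis forces the sequence $(\varphi(\bx^n))_{n\geq 0}$ to be a small perturbation of a monotone sequence, and to exploit the summability condition on $(l_n^{-1})_{n\geq 1}$ to restore monotonicity by adding a vanishing tail. The argument is in fact deterministic (pathwise), so the ``almost surely'' will follow trivially.

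First I would observe the fundamental one-step inequality
\begin{equation*}
\varphi(\bx^n) \;\geq\; \varphi(\bx^{n-1}) - l_n^{-1}, \qquad n\geq 1.
\end{equation*}
Indeed, if $\bx^n=\bx^{n-1}$ this is immediate. Otherwise Steps 2--7 of Algorithm~\ref{alg:SAQMC} give $\bx^n=\by^n$, and the contrapositive of the stated rejection rule yields $\varphi(\by^n)\geq \varphi(\bx^{n-1})-l_n^{-1}$. This holds on every sample path, independently of the input sequence $(\bu_R^n,v^n)_{n\geq 0}$.

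Second, I would introduce the shifted sequence
\begin{equation*}
S_n \;:=\; \varphi(\bx^n) + \sum_{k=n+1}^{\infty} l_k^{-1}, \qquad n\geq 0,
\end{equation*}
which is well defined because $\sum_{k\geq 1} l_k^{-1}<\infty$. A direct computation using the one-step inequality gives $S_n - S_{n-1} = \varphi(\bx^n)-\varphi(\bx^{n-1}) - l_n^{-1}\geq 0$, so $(S_n)_{n\geq 0}$ is non-decreasing. Moreover $S_n$ is bounded above by $\sup \varphi + \sum_{k\geq 1} l_k^{-1}<\infty$, since $\varphi$ is bounded by assumption. Hence $S_n$ converges by the monotone convergence theorem for real sequences.

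Finally, since $\sum_{k=n+1}^\infty l_k^{-1}\to 0$ as $n\to\infty$ (as the tail of a convergent series), we conclude that $\varphi(\bx^n) = S_n - \sum_{k=n+1}^\infty l_k^{-1}$ converges. This holds pathwise, and in particular almost surely, which is the desired statement. I do not anticipate any real obstacle: the only mild point of care is making explicit that the hypothesis on rejection covers both the ``move'' and ``stay'' cases, which is immediate from the algorithm.
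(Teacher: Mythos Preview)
Your overall strategy is sound and, once corrected, yields a cleaner argument than the paper's, which works directly with $\limsup$ and $\liminf$ and bounds their difference by $\epsilon+\sum_{i\geq s_{N_\epsilon}} l_i^{-1}$. Your idea of absorbing the allowed downward increments into a compensating sum to recover monotonicity is the natural one and makes the deterministic (pathwise) nature of the statement transparent.

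However, there is a sign slip in the definition of $S_n$ that invalidates the monotonicity claim as written. With $S_n=\varphi(\bx^n)+\sum_{k\geq n+1} l_k^{-1}$ you correctly compute $S_n-S_{n-1}=\varphi(\bx^n)-\varphi(\bx^{n-1})-l_n^{-1}$, but the one-step inequality only gives $\varphi(\bx^n)-\varphi(\bx^{n-1})\geq -\,l_n^{-1}$, hence $S_n-S_{n-1}\geq -2l_n^{-1}$, not $\geq 0$. The fix is immediate: take instead
\[
S_n \;:=\; \varphi(\bx^n)+\sum_{k=1}^{n} l_k^{-1},
\]
so that $S_n-S_{n-1}=\varphi(\bx^n)-\varphi(\bx^{n-1})+l_n^{-1}\geq 0$, $(S_n)$ is non-decreasing and bounded above by $\sup\varphi+\sum_{k\geq 1} l_k^{-1}$, and $\varphi(\bx^n)=S_n-\sum_{k=1}^n l_k^{-1}$ converges since both terms on the right do. Equivalently, you could keep the tail but subtract it: $S_n=\varphi(\bx^n)-\sum_{k\geq n+1} l_k^{-1}$. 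Either choice repairs the argument; nothing else needs to change.
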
 

\begin{proof}
Let $\bar{\varphi}=\limsup_{n\rightarrow\infty}\varphi(\bx^n)$ and $\underline{\varphi}=\liminf_{n\rightarrow\infty}\varphi(\bx^n)$. Then, because  $\varphi$ is bounded, both $\bar{\varphi}$ and $\underline{\varphi}$ are finite. We now show that, under the conditions of the lemma, we  have $\underline{\varphi}=\bar{\varphi}$ with probability one. In what follows, although not explicitly mentioned to save space, all the computations should be understood as  holding with probability one.

Let  $(\bx^{s_n})_{s_n\geq 0}$ and  $(\bx^{u_n})_{u_n\geq 0}$ be two subsequences such that $\varphi(\bx^{s_n})\rightarrow\bar{\varphi}$ and $\varphi(\bx^{u_n})\rightarrow\underline{\varphi}$. Let $\epsilon>0$ and $N_{\epsilon}\geq 1$ be such that $|\varphi(\bx^{s_n})-\bar{\varphi}|\leq 0.5\epsilon$ and $|\varphi(\bx^{u_n})-\underline{\varphi}|\leq 0.5\epsilon$ for all $n\geq N_{\epsilon}$. Then,  for
all $n >s_{N_{\epsilon}}$, we have
$$
\varphi(\bx^n)\geq  \varphi(\bx^{s_{N_{\epsilon}}})-\sum_{i=s_{N_{\epsilon}}}^{\infty}l_i^{-1}\geq \bar{\varphi}-0.5\epsilon-\sum_{i=s_{N_{\epsilon}}}^{\infty}l_i^{-1}
$$
and, in particular, $\bar{\varphi}-0.5\epsilon-\sum_{n=s_{N_{\epsilon}}}^{\infty}l_n^{-1}\leq  \varphi(\bx^{u_n})\leq  \underline{\varphi}+0.5\epsilon$,  $\forall u_n> \max(s_{N_{\epsilon}}, u_{N_{\epsilon}})$, implying that $\bar{\varphi}-\underline{\varphi}\leq\epsilon+ \sum_{i=s_{N_{\epsilon}}}^{\infty}l_i^{-1}$. In addition,  the series $\sum_{n=1}^{\infty}l_n^{-1}$ is convergent and thus $\sum_{i=s_{N_{\epsilon}}}^{\infty}l_i^{-1}\cvz$ as $s_{N_{\epsilon}}$ increases. Therefore, there exists a $\epsilon>0$ and a $N_{\epsilon}>0$ for which $\bar{\varphi}-\underline{\varphi}\leq \epsilon+ \sum_{i=s_{N_{\epsilon}}+1}^{\infty}l_i^{-1}\leq 0.5(\bar{\varphi}-\underline{\varphi})$, showing that we  indeed have $\underline{\varphi}=\bar{\varphi}$.
\end{proof}

Using Lemmas \ref{lem:convPhi}-\ref{lem:ln} we deduce the following general consistency result for Algorithm \ref{alg:SAQMC}:

\begin{theorem}\label{thm:conv}
Consider Algorithm \ref{alg:SAQMC} where $\setX=[0,1]^d$. Assume that the assumptions of Lemma \ref{lem:convPhi} hold and that, in addition, $(v^n)_{n\geq 0}$ is such that  $v^0=0$. Then, if $(T_n)_{n\geq 1}$ is  such that $\sum_{n=1}^{\infty}T_{n}\log(n)<\infty$, we have, for any $R\in\mathbb{N}$ and as $n\rightarrow\infty$, $\varphi(\bx^n)\rightarrow\varphi^*$ almost surely.
\end{theorem}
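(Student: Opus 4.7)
The plan is to chain together the three preparatory lemmas, with Lemma~\ref{lem:Tn} supplying the rejection threshold, Lemma~\ref{lem:ln} upgrading that threshold into almost sure convergence of $(\varphi(\bx^n))_{n\geq 1}$, and Lemma~\ref{lem:convPhi} identifying the limit as $\varphi^*$. Since $\setX=[0,1]^d$ is compact and $\varphi$ is continuous, $\varphi$ is automatically bounded, so the boundedness hypothesis of Lemma~\ref{lem:ln} is free.

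First I would invoke Lemma~\ref{lem:Tn}: because $v^0=0$, at every iteration $n\geq 1$ the proposal $\by^n$ is rejected whenever $\varphi(\by^n)<\varphi(\bx^{n-1})-T_n k_n\log b$, where $k_n$ is the smallest integer with $n<b^{k_n}$. Define the candidate sequence
\[
l_n^{-1} \;=\; T_n\, k_n\log b, \qquad n\geq 1,
\]
so that the rejection condition reads exactly $\bx^n=\bx^{n-1}$ whenever $\varphi(\by^n)<\varphi(\bx^{n-1})-l_n^{-1}$, which is the hypothesis of Lemma~\ref{lem:ln}.

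Next I would verify summability of $(l_n^{-1})$. By the definition of $k_n$ we have $b^{k_n-1}\leq n$, hence $k_n\log b\leq \log n+\log b$ for $n\geq 1$ (with the convention $\log 1=0$ handled by a constant shift). Consequently there is a finite constant $C$ with
\[
\sum_{n=1}^{\infty} l_n^{-1} \;=\; \log b\sum_{n=1}^{\infty} T_n k_n \;\leq\; C\sum_{n=1}^{\infty} T_n\log n \;+\; C\sum_{n=1}^\infty T_n,
\]
and both sums on the right are finite by the hypothesis $\sum_{n=1}^\infty T_n\log n<\infty$ (the second series is dominated by the first up to the early terms). Applying Lemma~\ref{lem:ln} then yields that $(\varphi(\bx^n))_{n\geq 1}$ converges almost surely.

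Finally, the assumptions of Lemma~\ref{lem:convPhi} are in force by hypothesis, so the almost sure convergence of $(\varphi(\bx^n))$ upgrades to $\varphi(\bx^n)\to\varphi^*$ almost surely, which is the claim. There is essentially no serious obstacle here beyond the elementary bound $k_n\log b = O(\log n)$; the theorem is genuinely a bookkeeping combination of Lemmas~\ref{lem:convPhi}, \ref{lem:Tn} and \ref{lem:ln}, where all the substantive work (the dispersion analysis behind Lemma~\ref{lem:convPhi} and the QMC rejection threshold of Lemma~\ref{lem:Tn}) has already been done.
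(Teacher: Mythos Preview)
Your argument is correct and mirrors the paper's proof exactly: set $l_n^{-1}=T_n k_n\log b$, use Lemma~\ref{lem:Tn} to feed Lemma~\ref{lem:ln}, then conclude via Lemma~\ref{lem:convPhi}, with boundedness of $\varphi$ coming from continuity on the compact $\setX$. The only difference is that you spell out the elementary estimate $k_n\log b\leq \log n+\log b$ to justify $\sum l_n^{-1}<\infty$, which the paper leaves implicit.
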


\begin{proof}
Let $l_n=(T_{n}k_n\log b)^{-1}$ and note that, under the assumptions of the theorem, $\sum_{n=1}^{\infty}l_n^{-1}<\infty$. Therefore,  by Lemma \ref{lem:Tn},  the sequence $(l_n)_{n\geq 0}$ verifies the assumptions of Lemma \ref{lem:ln}. Hence, because the continuous function $\varphi$ is bounded on the compact space $\setX$, the sequence $(\varphi(\bx^n))_{n\geq 0}$  is almost surely convergent by Lemma \ref{lem:ln} and thus converges almost surely toward the global maximum $\varphi^*$ by Lemma \ref{lem:convPhi}.
\end{proof}

As already mentioned, this result does not apply for $R=\infty$ because at this degree of generality we cannot  rule out the possibility of some odd behaviours when  completely deterministic sequences are used as input of QMC-SA. However, when $d=1$, things become easier and we can show that the result of Theorem \ref{thm:conv} holds for the sequence $(\bu_{\infty}^n)_{n\geq 0}$

\begin{theorem}\label{thm:conv_Univ}
Consider Algorithm \ref{alg:SAQMC} where $\setX=[0,1]$ and $R=\infty$. Assume that $\varphi$, $K$ and $(T_n)_{n\geq 1}$ verify the conditions of Theorem \ref{thm:conv}. Then, $\varphi(x^n)\rightarrow\varphi^*$ as $n\rightarrow\infty$.
\end{theorem}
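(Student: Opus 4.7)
My plan is to preserve the skeleton of the proof of Theorem \ref{thm:conv} and replace only the one step that genuinely used the hypothesis $R \in \mathbb{N}$, namely the appeal to Lemma \ref{lem:convPhi}. Because Lemmas \ref{lem:Tn} and \ref{lem:ln} are purely deterministic, the cooling hypothesis $\sum_n T_n \log n < \infty$ already produces a finite limit $\bar\varphi = \lim_n \varphi(x^n)$ without any use of $R$; the entire content of Theorem \ref{thm:conv_Univ} is therefore to identify $\bar\varphi$ with $\varphi^*$ in the fully deterministic, univariate setting.

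I would proceed by contradiction. Assuming $\bar\varphi < \varphi^*$, pick $x^* \in \setX$, $\delta > 0$ and $\epsilon > 0$ with $\inf_{B_\delta(x^*)} \varphi \geq \bar\varphi + 2\epsilon$, and take $N$ large enough that $|\varphi(x^n)-\bar\varphi| < \epsilon/2$ and $T_n k_n \log b < \epsilon/2$ for every $n \geq N$. A single proposal $y^n \in B_\delta(x^*)$ at some $n \geq N$ then satisfies $\varphi(y^n) - \varphi(x^{n-1}) > \epsilon$, which pushes the Metropolis ratio above one and yields unconditional acceptance (since $v^n < 1$); then $\varphi(x^n) > \bar\varphi + 3\epsilon/2$, contradicting the convergence to $\bar\varphi$. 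The problem thus reduces to producing one such index $n \geq N$. The Minkowski-content hypothesis forces the level set $L := \setX_{\bar\varphi}$ to be a finite set $\{p_1,\dots,p_M\}$, and continuity of $\varphi$ together with $\varphi(x^n) \to \bar\varphi$ eventually confines $x^n$ to $\bigsqcup_j B_\eta(p_j)$ for any preassigned $\eta > 0$; I fix $\eta \leq v_K(\delta)$ with $v_K$ as in Lemma \ref{lem:dense}. By pigeonhole some index $j_0$ is visited at arbitrarily large times, and if the consecutive-run lengths within $B_\eta(p_{j_0})$ are unbounded, a run of length at least $b^{k_\delta}$ exists; the moving-base-point part of Lemma \ref{lem:dense} applied with $\tilde x = p_{j_0}$ and $x' = x^*$ then delivers an index $n$ inside that run with $y^n \in B_\delta(x^*)$, closing the argument.

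The main obstacle---and the reason the one-dimensional hypothesis is essential---is ruling out the complementary scenario, where all runs are uniformly bounded and $x^n$ keeps switching between neighborhoods of different $p_j$. This is precisely the deterministic pathology that the randomness in $R \in \mathbb{N}$ was introduced to preclude for the multivariate Theorem \ref{thm:conv}. To dispose of it in one dimension I would exploit the fact that the $p_j$ lie on $[0,1]$ in a fixed linear order, so only finitely many ``switch types'' $(j,j')$ are possible: a second pigeonhole on these switch types, combined with the $(t,m,1)$-net property of $(u_\infty^n)$ on sufficiently long blocks and the universal lower bound $2\delta\underline{K}$ on the length of the $u$-interval $[F_K(p_{j^*},x^*-\delta),F_K(p_{j^*},x^*+\delta)]$ for any fixed base point $p_{j^*}$, should force at least one block index at which simultaneously $x^{n-1}$ is close to $p_{j^*}$ and $u_\infty^n$ lies in the relevant $u$-interval, thereby producing $y^n \in B_\delta(x^*)$. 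Quantifying this interaction between a deterministic $(t,1)$-sequence and a deterministic subsampling induced by the dynamics of $(x^n)$ is the delicate step, and it is precisely where the one-dimensional ordering of $\setX$ becomes indispensable.
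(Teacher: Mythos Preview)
Your skeleton is exactly the paper's: Lemmas~\ref{lem:Tn} and~\ref{lem:ln} give convergence $\varphi(x^n)\to\bar\varphi$, and the content is to rule out $\bar\varphi<\varphi^*$ by producing, at an arbitrarily late time, a proposal $y^n\in B_\delta(x^*)$ that is then necessarily accepted. Your Case~1 (an unbounded constant run, followed by Lemma~\ref{lem:dense}) is also the paper's mechanism for closing the argument.

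The gap is your treatment of Case~2. Your proposed route---pigeonhole on switch types plus equidistribution of $(u_\infty^n)$ along the \emph{subsampled} index set $\{n:x^{n-1}\in B_\eta(p_{j^*})\}$---is not the paper's, and faces the obstacle you yourself flag: a deterministic subsampling of a $(t,1)$-sequence can be arbitrarily biased (e.g.\ the odd-indexed terms of the base-$2$ van der Corput sequence all lie in $[1/2,1)$), and nothing in your outline controls the interaction between the dynamics-induced subsample and the target $u$-interval. As stated, Case~2 is not closed.

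The paper does not handle Case~2; it shows it cannot occur, thereby reducing everything to your Case~1. The missing ingredient is Lemma~\ref{lem:upBound}: any move from $x^{n-1}\in(\setX_{\bar\varphi})_\epsilon$ to a distinct $x^n\in(\setX_{\bar\varphi})_\epsilon$ forces $u_\infty^n$ into a fixed set $\bar W(\epsilon)\subset[0,1)$ consisting of at most $|J^{\bar\varphi}_{\epsilon,\epsilon/2}|^2$ intervals of length $\bar S_{\epsilon/2}=O(\epsilon)$. Equation~\eqref{eq:delta} with $d=1$ gives $|J^{\bar\varphi}_{\epsilon,\epsilon/2}|\le C^*$ \emph{independent of $\epsilon$}---this is precisely where $d=1$ enters, and it is the quantitative form of your observation that $\setX_{\bar\varphi}$ is finite. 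Hence on a block of $b^{k^{\delta_\epsilon}}$ consecutive indices (a $(t,k^{\delta_\epsilon},1)$-net, with $b^{k^{\delta_\epsilon}}$ of order $1/\epsilon$), the number of moves is bounded by a constant independent of $\epsilon$, and pigeonhole yields a \emph{no-move} run of length $\eta_\epsilon\to\infty$ as $\epsilon\to 0$. Taking $\epsilon$ small enough that this run contains a full $(t,k_{\delta_{\bar\varphi}},1)$-net puts you back in Case~1. The counting is done on the full block, not on a subsample, which is what makes it go through.
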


\begin{proof}
See Appendix \ref{p-thm:conv_Univ} for a proof.
\end{proof}

Remark that, when $d=1$, the condition on the Minkovski content of the level sets of $\varphi$ given in Lemma \ref{lem:convPhi} amount to assuming that, for all $x\in\setX$ such that $\varphi(x)< \varphi^*$, there exists a $\delta_{x}>0$ such that $\varphi(y)\neq\varphi(x)$ for all $y\in B_{\delta_{x}}(x)$, $y\neq x$.

A key feature of Theorems \ref{thm:conv} and  \ref{thm:conv_Univ} is that the condition on the cooling schedules $(T_n)_{n\geq 1}$ depends neither on $\varphi$ nor on the choice of the Markov kernel $K$, while those necessary to ensure the convergence of standard Monte Carlo SA algorithms usually do. This is for instance the case for  the almost sure convergence result of \citet[][Theorem 3]{Belisle1992}; see the next subsection.  An open question for future research is to establish if this property of QMC-SA also holds on non-compact spaces, where convergence of SA is  ensured for a sequence $(T_n)_{n\geq 1}$ where $T_n=T_0/\log (n+C)$ and where both $T_0>0$ and $C>0$ are model dependent \citep[see][Theorem 1]{Andrieu2001}.  The simulation results presented below seem to support this view (see Section \ref{sub:spatial}).

Finally, note that the cooling schedule may be adaptive, i.e. $T_{n+1}$ may depend on $\bx^{0:n}$, and that the convergence rate for $(T_n)_{n\geq 1}$ implied by Theorem \ref{thm:conv_Univ} is coherent with \citet[][Theorem 2]{Belisle1992} which shows that almost sure convergence cannot hold if 
$\sum_{n=1}^{\infty}\exp(-T_n^{-1})=\infty$.

\subsection{The special case $R=0$ and a new convergence result for SA\label{sub:R0}}

It is worth noting that Theorem \ref{thm:conv} applies for $R=0$; that is, when IID uniform random numbers are used to generate the candidate values at Step 2 of Algorithm \ref{alg:SAQMC}. Remark that, in this case, the use of the inverse Rosenblatt transformation to sample from the Markov kernel is  not needed. In addition, it is easy to see from the proof of this result that the assumption on $\varphi$ can be weakened considerably. In particular, the continuity of $\varphi$ in the neighbourhood of one of its global maximizer is enough to ensure that, almost surely, $\varphi(\bx^n)\rightarrow\varphi^*$ (see the next result).

Since  Theorem \ref{thm:conv} also applies when  $R=0$, the key to remove the dependence of the cooling schedule to the problem at hand  therefore comes from the sequence $(v^n)_{n\geq 0}$, used in the Metropolis step, which  discards candidate values $\by^n$ which are such that $\varphi(\by^n)-\varphi(\bx^n)$ is too small (see  Lemma \ref{lem:Tn} above). This observation allows us to propose a new consistency result for Monte Carlo SA algorithms on compact spaces; that is, when Step 2 of Algorithm \ref{alg:SAQMC} is replaced by:

\begin{quote}
\centering
  2': Generate $\by^n\sim K(\bx^{n-1},\dd\by)$
\end{quote}
and when  $(v^n)_{n\geq 0}$ is replaced by a sequence of IID uniform random numbers in $\ui$.

\begin{theorem}\label{thm:convSA}
Consider Algorithm \ref{alg:SAQMC} with $R=0$, $\setX\subset\mathbb{R}^d$  a bounded measurable set, Step 2  replaced by Step 2' and   the sequence $(v^n)_{n\geq 0}$   replaced by $(\tilde{v}^n)_{n\geq 0}$, a sequence of IID uniform random numbers in $\ui$. Assume that the Markov kernel $K:\setX\rightarrow\mathcal{P}(\setX)$ verifies Assumption \ref{H:lem:K2} and that there exists a $\bx^*\in\setX$ such that $\varphi(\bx^*)=\varphi^*$ and such that $\varphi$ is continuous on  $B_{\delta}(\bx^*)$ for a $\delta>0$. Then, if $(T_n)_{n\geq 1}$ satisfies $\sum_{n=1}^{\infty}T_{n}\log(n)<\infty$, we have,  as $n\rightarrow\infty$,  $\varphi(\bx^n)\rightarrow\varphi^*$ almost surely.
\end{theorem}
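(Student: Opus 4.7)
The overall strategy mirrors that of Theorems \ref{thm:conv} and \ref{thm:conv_Univ}: first prove that $(\varphi(\bx^n))_{n\geq 1}$ converges almost surely, then show the limit equals $\varphi^*$. The key change is that $(v^n)_{n\geq 0}$ is now replaced by IID uniforms $(\tilde{v}^n)_{n\geq 0}$, so the \emph{deterministic} rejection bound of Lemma \ref{lem:Tn} must be replaced by a \emph{probabilistic} one obtained through a Borel--Cantelli argument.

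\emph{Step 1 (convergence of $(\varphi(\bx^n))$).} Set $\epsilon_n := 2T_n\log n$ and let $B_n$ be the event that $\by^n$ is accepted at iteration $n$ with $\varphi(\by^n)-\varphi(\bx^{n-1}) < -\epsilon_n$. Under the Metropolis rule with $\tilde{v}^n\sim\Unif(\ui)$,
$$
\P\big(B_n\,\big|\,\by^n,\bx^{n-1}\big)\leq \exp(-\epsilon_n/T_n) = n^{-2},
$$
so $\sum_n\P(B_n)<\infty$ and the first Borel--Cantelli lemma yields that, almost surely, $B_n$ occurs for only finitely many $n$. Since $\sum_n\epsilon_n = 2\sum_n T_n\log n<\infty$ by hypothesis, and since $\varphi(\bx^n)\leq \varphi^*<\infty$, the telescoping argument of Lemma \ref{lem:ln}, applied from the a.s.\ finite index after which no $B_n$ occurs, gives that $(\varphi(\bx^n))$ converges almost surely to a random limit $\varphi_\infty\in(-\infty,\varphi^*]$.

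\emph{Step 2 ($\varphi_\infty=\varphi^*$ a.s.).} If this fails, pick deterministic $\eta>0$ with $\P(E)>0$ for $E:=\{\varphi_\infty<\varphi^*-\eta\}$. By continuity of $\varphi$ on $B_\delta(\bx^*)$, choose $\delta'\in(0,\delta]$ so that $\varphi(\bx)>\varphi^*-\eta/4$ for all $\bx\in B_{\delta'}(\bx^*)\cap\setX$. Assumption \ref{H:lem:K2} then gives
$$
\P\big(\by^n\in B_{\delta'}(\bx^*)\cap\setX\,\big|\,\F_{n-1}\big)\geq \underline{K}\,\lambda_d\big(B_{\delta'}(\bx^*)\cap\setX\big)=:p,
$$
deterministically. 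Provided $p>0$, the conditional Borel--Cantelli lemma (Lévy's extension) yields $\by^n\in B_{\delta'}(\bx^*)$ infinitely often almost surely. On $E$, for $n$ large enough one has $\varphi(\bx^{n-1}) < \varphi_\infty + \eta/4 < \varphi^* - \eta/2$, so each such $\by^n$ is a strict uphill move and is accepted with probability one, forcing $\varphi(\bx^n)>\varphi^*-\eta/4$ infinitely often. This contradicts $\varphi(\bx^n)\to\varphi_\infty<\varphi^*-\eta$ on $E$.

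The only delicate point is the positivity $p>0$ in Step 2: one needs $B_{\delta'}(\bx^*)\cap\setX$ to have positive Lebesgue measure for all sufficiently small $\delta'>0$. This is automatic whenever $\bx^*$ is a Lebesgue density point of $\setX$, a mild regularity tacitly required for $\bx^*$ to be approachable by the proposal kernel at all, and in particular immediate for hyperrectangular $\setX$. The passage from the random event $\{\varphi_\infty<\varphi^*\}$ to a deterministic threshold $\eta>0$ by countable additivity, and the invocation of Lemma \ref{lem:ln} from a random starting index, are routine.
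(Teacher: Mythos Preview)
Your proof is correct and follows essentially the same strategy as the paper's. The paper applies Borel--Cantelli directly to the events $\{\tilde{v}^n < n^{-(1+\alpha)}\}$ (with any $\alpha>0$) to obtain an eventual deterministic lower bound on $\tilde{v}^n$, whereas you apply it to the acceptance events $B_n$ themselves; both routes yield the same conclusion that large downward moves eventually cease, after which Lemma~\ref{lem:ln} applies. Your Step~2 is also the paper's argument, though you spell out the use of the conditional Borel--Cantelli lemma and flag the Lebesgue density-point issue for $\bx^*$, which the paper leaves implicit under ``it is easy to see''.
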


\begin{proof}
Let $\alpha >0$ so that  $\sum_{n=1}^{\infty}\Prob(\tilde{v}^n<n^{-(1+\alpha)})=\sum_{n=1}^{\infty}n^{-(1+\alpha)}<\infty$. Thus, by Borel-Cantelli Lemma, with probability one  $\tilde{v}^n\geq n^{-(1+\alpha)}$ for all $n$ large enough. Therefore, for $n$ large enough and using similar arguments as in the proof of Lemma \ref{lem:Tn}, $\by^n$ is rejected with probability one  if $\varphi(\by^n)<\varphi(\bx^{n-1})-T_n(1+\alpha)(\log n)$ and hence, by Lemma \ref{lem:ln}, with probability one there exists a $\bar{\varphi}\in\mathbb{R}$ such that $\varphi(\bx^n)\rightarrow\bar{\varphi}$. To show that we almost surely have $\bar{\varphi}=\varphi^*$, let $\bx^*$ be as in the statement of the theorem. Then, using the fact that the Markov kernel verifies Assumption \ref{H:lem:K2}, it is easy to see that, with probability one, for any $\epsilon\in\mathbb{Q}_+$  the set $B_{\epsilon}(\bx^*)$ is visited infinitely many time by the sequence $(\by^n)_{n\geq 1}$. Then, the result follows from the continuity of $\varphi$ around $\bx^*$.
\end{proof}

The conditions on $\setX$ and on $\varphi$ are the same as in the almost sure convergence result of \citet[][Theorem 3]{Belisle1992}  but the condition on the Markov kernel is weaker. Finally, this latter result requires that $T_n\leq 1/(n\,e_n)$ for a model-dependent and strictly increasing sequence $(e_n)_{n\geq 1}$, while Theorem \ref{thm:convSA} establishes the almost sure convergence of SA for a universal sequence of temperature.

\section{A general class of QMC-SA type algorithms}\label{sec:extension}

We saw in Section \ref{sub:consisistency} that, if only upward moves are accepted in Algorithm \ref{alg:SAQMC}, then the resulting ascendant algorithm is  convergent. Thus, if one wishes to incorporate  downward moves in the course of the algorithm, we need to control their size and their frequency. In SA algorithms, downward moves are introduced  through the Metropolis step; suitable assumptions on the sequence of temperatures $(T_n)_{n\geq 1}$  guarantee the convergence of the algorithm. 

Interestingly, our convergence results extend to Algorithm \ref{alg:TAQMC} where a more general device is used to introduce downward moves.

\begin{corollary}\label{cor:TA}
Consider Algorithm \ref{alg:TAQMC} where $\setX=[0,1]^d$ and assume that $K$ and $\varphi$ verify the  assumptions of Theorem \ref{thm:conv}. Then, if the sequence $(l_n)_{n\geq 1}$ is such that $\sum_{n=1}^{\infty}l_n^{-1}<\infty$, we have,  for any $R\in\mathbb{N}$ and as $n\rightarrow\infty$, $\varphi(\bx^n)\rightarrow\varphi^*$ almost surely. In addition, if $d=1$, $\varphi(x^n)\rightarrow\varphi^*$ when $R=\infty$.
\end{corollary}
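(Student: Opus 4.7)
The plan is to observe that the proof of Theorem~\ref{thm:conv} splits into two essentially independent pieces: (i) showing that the sequence $(\varphi(\bx^n))_{n\geq 0}$ is almost surely convergent via Lemma~\ref{lem:ln}, and (ii) identifying the limit with $\varphi^*$ via Lemma~\ref{lem:convPhi}. In Theorem~\ref{thm:conv}, piece (i) is obtained indirectly: Lemma~\ref{lem:Tn} is invoked to translate the Metropolis acceptance rule combined with the properties of the $(0,1)$-sequence $(v^n)_{n\geq 0}$ into the statement ``$\by^n$ is rejected whenever $\varphi(\by^n)<\varphi(\bx^{n-1})-T_n k_n\log b$'', and one then sets $l_n=(T_n k_n\log b)^{-1}$. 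In the setting of Algorithm~\ref{alg:TAQMC}, the acceptance rule is already formulated directly in terms of a threshold $l_n^{-1}$, so Lemma~\ref{lem:Tn} can be bypassed entirely.

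Concretely, I would first note that by hypothesis $\bx^n=\bx^{n-1}$ whenever $\varphi(\by^n)<\varphi(\bx^{n-1})-l_n^{-1}$, and that $(l_n)_{n\geq 1}$ satisfies $\sum_{n=1}^\infty l_n^{-1}<\infty$. Since $\varphi$ is continuous on the compact set $\setX=[0,1]^d$, it is bounded, and Lemma~\ref{lem:ln} applies verbatim to yield that $(\varphi(\bx^n))_{n\geq 0}$ converges almost surely to some finite limit $\bar{\varphi}$. Then, for $R\in\mathbb{N}$, the hypotheses on $K$ and $\varphi$ are exactly those of Lemma~\ref{lem:convPhi}, whose conclusion turns the almost sure convergence of $(\varphi(\bx^n))$ into the almost sure identity $\bar{\varphi}=\varphi^*$. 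This yields the first part of the corollary.

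For the second part ($d=1$, $R=\infty$), I would simply re-run the argument of Theorem~\ref{thm:conv_Univ}, observing that its proof (deferred to the appendix) relies on the same two ingredients: almost sure convergence of $(\varphi(x^n))$ and a deterministic version of Lemma~\ref{lem:convPhi} on $[0,1]$. The first ingredient is now delivered by the hypothesis on $(l_n)_{n\geq 1}$ directly, rather than through Lemma~\ref{lem:Tn}, and the second ingredient is unchanged since it concerns the equidistribution of the candidate values $\by^n$ produced through $F_K^{-1}(\bx^{n-1},\bu_\infty^n)$ and not the specific form of the acceptance mechanism.

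The only subtlety worth flagging is making sure that, in the $R=\infty$, $d=1$ case, the deterministic argument behind Theorem~\ref{thm:conv_Univ} truly does not use any property of the Metropolis rule beyond the ``large downward moves are rejected'' consequence of Lemma~\ref{lem:Tn}; if it secretly relies on the precise exponential form of the acceptance probability (for example in controlling visits to the neighbourhood of a maximizer), one would have to re-examine that step. I expect, however, that the argument only exploits the rejection threshold, so the extension is essentially a bookkeeping exercise rather than a new proof.
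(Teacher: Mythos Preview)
Your proposal is correct and mirrors the paper's own proof, which simply states that the $R\in\mathbb{N}$ case follows from Lemmas~\ref{lem:convPhi} and~\ref{lem:ln}, and the $d=1$, $R=\infty$ case from Lemma~\ref{lem:ln} together with the proof of Theorem~\ref{thm:conv_Univ}. Your caveat about the deterministic argument not relying on the exponential acceptance form is well placed and indeed justified: inspection of Appendix~\ref{p-thm:conv_Univ} shows that the only acceptance-rule facts used are that upward moves are always accepted and that large downward moves are rejected, both of which hold in Algorithm~\ref{alg:TAQMC}.
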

\begin{proof}
The result for $d>1$ is a direct consequence of Lemmas \ref{lem:convPhi} and \ref{lem:ln}, while the result for $d=1$ can be deduced from Lemma \ref{lem:ln} and from the proof of Theorem \ref{thm:conv_Univ}.
\end{proof}

\begin{algorithm}
\begin{algorithmic}[1]
\caption{Generic QMC-SA type algorithm to compute $\sup_{\bx\in\setX}\varphi(\bx)$\label{alg:TAQMC}}
\Require A starting point $\bx^0\in\setX$, a Markov kernel $K$ acting from $\setX$ to itself, $l_1>0$ and $(\bu_R^n)_{n\geq 0}$, a $(t,d)_R$-sequence in base $b\geq 2$.

\For{$n=1\to N$}

\State Compute  $\by^n=F_{K}^{-1}(\bx^{n-1},\bu_R^n)$
\If{$\varphi(\by^n)\geq \varphi(\bx^{n-1})$}
\State $\bx^n=\by^n$
\ElsIf{$\varphi(\by^n)< \varphi(\bx^{n-1})-l_n^{-1}$}
\State $\bx^n=\bx^{n-1}$
\Else
\State Set $\bx^n=\bx^{n-1}$ or $\bx^n=\by^{n}$ according to some rule
\EndIf
\State Select $l_{n+1}\in[l_n,\infty)$
\EndFor
\end{algorithmic}
\end{algorithm}

The fact that, under the assumptions on $(v^n)_{n\geq 0}$ and on $(T_n)_{n\geq 1}$ of Theorems \ref{thm:conv} and \ref{thm:conv_Univ}, Algorithm \ref{alg:SAQMC} indeed reduces to a particular case of Algorithm \ref{alg:TAQMC} which verifies the assumptions of Corollary \ref{cor:TA} is a direct consequence of Lemma \ref{lem:Tn}.

The modification of the exponential acceptance probability function to accelerate the convergence of the algorithm is a classical problem in the SA literature, see e.g. \cite{Rubenthaler2009} and references therein. Regarding this point, an important aspect of  the connection between  QMC-SA and Algorithm \ref{alg:TAQMC} is that, in the Metropolis step of Algorithm \ref{alg:SAQMC}, we can replace the exponential function by any strictly increasing function $f:\mathbb{R}\rightarrow[0,1]$ which satisfies $f(0)=1$ and $\lim_{t\rightarrow -\infty}f(t)=0$, and the convergence results of Section \ref{sec:convergence} remain valid under the simple condition that $(T_n)_{n\geq 1}$ verifies
$\sum_{n=1}^{\infty}\big(T_nf^{-1}(b^{-k_n})\big)<\infty$.

Another interesting case of Algorithm \ref{alg:TAQMC} is  the (QMC version of) Threshold Accepting (TA) algorithms introduced by  \citet{dueck1990,Moscato1990}, where in Step 8 we always set $\bx^n=\by^n$. Like SA, TA  has been introduced for optimization problems on finite state spaces but has been successfully applied for continuous problems in various fields, for instance see \citet{Winker2007} and references therein for applications of TA in statistics and in economics.

If convergence properties of SA are now well established, to the best of our knowledge the only convergence result for TA is the one of \citet{althofer1991}, obtained for optimization problems on finite spaces. However, their proof is not constructive in the sense that they only prove the existence of a sequence of thresholds $(l_n^{-1})_{n\geq 1}$  that provides convergence within a ball of size $\epsilon$ around the global solution. To this regards, Corollary \ref{cor:TA} therefore constitutes the first convergence result for this class of algorithms.

\section{Numerical Study\label{sec:num}}

The objective of this section is to compare the performance of classical SA  (i.e. Algorithm \ref{alg:SAQMC} based on IID random numbers) with QMC-SA  to find the global optimum of functions defined on  continuous spaces.

To compare SA and QMC-SA for a large number of different configurations, we first  consider a bivariate toy example for which we perform an extensive simulation study (Section \ref{sub:toy}). Then, SA and QMC-SA are compared on a difficult optimization problem arising in spatial statistics and which involves the minimization of a non-differentiable function defined on an unbounded space of dimension $d=102$ (Section \ref{sub:spatial}).

In all the simulations below the comparison between  SA and QMC-SA is based on 1\,000 starting values sampled independently in the state space. The Monte Carlo  algorithm is run only once while QMC-SA  is implemented using a Sobol' sequence as input (implying that $b=2$ and $R=\infty$).

\subsection{Example 1: Pedagogical example\label{sub:toy}}

Following \citet[][Example 5.9, p.169]{RobCas} we consider the  problem of minimizing the function $\tilde{\varphi}_1:\setX_1:=[-1,1]^2\rightarrow\mathbb{R}^+$  defined by 
\begin{equation}\label{eq:toy}
\begin{split}
\tilde{\varphi}_1(x_1,x_2)&=\big(x_1\sin(20x_2)+x_2\sin(20x_1)\big)^2\cosh\big(\sin(10x_1)x_1)\\
&+\big(x_1\cos(10x_2)-x_2\sin(10x_1)\big)^2\cosh\big(\sin(20x_2)x_2).
\end{split}
\end{equation}
This function has a unique global minimum at $(x_1,x_2)=(0,0)$ and several local minima on $\setX_1$; see \citet[][p.161]{RobCas} for a grid representation of  $\tilde{\varphi}_1$. The comparison between SA and QMC-SA for this  optimization problem is based on the number of iterations that is needed to reach a ball of size $10^{-5}$ around the global minimum of $\tilde{\varphi}_1$.  To avoid infinite running time, the maximum number of iterations we allow is $N=2^{17}$.

The SA and QMC-SA algorithms are implemented for Markov kernels
$$
K^{(j)}(\bx,\dd\by)=f^{(j)}_{[-1,1]}(y_1,x_1,\sigma)f^{(j)}_{[-1,1]}(y_2,x_2,\sigma)\dd \by,\quad j\in 1=1,2,
$$
where, for $j=1$ (resp. $j=2$),  $f^{(j)}_{I}(\cdot ,\mu,\tilde{\sigma})$
denotes  the density of the  Cauchy (resp. Gaussian) distribution with location parameter $\mu$ and scale parameter $\tilde{\sigma}>0$, truncated on $I\subseteq\mathbb{R}$. Simulations are done for $\sigma\in\{0.01,0.1,1,10\}$. 

We consider  three different sequences of temperatures $(T^{(m)}_n)_{n\geq 1}$, $m\in 1:3$, defined by
\begin{align}\label{num:Temp}
T^{(1)}_n=T^{(1)}_0(n^{1+\epsilon}\log n)^{-1},\quad T^{(2)}_n=T^{(2)}_0/n,\quad T^{(3)}_n=T_0^{(3)}(\log n)^{-1}
\end{align}
and where we choose $\epsilon=0.001$ and
\begin{align}\label{num:Temp2}
T^{(1)}_0\in\{20,200,2\,000\},\quad T^{(2)}_0\in\{2,20,200\},\quad T^{(3)}_0\in\{0.02,0.2,2\}.
\end{align}
Note that $(T^{(1)}_n)_{n\geq 1}$ is such that results of Theorems \ref{thm:conv}-\ref{thm:convSA} hold  while $(T^{(2)}_n)_{n\geq 1}$ (resp. $(T^{(3)}_n)_{n\geq 1}$ ) is the standard choice for SA based on  Cauchy (resp. Gaussian) random walks  \citep[see, e.g.,][]{Ingber1989}. However, on compact state spaces, SA based on these two kernels is such that the sequence $(\tilde{\varphi}_1(\bx^n))_{n\geq 1}$ converges in probability to the global minimum of $\tilde{\varphi}_1$ for any sequence of temperatures $(T_n)_{n\geq 1}$ such that $T_n\rightarrow 0$ as $n\rightarrow\infty$ \cite[see][Theorem 1]{Belisle1992}.

Simulations are performed for different combinations of kernels $K^{(j)}$ and temperatures $(T_n^{(m)})_{n\geq 1}$. For each of these combinations, simulations are done for all values of $T_0^{(m)}$ given in \eqref{num:Temp2} and for all $\sigma\in\{0.01,0.01,1,10\}$. Altogether, we run simulations  for 56 different parametrisations of SA and QMC-SA. The results presented in this subsection are obtained from 1\,000 starting values sampled independently and uniformly on $\setX_1$.

\begin{figure}
\centering

\begin{subfigure}{0.28\textwidth}
\centering
\includegraphics[trim=2cm 0 0 0, scale=0.23]{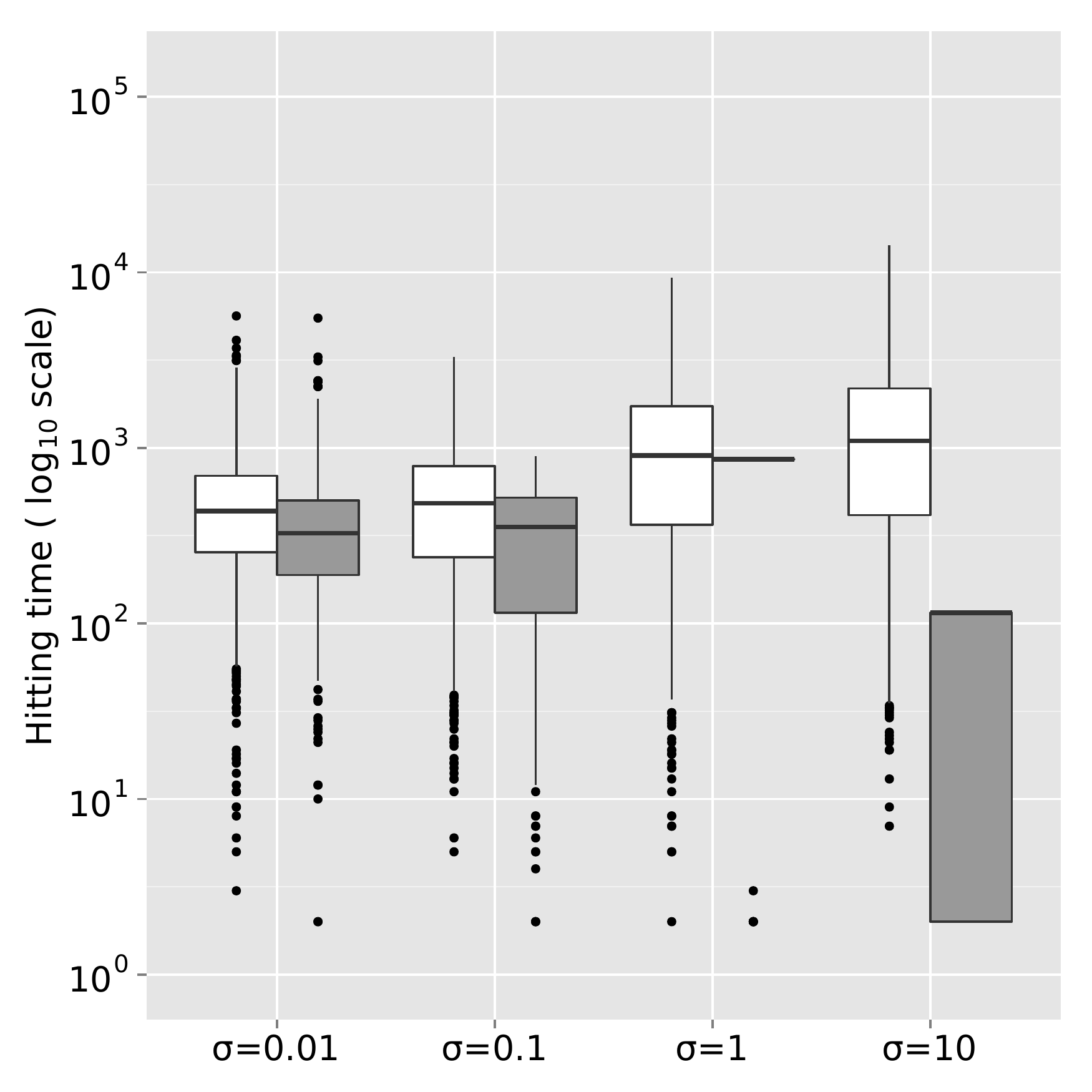}
\caption{\label{Cauchy_QMC2}}
\end{subfigure}
\hspace{0.5cm}
\begin{subfigure}{0.28\textwidth}
\includegraphics[trim=2cm 0 0 0, scale=0.23]{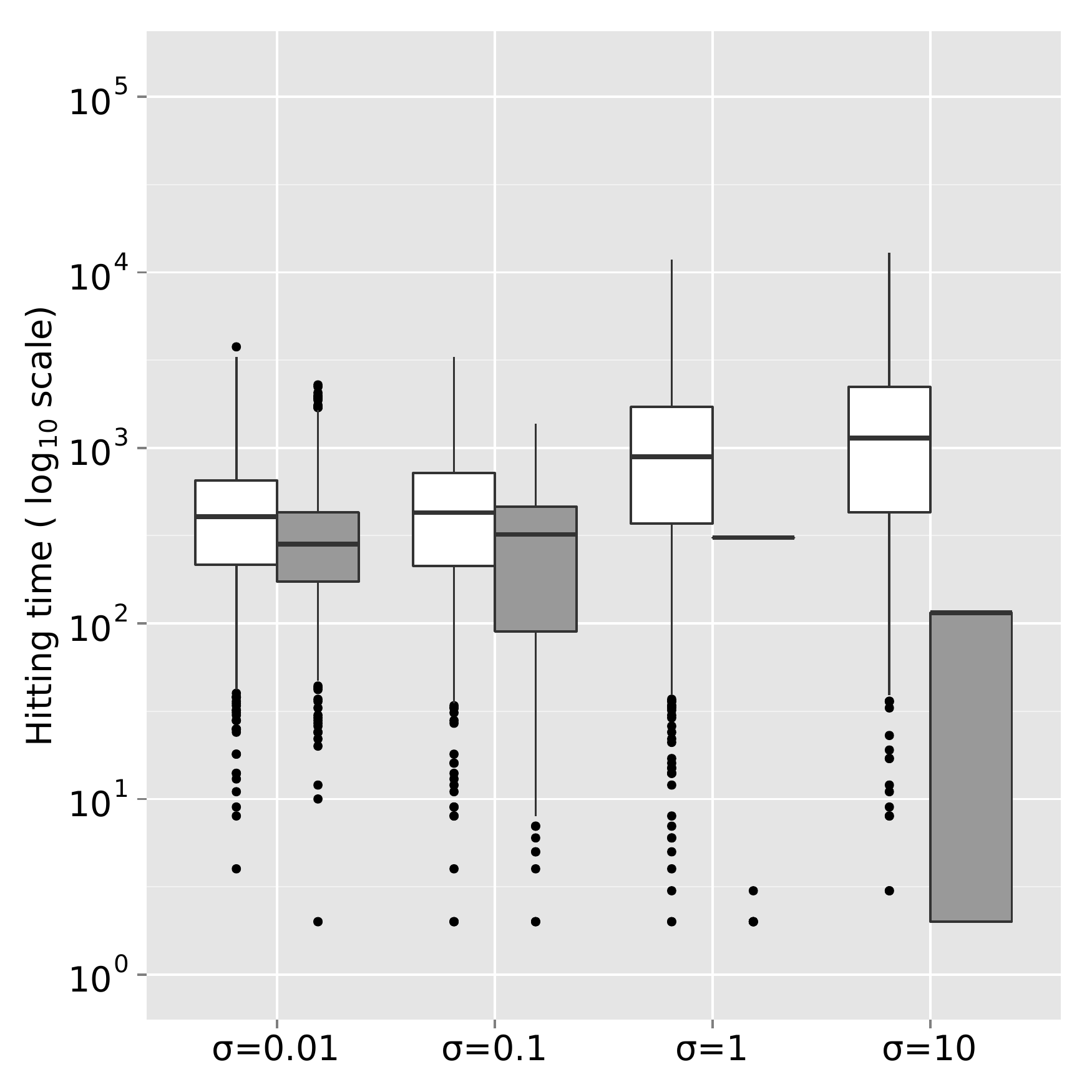}
\caption{\label{Cauchy_Lin2}}
\end{subfigure}
\hspace{0.5cm}
\begin{subfigure}{0.28\textwidth}
\includegraphics[trim=2cm 0 0 0, scale=0.23]{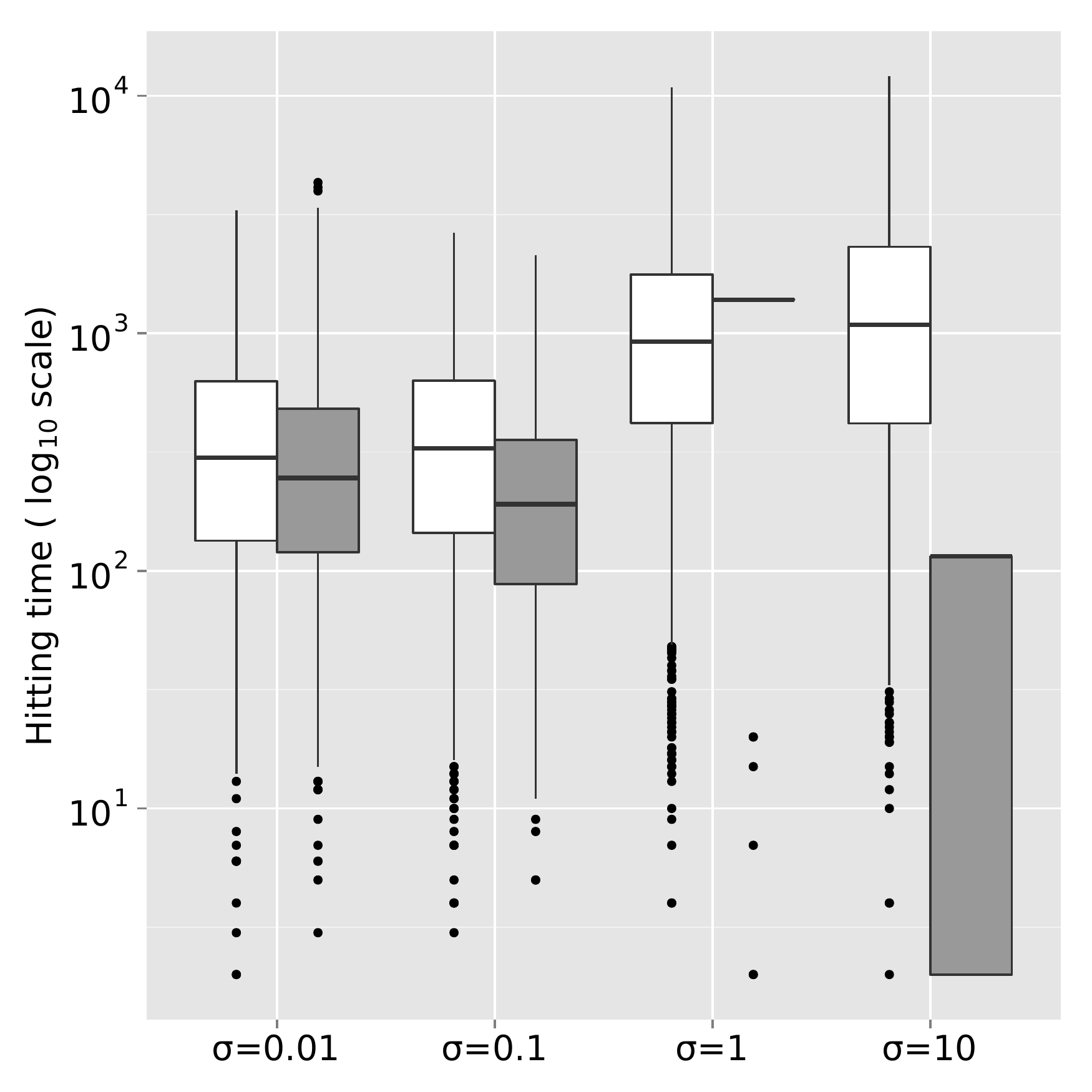}
\caption{\label{Cauchy_Log2}}
\end{subfigure}

\begin{subfigure}{0.28\textwidth}
\centering
\includegraphics[trim=2cm 0 0 0, scale=0.23]{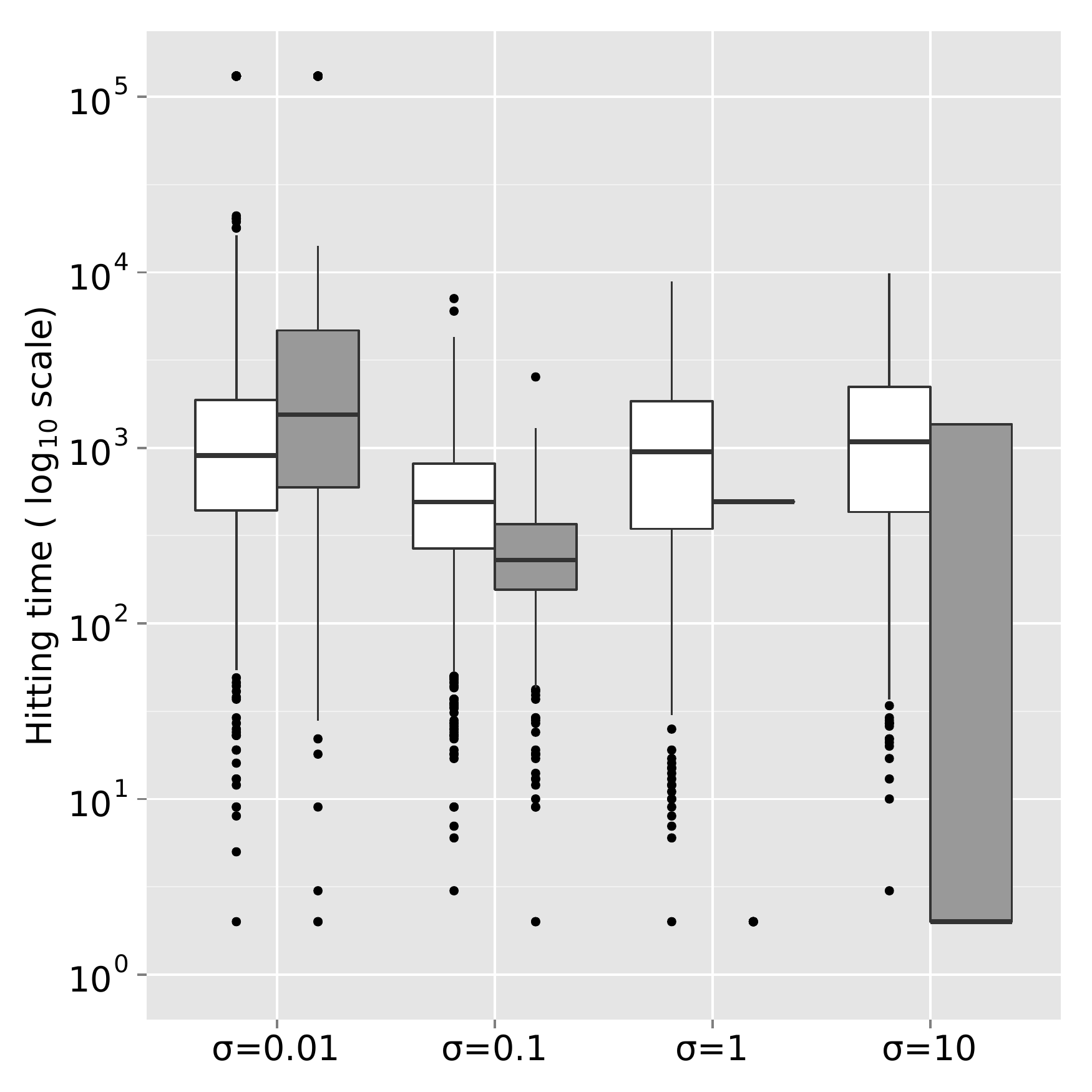}
\caption{\label{Gaussian_QMC2}}
\end{subfigure}
\hspace{0.5cm}
\begin{subfigure}{0.28\textwidth}
\includegraphics[trim=2cm 0 0 0, scale=0.23]{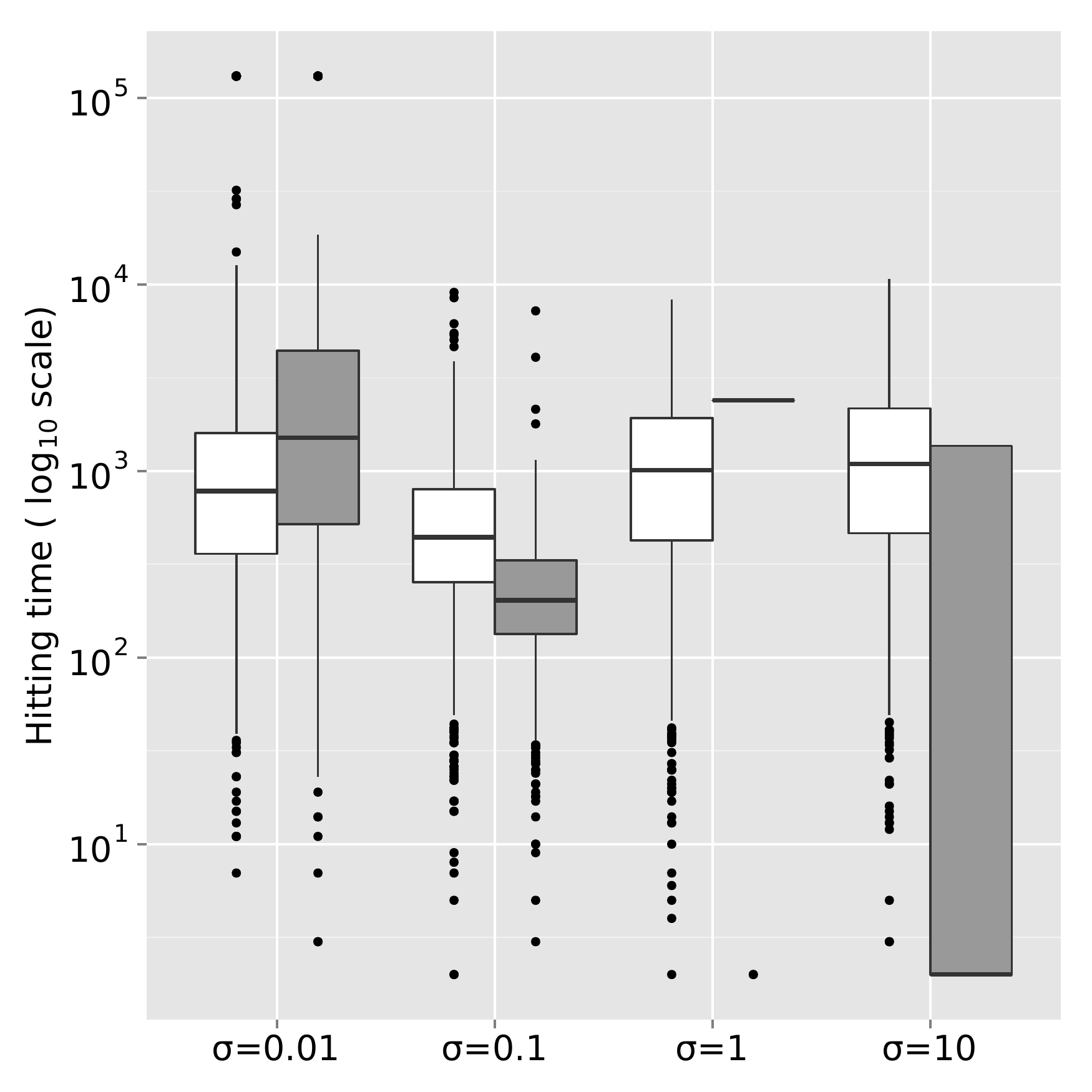}
\caption{\label{Gaussian_Lin2}}
\end{subfigure}
\hspace{0.5cm}
\begin{subfigure}{0.28\textwidth}
\includegraphics[trim=2cm 0 0 0, scale=0.23]{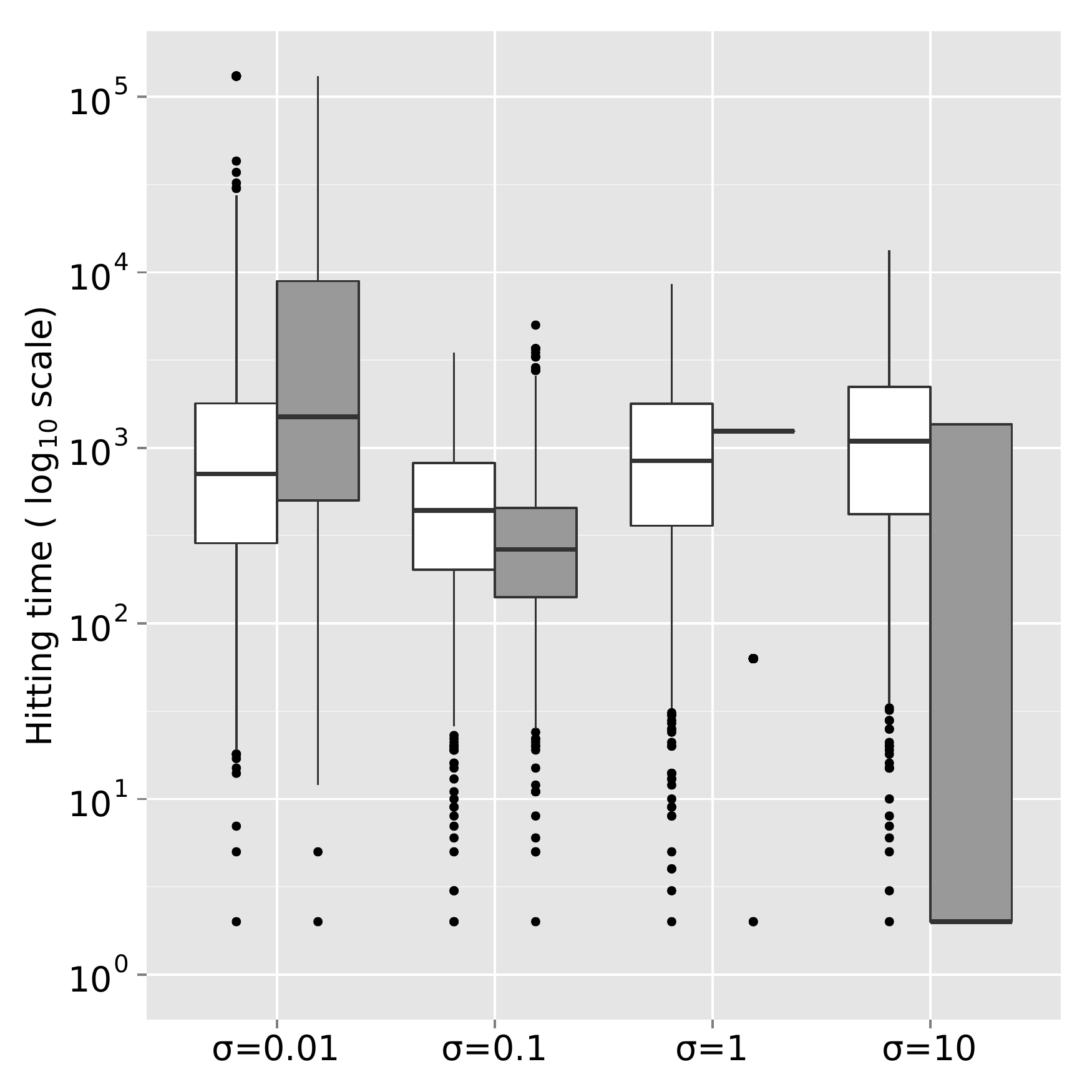}
\caption{\label{Gaussian_Log2}}
\end{subfigure}
\caption{Minimization of $\tilde{\varphi}_1$ defined by \eqref{eq:toy} for 1\,000  starting values sampled independently and uniformly on $\setX_1$. Results are presented for  for the Cauchy kernel (top)  and for the Gaussian kernel. For each kernel, simulation are done for $(T_n^{(m)})_{n\geq 1}$ where $m=1$ (left plots), $m=2$ (middle) and $m=3$, with $T_0^{(m)}$ the median of the values given in \eqref{num:Temp2}. The plots show the minimum number of iterations needed for SA (white boxes) and QMC-SA to find a $\bx\in \setX_1$ such that $\tilde{\varphi}_1(\bx)<10^{-5}$. For each starting value, the Monte Carlo algorithm is run only once and the QMC-SA algorithm is based on the Sobol' sequence.\label{fig:Robert}}
\end{figure}

Figure \ref{fig:Robert} shows the results for the two kernels and for the  sequences of temperatures given in \eqref{num:Temp} where, for $m\in 1:3$, $T_0^{(m)}$ is the median value given in \eqref{num:Temp2}. The results for the other  values of $T_0^{(m)}$ are presented in  Appendix \ref{app:fig} (Figures \ref{fig:Robert_Cauchy} and \ref{fig:Robert_Gaussian}).

Focussing first on the results for the Cauchy  kernel (first  row), we observe that QMC-SA is never unambiguously  worst than SA   and is significantly better in most cases. The performance of QMC-SA  becomes globally better as we increase the step size $\sigma$ (we however note that QMC-SA  tends to be the less efficient when $\sigma=1$) with the best results for QMC-SA  obtained when $\sigma=10$ where, in several settings, the maximum  hitting time of an error of size $10^{-5}$ is around 100 (against $10^{3.5}$ for SA).  A last interesting observation we can make from this first set of simulation  is that we obtain in several cases exactly the same hitting time for different starting values of the QMC-SA.

The results for the Gaussian kernel  show a slightly different story. Indeed, when $\sigma=0.01$, QMC-SA provides a poorer performance than SA. The reason for this is that, as the tails of the Gaussian distribution are very small in this case, a large value of $\|\bu_{\infty}^n\|_{\infty}$ is needed at iteration $n$ of the algorithm to generate a candidate value $\by^n$ far away from the current location. However, the number of such  points  are limited when we go through a $(t,d)$-sequence and therefore QMC-SA explores  the state space $\setX_1$ mostly through very small moves when a kernel with very tiny tails is used.  As a takeaway, when the proposal kernel has small variance alongside light tails, little benefit is obtained by QMC-SA.

Altogether, we observe that SA unambiguously outperforms QMC-SA in only 8 of the 56 scenarios under study (Figures \ref{Gaussian_QMC2}-\ref{Gaussian_Log2} with $\sigma=1$, and, in Appendix \ref{app:fig}, Figure \ref{Cauchy_Lin1} with $\sigma=1$, Figure \ref{Gaussian_QMC1} with $\sigma=1$ and  Figures \ref{Gaussian_QMC3}-\ref{Gaussian_Log3} with $\sigma=0.01$). However, none of these situations correspond to a good (and hence desirable) parametrization of SA.

\subsection{Example 2: Application to spatial statistics\label{sub:spatial}}

\begin{figure}
\centering

\begin{subfigure}{0.28\textwidth}
\includegraphics[trim=2cm 0 0 0, scale=0.23]{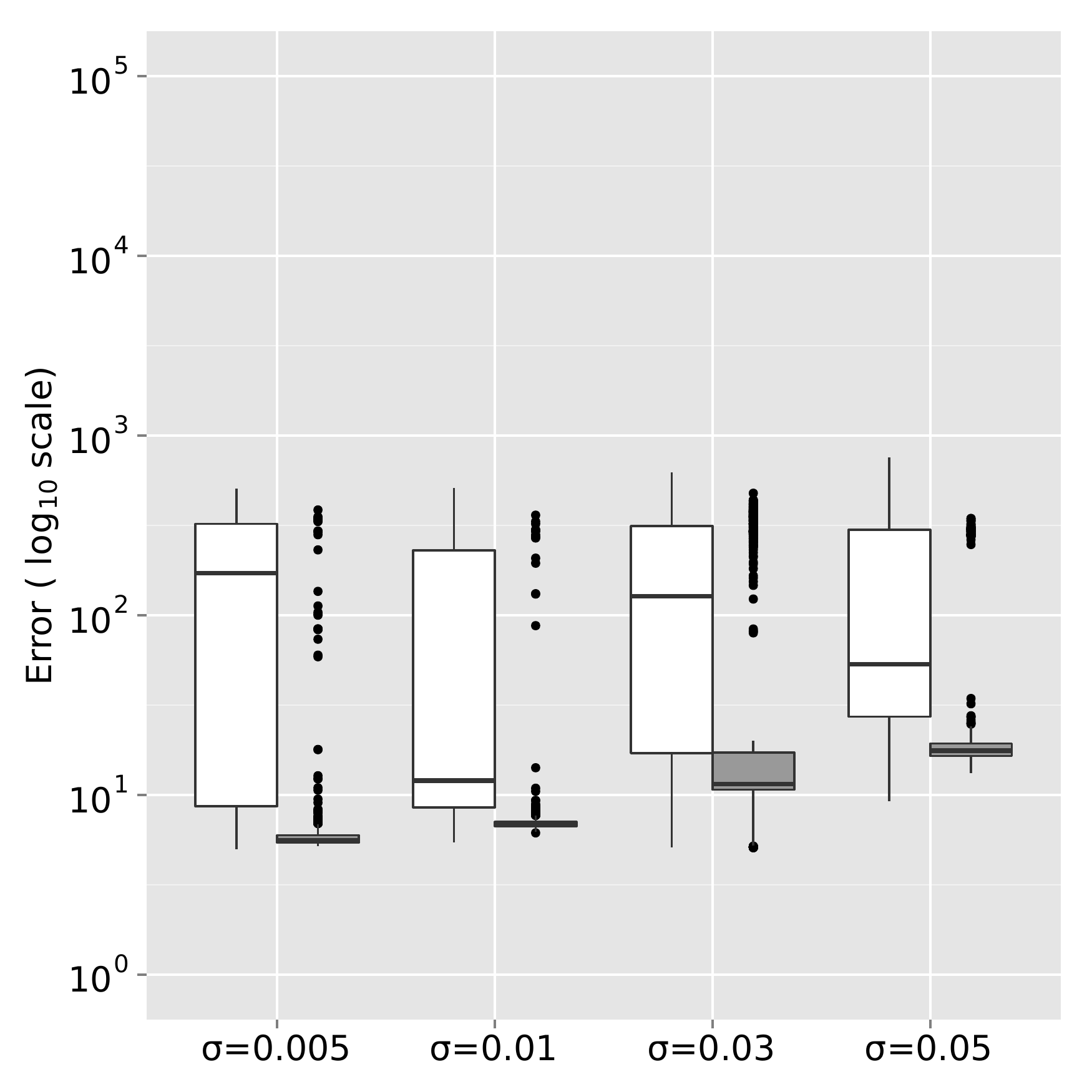}
\caption{\label{Fig:QMC2}}
\end{subfigure}
\hspace{1.2cm}
\begin{subfigure}{0.28\textwidth}
\includegraphics[trim=2cm 0 0 0, scale=0.23]{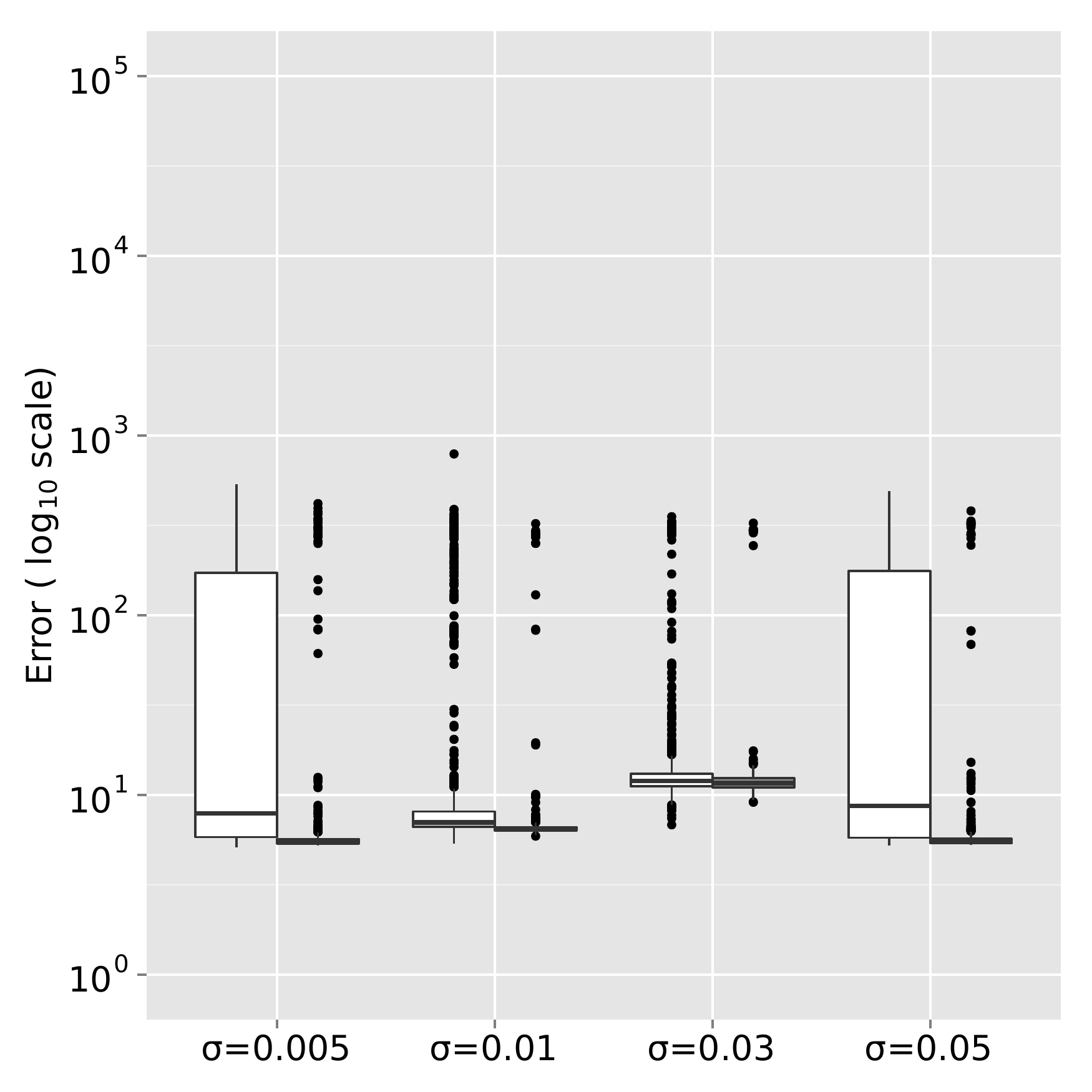}
\caption{\label{Fig:Log2}}
\end{subfigure}

\begin{subfigure}{0.28\textwidth}
\includegraphics[trim=2cm 0 0 0, scale=0.23]{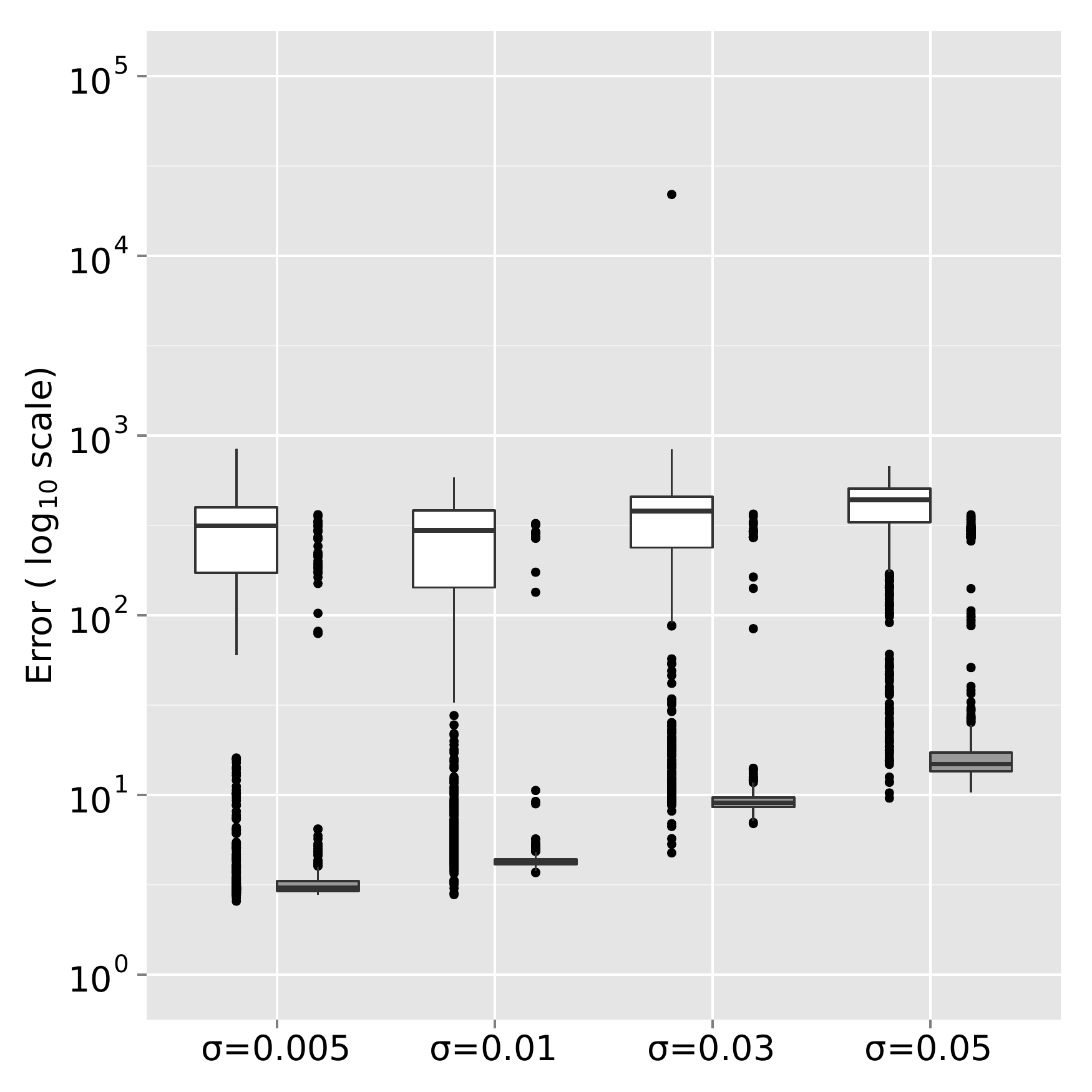}
\caption{\label{Fig:QMC1}}
\end{subfigure}
\hspace{1.2cm}
\begin{subfigure}{0.28\textwidth}
\includegraphics[trim=2cm 0 0 0, scale=0.23]{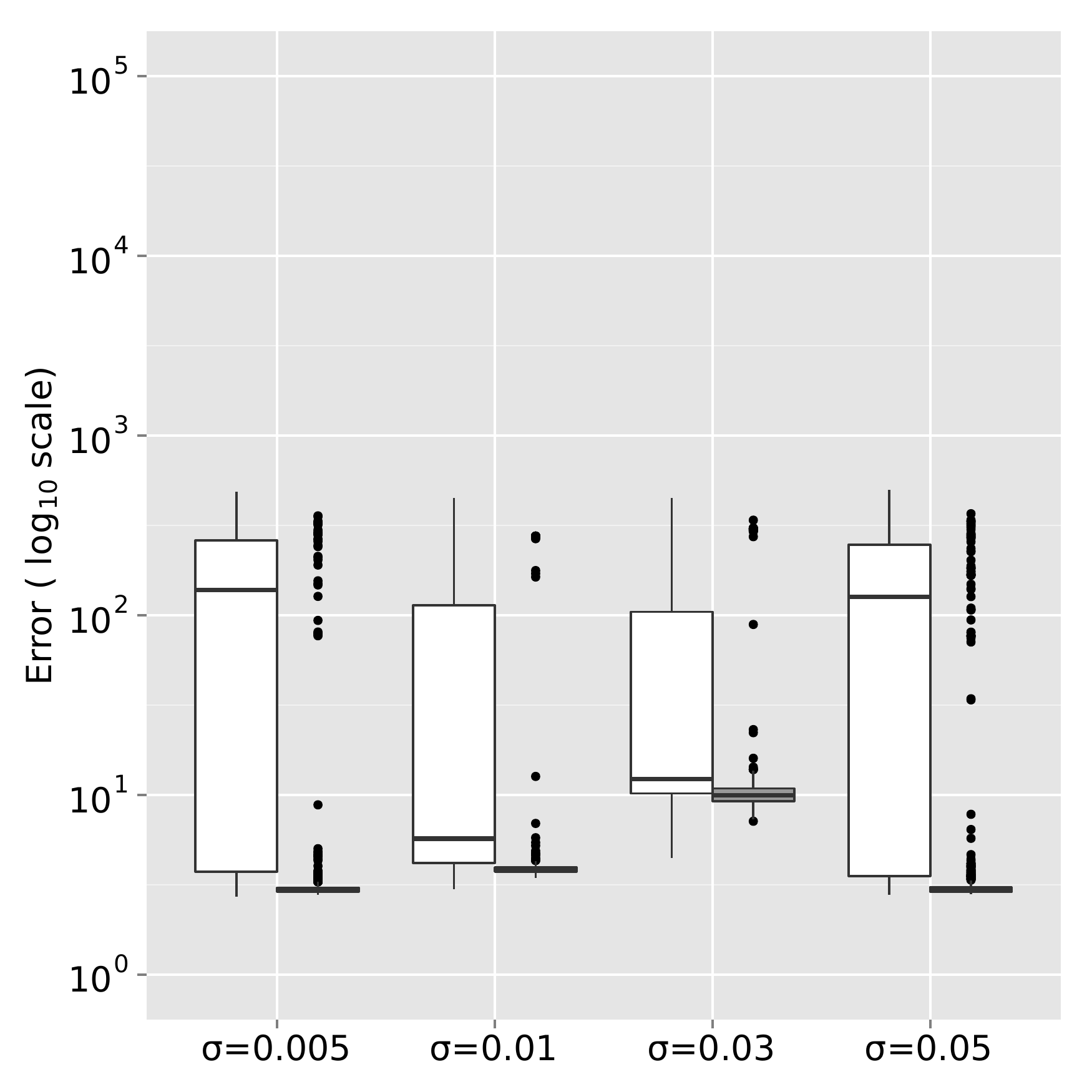}
\caption{\label{Fig:Log1}}
\end{subfigure}

\caption{Minimization of $\tilde{\varphi}_{\lambda}$  for 1\,000 starting values sampled independently in $\setX_2$ (as explained in the text) and for $\lambda=0.1$ (top) and for $\lambda=0.01$ (bottom). Results are presented for a Cauchy random walk with step size as described in the text, and for $(T_n^{(1)})_{n\geq 1}$ with $T_0^{(1)}=5\,000$ (left) and for $(T_n^{(4)})_{n\geq 1}$ with $T_0^{(4)}=0.01$ (right). The plots show $\min\{\tilde{\varphi}_2(\bx^n),\, n\in 1:2^{17}\}$ obtained by SA (white boxes) and QMC-SA for each starting value. The results are obtained for $d_1=100$ locations. For each starting value  the Monte Carlo algorithm is run only once and the QMC-SA algorithm is based on the Sobol' sequence.\label{Fig:Dim exp}}
\end{figure}

Let $\{Y(\bx):\,\bx\in\mathbb{R}^{2}\}$ be a spatial process and consider the problem of estimating the variogram $\E(|Y(\bx_i)-Y(\bx_j)|^2)$ for $i,j\in 1:d_1$. When the process is assumed to be stationary, this estimation is typically straightforward. However, for most real-world spatial problems arising in climatology, environmetrics, and elsewhere, inference is much more challenging as the underlying process $\{Y(\bx):\,\bx\in\mathbb{R}^{2}\}$ is inherently nonstationary.

A simple  way to modeling nonstationary spatial processes is to use a dimension expansion approach, as proposed by \citet{Bornn2012}. For the sake of simplicity, we assume that the process $\{Y([\bx,z]):\,[\bx,z]\in\mathbb{R}^{3}\}$ is stationary; that is, adding only one dimension is enough to get a stationary process. Thus, the variogram of this process
depends only on the distance between location $i$ and $j$ and can be modelled, e.g., using the parametric model
$$
\gamma_{\phi_1,\phi_2}([\bx_1,z_1],[\bx_2,z_2])=\phi_1(1-\exp\{-\|[\bx_1,z_1]-[\bx_2,z_2]\|/\phi_2\})
$$
where $\phi_1$ and $\phi_2$ are two positive parameters to be learned from the data.

Assuming that we have $M\geq 2$ observations $\{y_{m,i}\}_{m=1}^M$ at  location $i\in 1: d_1$, the solution to this problem is obtained by minimizing $\tilde{\varphi}_{\lambda}:\setX_2:=\mathbb{R}^{+}\times \mathbb{R}^{+}\times \mathbb{R}^{d_1}\rightarrow\mathbb{R}^+$, defined by
\begin{equation}\label{num:phi2}
\begin{split}
\tilde{\varphi}_{\lambda}(\phi_1,\phi_2,\bz)&=\sum_{1\leq i<j}^{d_1} \Big\{v_M^{(ij)}-\gamma_{\phi_1,\phi_2}([\bx_i,z_i],[\bx_j,z_j])\Big\}^2+\lambda \|\bz\|_1
\end{split}
\end{equation}
where  $\lambda>0$ control the regularity of $\bz$ and where $v_M^{(ij)}=M^{-2}\sum_{m,m'=1}^M |y_{m,i}-y_{m',j}|^2$
is an estimate of the spatial dispersion between locations $i$ and $j$. Note that  $\tilde{\varphi}_{\lambda}$ is non-differentiable  because of the $L_1$-penalty and is both high-dimensional (with $1$ parameter per observation) as well as nonlinear (due to way the latent locations factor into the parametric variogram). To further complicate matters, the objective function's minimum is only unique up to rotation and scaling of the latent locations. 

Following \citet{Bornn2012}, the  observations are generated by simulating a Gaussian process, with  $d_1=100$ locations on a three dimensional half-ellipsoid centered at  the origin and $M=1\,000$. We minimize  the objective function \eqref{num:phi2} using both SA and QMC-SA with a Cauchy random walk defined by
$$
K(\bx,\dd\by)=\left(\otimes_{i=1}^2 f^{(1)}_{\mathbb{R}^+}(y_i,x_i,\sigma^{(i)})\dd y_i\right)\otimes\left(\otimes_{i=3}^d f^{(1)}_{\mathbb{R}}(y_i,x_i,\sigma^{(i)})\dd y_i\right)
$$
where $f^{(1)}_{I}(\cdot,\mu,\tilde{\sigma})$ is as in the previous subsection. Simulations are performed for
$$
(\sigma^{(1)},\dots,\sigma^{(d)})=\sigma\times\big(0.1,0.1, 0.5\,\widehat{\sigma}_M(y_{\bm{\cdot},1}),\dots, 0.5\widehat{\sigma}_M(y_{\bm{\cdot},d_1})\big),
$$
where $\widehat{\sigma}_M(y_{\bm{\cdot},i})$ denotes the  standard deviation of  the observations $\{y_{m,i}\}_{m=1}^M$ at location $i\in 1:d_1$. Simulations are conducted for $\sigma\in\{0.005,0.01,0.03,0.05\}$ and for the sequence of temperatures  $(T_n^{(1)})_{n\geq 1}$ given in \eqref{num:Temp} and for $(T_n^{(4)})_{n\geq 1}$ defined by $T^{(4)}_n=T^{(4)}_0/\log(n+C^{(4)})$. As already mentioned (see Section \ref{sub:consisistency}), the sequence $(T_n^{(4)})_{n\geq 1}$ is such that convergence results for SA on unbounded spaces exist but the constants $T_0^{(4)}$ and $C^{(4)}$ are model dependent and intractable. In this simulation study, we take $T_0^{(1)}=5\,000$, $T_0^{(4)}=0.1$ and $C^{(4)}=100$. These values are chosen based on  some pilot runs of SA and QMC-SA fo $\lambda=0.1$.  Note also that the values of $T_0^{(1)}$, $T_0^{(4)}$ and $C^{(4)}$ are such that   $T_N^{(1)}\approx T_N^{(4)}$, where  $N=2^{17}$ is the number of function evaluations we consider in this numerical study.

Figures \ref{Fig:QMC2}-\ref{Fig:Log2} shows the values of $\min\{\tilde{\varphi}_{\lambda}(\bx^n),\, n\in 0: N\}$ obtained by SA and QMC-SA when for $\lambda=0.1$ and for 1\,000 different starting values  
$\bx_0=(\varphi_{0,1:2}, z_{0,1:d_1})$ sampled independently in $\setX_2$ as follows:
$$
\varphi_{0,1}\sim\Unif (0,2),\quad \varphi_{0,2}\sim\Unif (0,2),\quad z_{0,i}\sim\mathcal{N}(0,1),\quad i\in 1:d_1.
$$

Looking first at Figure \ref{Fig:Log2}, we observe  that SA performs relatively well when  the sequence $(T_n^{(4)})_{n\geq 1}$ is used (together with $\sigma\in\{0.01,0.03\}$). The good performance of SA in this case is not surprising since $T_0^{(4)}$ and $C^{(4)}$ are calibrated such that SA works well when $\lambda=0.1$. However, we remark that, for this sequence of temperatures, QMC-SA  outperforms SA for any value of $\sigma$. 
The results obtained for the sequence $(T_n^{(1)})_{n\geq 1}$ are presented in Figure \ref{Fig:QMC2}. If in this case SA performs poorly, and as for the sequence $(T_n^{(4)})_{n\geq 1}$,  QMC-SA provides a small error for the  vast majority of the 1\,000 different staring values (in particular when $\sigma\in\{0.005,0.01\}$).

In practice, one often minimizes $\tilde{\varphi}_{\lambda}$  for different values of the $\lambda$ parameter, which determines the regularity of the optimal solution for  $\bz\in\mathbb{R}^{d_1}$. It is therefore important that the optimization method at hand remains efficient for different values of $\lambda$. To evaluate the sensitivity of QMC-SA and SA to this parameter,  
Figures \ref{Fig:QMC1}-\ref{Fig:Log1} show the results obtained $\lambda=0.01$. For both sequences of temperatures, we observe that SA is much less efficient than when $\lambda=0.1$. In particular,  the results for the sequence $(T_n^{(4)})_{n\geq 1}$ (Figure \ref{Fig:Log1}) suggest that  SA is very sensitive to the cooling schedule in this example. On the contrary,  QMC-SA performs well in all cases and
outperforms (from far) SA for the two sequences of temperatures and for all values of $\sigma$ we have chosen for this numerical study.

The main message of this example is that the performance of QMC-SA on unbounded spaces is very robust to different choice of step-size $\sigma$ and of the cooling schedule. Consequently, tuning QMC-SA is much  simpler than for SA algorithms. The results of this subsection seem to indicate that the convergence of QMC-SA on unbounded spaces could be obtained under the same condition for $(T_n)_{n\geq 1}$ than for compact spaces (see Theorem \ref{thm:conv}). In particular, this suggests that there exists a universal sequence of cooling schedules which ensure the convergence of QMC-SA on non-compact space. However, further research in this direction is needed.

\section{Conclusion\label{sec:conc}}

In this paper we show that the performance of simulated annealing algorithms can be greatly improved through derandomization.  Indeed,  in the extensive numerical study proposed in this work we never observe a situation where SA performs well while QMC-SA performs poorly. In addition, in the vast majority of the scenarios under study, QMC-SA turns out to be much better than plain Monte Carlo SA. This is particularly true in the high dimensional example proposed in this work where in most cases plain SA fails to provide a satisfactory solution to the optimization problem, while QMC-SA does.

Our theoretical results also advance the current understanding of simulated annealing and related algorithms, demonstrating almost sure convergence with no objective-dependent conditions on the cooling schedule. These results also hold for classical SA under minimal assumptions on the objective function. Further, the convergence results extend beyond SA to a broader class of optimization algorithms including threshold accepting.

Future research should study  QMC-SA on non compact spaces and extend QMC-SA to other types of Markov kernels. Concerning this first point, it is of great interest to check if the consistency of QMC-SA on unbounded spaces can, as it is the case for compact spaces, be guaranteed for a universal (i.e. not model dependent) sequence of temperatures. Our simulation results suggest that this is indeed the case and therefore should encourage research in this direction.


In many practical optimization problems, and in particular in those arising in statistics, the objective function is multimodal but differentiable. In that case, stochastic gradient search algorithms \citep[see, e.g.,][]{Gelfand1991,Gelfand1993} are efficient alternatives to SA which, informally, correspond to SA  based on a Langevin type Markov transitions. Intuitively, combining the information on the objective function  provided by its gradient  with the good equidistribution properties of QMC point sets should result in a powerful search algorithm. We leave this exciting question for future research.

 \section*{Acknowledgement}

The authors acknowledge support from DARPA under Grant No. FA8750-14-2-0117. 
The authors also thank Christophe Andrieu, Pierre  Jacob and Art  Owen for insightful discussions and useful feedback.

\appendix

\section{Proof of Lemma \ref{lem:dense}\label{p-lem:dense}}

Let $n\in\mathbb{N}$, $(\tilde{\bx},\bx')\in\setX^2$, $\delta_{\setX}=0.5$ and   $\delta\in(0,\delta_{\setX}]$. Then, by Assumption \ref{H:lem:K1}, $F_K^{-1}(\tilde{\bx},\bu_1^n)\in B_{\delta}(\bx')$  if and only if $
\bu_1^n\in F_K(\tilde{\bx},B_{\delta}(\bx'))$. We now  show that, for $\delta$ small enough, there exists a closed hypercube $W(\tilde{\bx},\bx', \delta)\subset\ui^d$  such that $W(\tilde{\bx},\bx',\delta)\subseteq F_K(\bx,B_{\delta}(\bx'))$ for all $\bx\in B_{v_K(\delta)}(\tilde{\bx})$, with $v_K(\cdot)$  as in the statement of the lemma.

To see this note that, because $K(\bx,\dd\by)$ admits a density $K(\by|\bx)$ which is continuous on the compact set $\setX^2$, and using Assumption \ref{H:lem:K2}, it is easy to see that, for $i\in 1:d$, $K_i(y_i|y_{1:i-1},\bx)\geq\tilde{K}$    for all $(\bx,\by)\in\setX^2$ and for a constant $\tilde{K}>0$.  Consequently, for any  $\delta\in [0,0.5]$  and $(\bx,\by)\in\setX^2$,
\begin{align}\label{eq:p-lem:denseB1}
F_{K_i}\left(x'_i+\delta|\bx,  y_{1:i-1}\right)-F_{K_i}\left(x'_i-\delta|\bx,  y_{1:i-1}\right)\geq \tilde{K}\delta,\quad \forall i\in 1:d
\end{align}
where $F_{K_i}(\cdot{}|\bx, y_{1:i-1})$ denotes the CDF of  the probability measure $K_i(\bx,y_{1:i-1},\dd y_i)$, with the convention that $F_{K_i}(\cdot{}|\bx, y_{1:i-1})=F_{K_1}(\cdot{}|\bx)$ when $i=1$.
Note that the right-hand side of \eqref{eq:p-lem:denseB1} is $\tilde{K}\delta$ and not $2\tilde{K}\delta$ to  encompass the case where either $x_i'-\delta\not\in[0,1]$ or  $x_i'+\delta\not\in[0,1]$. (Note also that because $\delta\leq 0.5$ we cannot have both  $x_i'-\delta\not\in[0,1]$ and  $x_i'+\delta\not\in[0,1]$.)

For $i\in 1:d$ and  $\delta'>0$, let
$$
\omega_i(\delta')=\sup_{
\substack{(\bx,\by)\in\setX^2,\,(\bx',\by')\in\setX^2\\
\|\bx-\bx'\|_{\infty}\vee\|\by-\by'\|_{\infty}\leq\delta'
}}|F_{K_i}\left(y_i|\bx, y_{1:i-1}\right)-F_{K_i}\left(y_i'|\bx',  y'_{1:i-1}\right)|
$$
be the (optimal) modulus of continuity of $F_{K_i}(\cdot|\cdot)$. Since $F_{K_i}$ is uniformly continuous on the compact  set  $[0,1]^{d+i}$, the  mapping $\omega_i(\cdot)$ is continuous   and $\omega_i(\delta')\cvz $ as $\delta'\cvz$. In addition, because $F_{K_i}\left(\cdot|\bx, y_{1:i-1}\right)$ is strictly increasing  on $[0,1]$ for all $(\bx,\by)\in\setX^2$, $\omega_i(\cdot)$ is strictly increasing on $(0,1]$. Let $\tilde{K}$ be small enough so that, for $i\in 1:d$, $0.25\tilde{K}\delta_{\setX}\leq w_i(1)$ and let  $\tilde{\delta}_i(\cdot)$ be the mapping $
z\in (0,\delta_{\setX}]\longmapsto\tilde{\delta}_i(z)=\omega_i^{-1} (0.25\tilde{K}z)$. Remark that the function $\tilde{\delta}_i(\cdot)$ is independent of $(\tilde{\bx},\bx')\in\setX^2$, continuous and strictly increasing on $(0,\delta_{\setX}]$ and such that $\tilde{\delta}_i(\delta')\cvz$ as $\delta'\cvz$.

For $\bx\in\setX$, $\delta'>0$ and $\delta'_i>0$, $i\in 1:d$, let
$$
B^i_{\delta'}(\tilde{\bx})=\{\bx\in [0,1]^i:\, \|\bx-\tilde{x}_{1:i}\|_{\infty}\leq \delta'\}\cap [0,1]^i
$$
and
$$
B_{\delta'_{1:i}}(\tilde{\bx})=\{\bx\in [0,1]^i:\, |x_j-\tilde{x}_{j}|\leq \delta'_j,\,j\in 1:i\}\cap [0,1]^i.
$$

Then, for any $\delta'>0$ and for all $(\bx,y_{1:i-1})\in B_{\tilde{\delta}_i(\delta)}(\tilde{\bx})\times  B^{i-1}_{\tilde{\delta}_i(\delta)}(\bx')$, we have
\begin{align}
&|F_{K_i}\left( x'_i+\delta'|\bx,  y_{1:i-1}\right)-F_{K_i}\left(x'_i+\delta'|\tilde{\bx},  x'_{1:i-1}\right)|\leq  0.25\tilde{K}\delta\label{eq:p-lem:denseB2}\\
&|F_{K_i}\left(x'_i-\delta'|\bx,  y_{1:i-1}\right)-F_{K_i}\left(x'_i-\delta'|\tilde{\bx},  x'_{1:i-1}\right)|\leq  0.25\tilde{K}\delta\label{eq:p-lem:denseB3}.
\end{align}

For $i\in 1:d$ and $\delta'\in (0,\delta_{\setX}]$, let  $\delta_i(\delta')=\tilde{\delta}_i(\delta')\wedge\delta'$ and note that the function $\delta_i(\cdot)$ is continuous and strictly increasing on $(0,\delta_{\setX}]$. Let $\delta_d=\delta_d(\delta)$ and define recursively $\delta_{i}=\delta_{i}(\delta_{i+1})$, $i\in 1:(d-1)$, so that $\delta\geq \delta_d\geq\dots\geq\delta_1>0$. For $i\in 1:d$, let
$$
\underline{v}_i(\tilde{\bx},\bx',\delta_{1:i})= \sup_{(\bx,y_{1:i-1})\in B_{\delta_{1}}(\tilde{\bx})\times B_{\delta_{1:i-1}}(\bx')} F_{K_i}\left(x'_i-\delta_i|\bx, y_{1:i-1}\right)
$$
and
$$
\bar{v}_i(\tilde{\bx},\bx',\delta_{1:i})= \inf_{(\bx,y_{1:i-1})\in B_{\delta_{1}}(\tilde{\bx})\times B_{\delta_{1:i-1}}(\bx')} F_{K_i}\left(x'_i+\delta_i|\bx, y_{1:i-1}\right).
$$
Then, since $F_{K_i}(\cdot|\cdot)$ is continuous and the set $B_{\delta_{1}}(\tilde{\bx})\times B_{\delta_{1:i-1}}(\bx')$ is compact, there exists points $(\underline{\bx}^i,\underline{y}^{i}_{1:i-1})$ and $(\bar{\bx}^i,\bar{y}^{i}_{1:i-1})$ in $B_{\delta_{1}}(\tilde{\bx})\times B_{\delta_{1:i-1}}(\bx')$ such that
$$
\underline{v}_i(\tilde{\bx},\bx',\delta_{1:i})=F_{K_i}(x'_i-\delta_{i}|\underline{\bx}^i, \underline{y}^{i}_{1:i-1}),\quad \bar{v}_i(\tilde{\bx},\bx', \delta_{1:i})=F_{K_i}\left(x'_i+\delta_i|\bar{\bx}^i, \bar{y}^{i}_{1:i-1}\right).
$$
In addition, by the construction of the $\delta_i$'s, $B_{\delta_{1}}(\tilde{\bx})\times B_{\delta_{1:i-1}}(\bx')\subseteq B_{\tilde{\delta}_i(\delta_{i})}(\tilde{\bx})\times B^i_{\tilde{\delta}_i(\delta_{i})}(\bx')$ for all $i\in 1:d$. Therefore,  using \eqref{eq:p-lem:denseB1}-\eqref{eq:p-lem:denseB3}, we have, for all $i\in 1:d$,
\begin{align*}
\bar{v}_i(\tilde{\bx},\bx',\delta_{1:i})-\underline{v}_i(\tilde{\bx},\bx',\delta_{1:i})&=F_{K_i}(x'_i+\delta_i|\bar{\bx}^i,  \bar{y}^i_{1:i-1})-F_{K_i}(x'_i-\delta_i|\underline{\bx}^i,  \underline{y}^i_{1:i-1})\\
&\geq F_{K_i}\left(x'_i+\delta_i|\tilde{\bx},  x'_{1:i-1}\right)-F_{K_i}\left(x'_i-\delta_i|\tilde{\bx},  x'_{1:i-1}\right)-0.5\tilde{K}\delta_i\\
&\geq 0.5\tilde{K}\delta_i\\
&>0.
\end{align*}
Consequently, for all $i\in 1:d$ and for all $(\bx,y_{1:i-1})\in B_{\delta_{1}}(\tilde{\bx})\times  B_{\delta_{1:i-1}}(\bx')$,
$$
\Big[\underline{v}_i(\tilde{\bx},\bx',\delta_{1:i}), \bar{v}_i(\tilde{\bx},\bx',\delta_{1:i})\Big]\subseteq \Big[F_{K_i}(x'_i-\delta_i|\bx, y_{1:i-1}),F_{K_i}(x'_i+\delta_i|\bx, y_{1:i-1})\Big].
$$
Let $\underline{S}_{\delta}=0.5\tilde{K}\delta_1$. Then, this shows that there exists a closed hypercube $\underline{W}(\tilde{\bx},\bx',\delta)$ of side $\underline{S}_{\delta}$
such that
$$
\underline{W}(\tilde{\bx},\bx',\delta)\subseteq F_K\big(\bx,B_{\delta_{1:d}}(\bx')\big)\subseteq F_K\big(\bx,B_{\delta}(\bx')\big),\quad\forall\bx\in B_{v_K(\delta)}(\tilde{\bx})
$$
where we set   $v_K(\delta)=\delta_1$. Note that $v_K(\delta)\in (0,\delta]$ and thus $v_K(\delta)\cvz$ as $\delta\cvz$, as required. In addition, $v_K(\cdot)=\delta_1\circ\dots\delta_d(\cdot)$ is continuous and strictly increasing on $(0,\delta_{\setX}]$ because  the functions $\delta_i(\cdot)$, $i\in 1:d$, are  continuous and strictly increasing on this set. Note also that $v_K(\cdot)$ does not depend on $(\tilde{\bx},\bx')\in\setX^2$.

To conclude the proof, let
\begin{align}\label{eq:p-lem:denseB1_kdelta}
k_{\delta}=\big\lceil t+d-d\log(\underline{S}_{\delta}/3)/\log b\big\rceil
\end{align}
and note that, if $\delta$ is small enough,  $k_{\delta}\geq t+d$ because $\underline{S}_{\delta}\cvz$ as $\delta\cvz$. Let $\bar{\delta}_K$ be the largest value of $\delta'\leq\delta_{\setX}$ such that $k_{\delta'}\geq t+d$. Let $\delta\in(0,\bar{\delta}_K]$ and $t_{\delta,d}\in t:(t+d)$ be such that $(k_{\delta}-t_{\delta,d})/d$ is an integer.  Let $\{E(j,\delta)\}_{j=1}^{b^{k_{\delta}-t_{\delta,d}}}$ be the partition of  $\ui^d$ into elementary intervals of volume $b^{t_{\delta,d}-k_{\delta}}$ so that any closed hypercube of side $\underline{S}_{\delta}$ contains at least one elementary interval $E(j,\delta)$ for a $j\in 1:b^{k_{\delta}-t_{\delta,d}}$. Hence, there exists a $j_{\tilde{\bx},\bx'}\in 1:b^{k_{\delta}-t_{\delta,d}}$ such that 
$$
E(j_{\tilde{\bx},\bx',},\delta)\subseteq \underline{W}(\tilde{\bx},\bx',\delta)\subseteq F_K(\bx,B_{\delta}(\bx')),\quad\forall\bx\in B_{v(\delta)}(\tilde{\bx}).
$$
Let $a\in\mathbb{N}$ and note that, by the properties of $(t,s)$-sequences in base $b$,  the point set $\{\bu^n\}_{n=ab^{k_{\delta}}}^{(a+1)b^{k_{\delta}}-1}$ is a $(t,k_{\delta},d)$-net in base $b$ because $k_{\delta}> t$. In addition, since $k_{\delta}\geq t_{\delta,d}\geq t$, the point set $\{\bu^n\}_{n=ab^{k_{\delta}}}^{(a+1)b^{k_{\delta}}-1}$ is also a $(t_{\delta,d},k_{\delta},d)$-net in base $b$ \citep[][Remark 4.3, p.48]{Niederreiter1992}. Thus, since for $j\in  1:b^{k_{\delta}-t_{\delta,d}}$ the elementary interval $E(j,\delta)$ has volume $b^{t_{\delta,d}-k_{\delta}}$, the point set $\{\bu^n\}_{n=ab^{k_{\delta}}}^{(a+1)b^{k_{\delta}}-1}$ therefore contains exactly $b^{t_{\delta_d}}\geq b^t$  points in $E(j_{\tilde{\bx},\bx',},\delta)$ and the proof is complete.

\section{Proof of Lemma \ref{lem:dense2}\label{p-lem:dense2}}

Using the Lipschitz property of $F_{K_i}(\cdot|\cdot)$ for all $i\in 1:d$, conditions \eqref{eq:p-lem:denseB2} and \eqref{eq:p-lem:denseB3} in the proof of Lemma \ref{lem:dense} hold with $
\tilde{\delta}_i(\delta)=\delta (0.25\tilde{K}/C_K)$, $i\in 1:d$. Hence,  we can take $v_K(\delta)=\delta (0.25\tilde{K}/C_K)^{d}\wedge \delta$ and thus
$\underline{S}_{\delta}= \delta 0.5\tilde{K}\big(1 \wedge (0.25\tilde{K}/C_K)^{d}\big)$.
Then, the expression for $k_{\delta}$ follows using \eqref{eq:p-lem:denseB1_kdelta} while the expression for $\bar{\delta}_{K}\leq 0.5$ results from the condition $k_{\delta}\geq t+d$ for all $\delta\in (0,\bar{\delta}_{K}]$.

\section{Proof of Lemma \ref{lem:convPhi}}\label{p-lem:convPhi}

We first state and prove three technical lemmas:

\begin{lemma}\label{lem:upBound}
Let $\setX=[0,1]^d$ and $K:\setX\rightarrow\mathcal{P}(\setX)$ be a Markov kernel which verifies Assumptions \ref{H:lem:K1}-\ref{H:lem:K2}. Then, for any $\delta\in (0,\bar{\delta}_K]$, with $\bar{\delta}_K$ as in Lemma \ref{lem:dense}, and any $(\tilde{\bx},\bx')\in\setX^2$, there exists a closed hypercube $\bar{W}(\tilde{\bx},\bx',\delta)\subset\ui^d$ of side $\bar{S}_{\delta}=2.5\bar{K}\delta$, with $\bar{K}=\max_{i\in 1:d}\{\sup_{\bx,\by\in \setX}K_i(y_i|y_{1:i-1},\bx)\}$,  such that
\begin{align}\label{eq:p-lem:convPhi_E1}
F_K(\bx,B_{v_K(\delta)}(\bx'))\subseteq  \bar{W}(\tilde{\bx},\bx',\delta),\quad \forall \bx\in B_{v_K(\delta)}(\tilde{\bx})
\end{align}
where $v_K(\cdot)$ is as in Lemma \ref{lem:dense}.
\end{lemma}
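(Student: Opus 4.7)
The plan is to dualize the argument of Lemma \ref{lem:dense}: rather than pulling back an elementary interval to show it lands inside a ball around $\bx'$, I want to push forward the whole ball $B_{v_K(\delta)}(\bx')$ under $F_K(\bx,\cdot)$ and exhibit a small hypercube in $\ui^d$ that traps its image, uniformly over $\bx\in B_{v_K(\delta)}(\tilde{\bx})$. The guiding intuition is that each one-dimensional conditional CDF $F_{K_i}(\cdot|\bx,y_{1:i-1})$ is Lipschitz in $y_i$ with constant $\bar{K}$ (its density is bounded above) and is uniformly continuous in $(\bx,y_{1:i-1})$ with a modulus $\omega_i(\cdot)$ already controlled in the proof of Lemma \ref{lem:dense}. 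Combining these two regularities coordinate by coordinate will yield a sup-norm contraction of order $\delta$ on the image.

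I would begin by recording that $\bar{K}<\infty$: Assumption \ref{H:lem:K2} provides a continuous joint density $K(\by|\bx)$ on the compact set $\setX^2$, so each conditional density $K_i(y_i|y_{1:i-1},\bx)$ is continuous on $[0,1]^{d+i}$ and attains its supremum there. The main computation is then, for $\bx\in B_{v_K(\delta)}(\tilde{\bx})$, $\by\in B_{v_K(\delta)}(\bx')$ and $i\in 1:d$, the triangle-inequality split
$$|F_{K_i}(y_i|\bx,y_{1:i-1})-F_{K_i}(x'_i|\tilde{\bx},x'_{1:i-1})|\leq |F_{K_i}(y_i|\bx,y_{1:i-1})-F_{K_i}(x'_i|\bx,y_{1:i-1})|+|F_{K_i}(x'_i|\bx,y_{1:i-1})-F_{K_i}(x'_i|\tilde{\bx},x'_{1:i-1})|.$$
The mean value theorem bounds the first summand by $\bar{K}|y_i-x'_i|\leq \bar{K}v_K(\delta)\leq \bar{K}\delta$, while the recursive construction $v_K(\delta)=\delta_1\leq \delta_i\leq \tilde{\delta}_i(\delta_{i+1})=\omega_i^{-1}(0.25\tilde{K}\delta_{i+1})$ from the proof of Lemma \ref{lem:dense} bounds the second summand by $\omega_i(v_K(\delta))\leq 0.25\tilde{K}\delta\leq 0.25\bar{K}\delta$. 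Summing over coordinates gives the uniform per-coordinate bound $1.25\bar{K}\delta$.

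It then follows that $F_K(\bx,\by)$ lies within sup-norm distance $1.25\bar{K}\delta$ of the fixed point $F_K(\tilde{\bx},\bx')\in\ui^d$, so the entire image $\bigcup_{\bx\in B_{v_K(\delta)}(\tilde{\bx})}F_K(\bx,B_{v_K(\delta)}(\bx'))$ is contained in the closed axis-aligned box of side $\bar{S}_\delta=2.5\bar{K}\delta$ centered at $F_K(\tilde{\bx},\bx')$. Since the image moreover lies in $\ui^d$ and, for $\delta$ small enough that $\bar{S}_\delta\leq 1$ (which is the implicit constraint fixing the admissible range via $\bar{\delta}_K$), a hypercube of side $\bar{S}_\delta$ does fit inside $\ui^d$, I would if necessary translate the enclosing box towards the interior of $\ui^d$ to produce the desired $\bar{W}(\tilde{\bx},\bx',\delta)\subset\ui^d$ of side exactly $\bar{S}_\delta$ still containing the image. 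The only bookkeeping subtlety is invoking the correct upper bound for the modulus-of-continuity term from the construction of $v_K(\delta)$; the rest is a direct Lipschitz-plus-continuity computation and a harmless boundary translation.
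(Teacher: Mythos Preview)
Your proposal is correct and follows essentially the same approach as the paper: both combine the Lipschitz bound $|F_{K_i}(y_i|\cdot)-F_{K_i}(x'_i|\cdot)|\leq \bar{K}|y_i-x'_i|$ coming from the upper bound on the conditional densities with the modulus-of-continuity control $\omega_i(v_K(\delta))\leq 0.25\tilde{K}\delta$ inherited from the recursive construction $v_K(\delta)=\delta_1\leq\delta_i\leq\tilde{\delta}_i(\delta_{i+1})$ in the proof of Lemma~\ref{lem:dense}. The only cosmetic difference is that the paper organizes the computation via $\inf$/$\sup$ of $F_{K_i}(x'_i\mp\delta_i|\cdot)$ over $B_{v_K(\delta)}(\tilde{\bx})\times B_{\delta_{1:i-1}}(\bx')$ (mirroring Lemma~\ref{lem:dense} exactly) and then shrinks $B_{\delta_{1:d}}(\bx')$ to $B_{v_K(\delta)}(\bx')$ at the end, whereas you center the enclosing cube explicitly at $F_K(\tilde{\bx},\bx')$ via a triangle inequality---a slightly cleaner but equivalent packaging of the same two estimates.
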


\begin{proof}
The proof of Lemma \ref{lem:upBound} is similar to the proof of Lemma \ref{lem:dense}. Below,  we use the same notation as in this latter.

 Let $\delta\in (0,\bar{\delta}_K]$, $(\tilde{\bx},\,\bx')\in\setX^2$ and note that, for any $(\bx,\by)\in\setX^2$, 
\begin{align}\label{eq:p-lem:convPhiB11}
F_{K_i}(x'_i+\delta|\bx,y_{1:i-1})-F_{K_i}(x'_i-\delta|\bx,y_{1:i-1})\leq 2\bar{K}\delta,\quad i\in 1:d.
\end{align}
Let $0<\delta_1\leq\dots\leq\delta_d\leq \delta$  be as in the proof of Lemma \ref{lem:dense} and, for $i\in 1:d$, define 
$$
\underline{u}_i(\tilde{\bx},\bx',\delta_{1:i})=\inf_{(\bx,\,\by)\in B_{v_K(\delta)}(\tilde{\bx}),\times B_{\delta_{1:i-1}}(\bx')}F_{K_i}(x'_i-\delta_i|\bx,y_{1:i-1})
$$
and
$$
\bar{u}_i(\tilde{\bx},\bx', \delta_{1:i})=\sup_{(\bx,\,\by)\in B_{v_K(\delta)}(\tilde{\bx}),\times B_{\delta_{1:i-1}}(\bx')}F_{K_i}(x'_i+\delta_i|\bx,y_{1:i-1}).
$$

Let $i\in 1:d$ and $(\underline{\bx}^i,\underline{\by}^i),(\bar{\bx}^i,\bar{\by}^i)\in B_{v_K(\delta)}(\tilde{\bx})\times B_{\delta_{1:i-1}}(\bx')$  be such that
$$
\underline{u}_i(\tilde{\bx},\bx',\delta_{1:i})=F_{K_i}(x'_i-\delta_i|\underline{\bx}^i,\underline{y}^i_{1:i-1}),\quad \bar{u}_i(\tilde{\bx},\bx',\delta)=F_{K_i}(x'_i+\delta_i|\bar{\bx}^i,\bar{y}^i_{1:i-1}).
$$
Therefore, using \eqref{eq:p-lem:denseB2}, \eqref{eq:p-lem:denseB3} and \eqref{eq:p-lem:convPhi_E1}, we have, $\forall i\in 1:d$,
\begin{align*}
0<\bar{u}_i(\tilde{\bx},\bx',\delta_{1:i})-\underline{u}_i(\tilde{\bx},\bx',\delta_{1:i})&=F_{K_i}(x'_i+\delta_i|\bar{\bx}^i,\bar{y}^i_{1:i-1})-F_{K_i}(x'_i-\delta_i|\underline{\bx}^i,\underline{y}^i_{1:i-1})\\
&\leq F_{K_i}(x'_i+\delta_i|\tilde{\bx},x'_{1:i-1})-F_{K_i}(x'_i-\delta_i|\tilde{\bx},x'_{1:i-1})+0.5\tilde{K}\delta_i\\
&\leq \delta_i(2\bar{K}+0.5\tilde{K})\\
&\leq 2.5\delta_i\bar{K}
\end{align*}
where $\tilde{K}\leq \bar{K}$ is as in the proof of Lemma \ref{lem:dense}. (Note that $\bar{u}_i(\tilde{\bx},\bx',\delta_{1:i})-\underline{u}_i(\tilde{\bx},\bx',\delta_{1:i})$ is indeed strictly positive because $F_{K_i}(\cdot|\bx,y_{1:i-1},)$ is strictly increasing on $[0,1]$ for any $(\bx,\by)\in\setX^2$ and because $\delta_i>0$.)

This shows that for all $\bx\in B_{v_K(\delta)}(\tilde{\bx})$ and for all $\by\in B_{\delta_{1:i-1}}(\bx')$, we have
$$
\big[F_{K_i}(x'_i-\delta_i|\bx,\by),F_{K_i}(x'_i+\delta_i|\bx,\by)\big]\subseteq \big[\underline{u}_i(\tilde{\bx},\bx',\delta),\bar{u}_i(\tilde{\bx},\bx',\delta)\big],\quad\forall i\in 1:d
$$
and thus there exists   a closed hypercube $\bar{W}(\tilde{\bx},\bx',\delta)$ of side $\bar{S}_{\delta}=2.5\delta\bar{K}$ such that 
$$
F_K(\bx,B_{\delta_{1:i-1}}(\bx'))\subseteq  \bar{W}(\tilde{\bx},\bx',\delta),\quad \forall \bx\in B_{v_K(\delta)}(\tilde{\bx}).
$$
To conclude the proof of Lemma \ref{lem:upBound}, note that, because $v_K(\delta)\leq\delta_i$ for all $i\in 1:d$,
$$
F_K(\bx,B_{v_K(\delta)}(\bx'))\subseteq F_K(\bx,B_{\delta_{1:i-1}}(\bx')).
$$
\end{proof}

\begin{lemma}\label{lem:E_Set}
Consider the set-up of Lemma \ref{lem:convPhi} and, for $(p,a,k)\in \mathbb{N}_+^3$, let
\begin{align*}
E^{p}_{a,k}&=\Big\{\exists n\in \{ab^{k},\dots,(a+1)b^{k}-1\}:\ \bx^{n}\neq  \bx^{ab^{k}-1},\,\,\varphi(\bx^{ab^k-1})<\varphi^*\Big\}\\
&\cap \Big\{\forall n\in \{ab^{k},\dots,(a+1)b^{k}-1\}:\, \bx^{n}
\in(\setX_{\varphi(\bx^{ab^k-1})})_{2^{-p}}\Big\}.
\end{align*}
Then, for all for all $k\in\mathbb{N}$, there exists a $p^*_k\in\mathbb{N}$ such that $\Prob\big(\bigcap_{a\in\mathbb{N}}E^{p}_{a,k}\big)=0$ for all $p\geq p^*_k$.
\end{lemma}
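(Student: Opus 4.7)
The plan is to bound the one-block conditional probability $\Prob(E^p_{a,k}\mid \mathcal{F}_{ab^k-1})$ by a quantity $q_{p,k}$ that can be made arbitrarily small by enlarging $p$, and then iterate this bound in $a$. Here $\mathcal{F}_n$ denotes the natural filtration generated by $\bz^{1:n}$ (equivalently $\bu_R^{1:n}$), so that the deterministic base point of $\bu_R^n$ is $\mathcal{F}_{n-1}$-measurable while $\bz^n$ is independent of $\mathcal{F}_{n-1}$ and uniform on $\ui^d$.

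The first step is the reduction: on $E^p_{a,k}$ there must exist $n\in\{ab^k,\dots,(a+1)b^k-1\}$ at which $\by^n$ is accepted, and every accepted proposal must lie in $A_a:=(\setX_{\varphi(\bx^{ab^k-1})})_{2^{-p}}$, a set that is $\mathcal{F}_{ab^k-1}$-measurable. Dropping the acceptance constraint and applying a union bound gives
\[
\Prob(E^p_{a,k}\mid \mathcal{F}_{ab^k-1})\;\leq\;\sum_{n=ab^k}^{(a+1)b^k-1}\Prob(\by^n\in A_a\mid \mathcal{F}_{ab^k-1}).
\]
For each term, I would condition further on $\mathcal{F}_{n-1}$. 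Because $\bu_R^n\mid\mathcal{F}_{n-1}$ is uniform on a $b$-ary box of volume $b^{-dR}$, and because Assumption \ref{H:lem:K2} together with continuity of $K$ on the compact set $\setX^2$ gives a uniform upper bound $\bar K<\infty$ on $K(\by\mid\bx)$, I obtain
\[
\Prob(\by^n\in A_a\mid \mathcal{F}_{n-1})=b^{dR}\!\!\int_{A_a\cap F_K^{-1}(\bx^{n-1},E_n)}\!\!K(\by\mid\bx^{n-1})\,d\by\;\leq\;\bar K\,b^{dR}\,\lambda_d(A_a).
\]
The Minkowski-content assumption, combined with compactness of $\setX$ and continuity of $\varphi$, then yields constants $p_0$ and $C_{M^*}$ with $\lambda_d(A_a)\leq C_{M^*}2^{-p}$ for all $p\geq p_0$, uniformly in the (random) choice of $\bx^{ab^k-1}$. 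Summing $b^k$ terms produces the bound $q_{p,k}:=b^k\bar K b^{dR}C_{M^*}2^{-p}$, which tends to $0$ as $p\to\infty$.

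To conclude, pick $p^*_k\geq p_0$ so that $q_{p,k}<1$ for all $p\geq p^*_k$. Since $E^p_{a,k}\in\mathcal{F}_{(a+1)b^k-1}$, iterating the tower property gives
\[
\Prob\!\Big(\bigcap_{a=0}^{N}E^p_{a,k}\Big)=\E\Big[\ind_{\bigcap_{a=0}^{N-1}E^p_{a,k}}\,\Prob(E^p_{N,k}\mid\mathcal{F}_{Nb^k-1})\Big]\leq q_{p,k}\,\Prob\!\Big(\bigcap_{a=0}^{N-1}E^p_{a,k}\Big)\leq q_{p,k}^{N+1},
\]
and letting $N\to\infty$ delivers $\Prob(\bigcap_{a\in\mathbb{N}}E^p_{a,k})=0$.

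The main obstacle will be justifying the uniform Minkowski-content bound $\lambda_d((\setX_{\varphi(\bx)})_{2^{-p}})\leq C_{M^*}2^{-p}$ that holds for all sufficiently large $p$ uniformly over $\bx$ with $\varphi(\bx)<\varphi^*$; the pointwise definition of Minkowski content only gives finiteness of the limit, so some uniformisation argument (using continuity of $\varphi$ and compactness of the relevant sublevel set of $\setX$, or a covering/semicontinuity argument on $M(\cdot)$) is needed. Everything else is essentially a union bound combined with the uniform density bound arising from the finite randomisation depth $R\in\mathbb{N}$ of the $(t,d)_R$-sequence; note how the argument crucially fails at $R=\infty$, matching the fact that Theorem \ref{thm:conv} is stated only for $R\in\mathbb{N}$.
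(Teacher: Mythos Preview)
Your argument is correct and, in fact, more direct than the paper's. Both proofs reduce the claim to a one--block bound of the form $\Prob(E^p_{a,k}\mid\mathcal F_{ab^k-1})\leq C\,b^k\,2^{-p}$ and then iterate using the independence of the $\bz^n$'s across blocks (the paper phrases this as $\Prob(\cap_{j=a}^{a+m}E^p_{j,k})\leq \rho^m$, which is equivalent to your tower-property recursion). The difference lies in how the one--block bound is obtained. The paper first discretises $\setX$ into a grid of cubes of side $\delta_\epsilon=2^{-p-1}$, uses the Minkowski-content hypothesis to show that at most $\bar C\,\delta_\epsilon^{-(d-1)}$ such cubes meet $(\setX_l)_\epsilon$, and then invokes Lemma~\ref{lem:upBound} to map each ordered pair of cubes to a cube of side $\bar S_{\delta_\epsilon}$ in $u$-space; counting the resulting $b$-ary boxes and using the $(t,d)_R$ structure gives the bound. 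You bypass all of this by observing that, conditionally on $\mathcal F_{n-1}$, $\by^n$ has Lebesgue density at most $\bar K\,b^{dR}$ on $\setX$, so $\Prob(\by^n\in A_a\mid\mathcal F_{n-1})\leq \bar K\,b^{dR}\lambda_d(A_a)$ with $\lambda_d(A_a)\leq C_{M^*}2^{-p}$. This avoids Lemma~\ref{lem:upBound} entirely and is the more elementary route; the paper's detour through $u$-space cubes is not needed for this lemma (though the machinery is reused elsewhere, e.g.\ in the proof of Theorem~\ref{thm:conv_Univ}).

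On the obstacle you flag --- the uniform-in-$\bx$ upgrade of the Minkowski-content limit to a bound $\lambda_d((\setX_{\varphi(\bx)})_{2^{-p}})\leq C_{M^*}2^{-p}$ valid for all $p\geq p_0$ --- the paper faces exactly the same issue and simply asserts the existence of a level-independent threshold $\epsilon^*$ (``for any fixed $w>1$ there exists a $\epsilon^*\in(0,1)$ (independent of $l$) such that $\lambda_d((\setX_l)_\epsilon)\leq w\bar M\epsilon$\ldots''). So your caveat is well placed, but it is not a gap relative to the original proof.
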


\begin{proof}

Let $\epsilon>0$, $a\in\mathbb{N}$ and $l\in\mathbb{R}$ be such that  $l<\varphi^*$, and for $k\in\mathbb{N}$, let $E(k)=\{E(j,k)\}_{j=1}^{k^d}$ be the splitting of $\setX$ into closed hypercubes of volume $k^{-d}$.

Let $p'\in\mathbb{N}_+$, $\delta=2^{-p'}$   and  $P^l_{\epsilon,\delta}\subseteq E(\delta)$ be the smallest coverage of $(\setX_{l})_{\epsilon}$ by hypercubes in $E(\delta)$; that is, $|P^l_{\epsilon,\delta}|$ is the smallest value in $1:\delta^{-d}$ such that  $(\setX_{l})_{\epsilon}\subseteq \cup_{W\in P^l_{\epsilon,\delta}}$. Let $J^l_{\epsilon,\delta}\subseteq 1:\delta^{-d}$ be such that $j\in J^l_{\epsilon,\delta}$ if and only if $E(j,\delta)\in P^l_{\epsilon,\delta}$. We now bound $|J^l_{\epsilon,\delta}|$ following the same idea as in \citet{He2015}. 

By assumption, there exists a constant $\bar{M}<\infty$ independent of $l$ such that $M(\setX_{l})\leq \bar{M}$.
Hence, for any fixed $w>1$ there exists a $\epsilon^*\in (0,1)$ (independent of $l$) such that $\lambda_d\big((\setX_{l})_{\epsilon}\big)\leq w M(\setX_{l})\epsilon\leq  w \bar{M}\epsilon$ for all $\epsilon\in(0,\epsilon^*]$. Let $\epsilon=2^{-p}$, with $p\in\mathbb{N}$ such that $2^{-p}\leq 0.5\epsilon^*$, and take $\delta_{\epsilon}=2^{-p-1}$. Then, we have the inclusions
$(\setX_{l})_{\epsilon}\subseteq \cup_{W\in P^l_{\epsilon,\delta_{\epsilon}}} \subseteq (\setX_{l})_{2\epsilon}$ and therefore, since $2\epsilon\leq\epsilon^*$,
\begin{align}\label{eq:delta}
|J^l_{\epsilon,\delta_{\epsilon}}|\leq \frac{\lambda_d\big((\setX_{l})_{2\epsilon}\big)}{\lambda_d(E(j,\delta_{\epsilon}))}\leq\frac{w \bar{M} (2\epsilon)^{d}}{\delta_{\epsilon}^{d}}\leq  \bar{C}\delta_{\epsilon}^{-(d-1)},\quad \bar{C}:=w \bar{M} 2^d
\end{align}
where  the right-hand side is independent of $l$.

Next, for $j\in J^l_{\epsilon,\delta_{\epsilon}}$, let $\bar{\bx}^j$ be the center of $E(j,\delta_{\epsilon})$ and $W^l(j,\delta_{\epsilon})=\cup_{j'\in J^l_{\epsilon,\delta_{\epsilon}}} \bar{W}(\bar{\bx}^j,\bar{\bx}^{j'},\delta_{\epsilon})$, with $\bar{W}(\cdot,\cdot,\cdot)$  as in  Lemma \ref{lem:upBound}. Then, using this latter, a necessary condition to move at iteration $n+1$ of Algorithm \ref{alg:SAQMC} from a point $\bx^{n}\in E(j_{n},\delta_{\epsilon})$, with $j_{n}\in J^{l}_{\epsilon,\delta_{\epsilon}}$, to a point $\bx^{n+1}\neq \bx^{n}$ such that $\bx^{n+1}\in E(j_{n+1},\delta_{\epsilon})$ for a $j_{n+1}\in J^{l}_{\epsilon,\delta_{\epsilon}}$ is that $\bu_R^{n+1}\in W^{l}(j_{n},\delta_{\epsilon})$.

Let  $k^{\delta_{\epsilon}}$ be the largest  integer  such that (i) $b^{k}\leq \bar{S}_{\delta_{\epsilon}}^{-d}b^t$, with $\bar{S}_{\delta_{\epsilon}}=2.5\bar{K}\delta_{\epsilon}$, $\bar{K}<\infty$, as in Lemma \ref{lem:upBound}, and (ii)  $(k-t)/d$ is a positive integer (if necessary reduce $\epsilon$ to fulfil this last condition). 
Let $E'(\delta_{\epsilon})=\{E'(k,\delta_{\epsilon})\}_{k=1}^{b^{k^{\delta_{\epsilon}}-t}}$  be the partition of $\ui^d$ into hypercubes of volume $b^{t-k^{\delta_{\epsilon}}}$. Then, for all $j\in J^l_{\epsilon,\delta_{\epsilon}}$, $W^l(j,\delta_{\epsilon})$ is covered by at most $2^d|J^l_{\epsilon,\delta_{\epsilon}}|$  hypercubes of $E'(\delta_{\epsilon})$.

Let $\epsilon$ be small enough so that $k^{\delta_{\epsilon}}>t+dR$. Then, using the properties of $(t,s)_R$-sequences (see Section \ref{sub:(t-s)_R}), it is easily checked that,  for all $n\geq 0$,
\begin{align}\label{eq:proba}
\Prob\big(\bu_R^{n}\in E'(k,\delta_{\epsilon})\big)\leq b^{t-k^{\delta^{\epsilon}}+dR},\quad \forall k\in 1:b^{k^{\delta_{\epsilon}}-t}.
\end{align}
Thus, using \eqref{eq:delta}-\eqref{eq:proba} and the definition of $k^{\delta_{\epsilon}}$, we obtain, for all $j\in J^l_{\epsilon,\delta_{\epsilon}}$ and $n\geq 0$,
$$
\Prob\big(\bu_R^{n}\in W^l(j,\delta_{\epsilon})\big)\leq 2^d|J^l_{\epsilon,\delta_{\epsilon}}|b^tb^{t-k^{\delta^{\epsilon}}+dR}\leq C^*\delta_\epsilon,\quad C^*=2^d\bar{C}b^{t+1}(2.5\bar{K})^{d}b^{dR}.
$$
Consequently, using the definition of $\epsilon$ and $\delta_{\epsilon}$, and the fact that there exist at most $2^d$ values of $j\in J^l_{\epsilon,\delta_{\epsilon}}$ such that, for $n\in\mathbb{N}$, we have $\bx^{n}\in E(j,\delta_{\epsilon})$, we deduce that, for  a $p^*\in\mathbb{N}$ large enough (i.e. for  $\epsilon=2^{-p^*}$  small enough)
$$
\Prob\big(E^{p}_{a,k}|\, \varphi(\bx^{ab^k-1})=l\big)\leq b^{k} 2^dC^* 2^{-p-1},\quad\forall (a,k)\in\mathbb{N}^2,\quad \forall l<\varphi^*,\quad\forall p\geq p^*
$$
implying that, for $p\geq p^*$,
$$
\Prob\big(E^{p}_{a,k}\big)\leq b^{k} 2^dC^* 2^{-p-1},\quad\forall (a,k)\in\mathbb{N}^2.
$$
Finally,  because the uniform random numbers $\bz^n$'s in $\ui^s$ that enter into the definition of $(t,s)_R$-sequences are IID, this shows that
$$
\Prob\big(\cap_{j=a}^{a+m}E_{j,k}^p\big)\leq (b^{k}2^dC^*2^{-p-1})^m,\quad\forall(a,m,k)\in\mathbb{N}^3,\quad \forall p\geq p^*.
$$ 
To conclude the proof, for $k\in\mathbb{N}$ let $\rho_k\in (0,1)$ and $p^*_k\geq p^*$ be such that 
$$
b^{k}2^dC^*2^{-p-1}\leq\rho_k,\quad\forall p>p^*_k.
$$
Then, $\Prob\big(\cap_{a\in\mathbb{N}} E^p_{a,k})=0$ for all $p\geq p^*_k$, as required.
\end{proof}

\begin{lemma}\label{lem:part_2}
Consider the set-up of Lemma \ref{lem:convPhi}. For $k\in\mathbb{N}$, let $\tilde{E}(dk)=\{\tilde{E}(j,k)\}_{j=1}^{b^{dk}}$ be the partition of $\ui^d$ into hypercubes of volume $b^{-dk}$. Let $k^{R} \in (dR+t):(dR+t+d)$ be the smallest integer $k$ such  $(k-t)/d$ is an integer and such that $(k-t)/d\geq R$ and, for $m\in\mathbb{N}$, let $I_m=\{mb^{k^R},\dots,(m+1)b^{k^R}-1\}$.
 Then, for any $\delta\in(0,\bar{\delta}_K]$ verifying $k_\delta> t+d+dR$ (with $\bar{\delta}_K$ and $k_\delta$ as in Lemma \ref{lem:dense}), there exists a $p(\delta)>0$ such that
$$
\Prob\big(\exists n\in I_m:\,\, \bu_R^{n}\in \tilde{E}(j, k_{\delta}-t_{\delta,d})\big)\geq  p(\delta),\quad \forall j\in 1: b^{ k_{\delta}-t_{\delta,d}},\quad\forall m\in\mathbb{N}
$$
where $t_{\delta,d}\in t:(t+d)$ is such that $(k_{\delta}-t_{\delta,d})/d\in\mathbb{N}$.
\end{lemma}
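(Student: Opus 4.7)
The plan is to exploit the two-layer structure of $(t,d)_R$-sequences: the truncated prefix of each $\bu_R^n$ pins it inside a specific $b$-ary hypercube of side $b^{-R}$, while the random suffix then distributes $\bu_R^n$ uniformly within that hypercube. I would locate a deterministic net point whose $b^{-R}$-cell contains the target $\tilde{E}(j, k_\delta - t_{\delta,d})$, and read off a strictly positive lower bound from the uniform distribution on that cell.

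First, I record two basic facts about the block $I_m$. By the defining property of $(t,d)$-sequences applied to $(\bu_\infty^n)_{n\geq 0}$, together with $|I_m| = b^{k^R}$ and $k^R \geq t$, the set $\{\bu_\infty^n\}_{n \in I_m}$ is a $(t, k^R, d)$-net in base $b$. Second, the identity
\[
u_{R,i}^n = \sum_{k=1}^R a_{ki}^n b^{-k} + b^{-R} z_i^n
\]
with the $z_i^n$'s IID uniform on $\ui$ implies that, for every $n$, the random vector $\bu_R^n$ is uniformly distributed on the unique hypercube $H_n$ of side $b^{-R}$, in the axis-aligned partition of $\ui^d$ into such cells, that contains $\bu_\infty^n$.

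Next I fix a target cell $\tilde{E} := \tilde{E}(j, k_\delta - t_{\delta,d})$, a hypercube of side $b^{-(k_\delta - t_{\delta,d})}$. The hypothesis $k_\delta > t+d+dR$ combined with $t_{\delta,d} \leq t+d$ yields $k_\delta - t_{\delta,d} > dR \geq R$, so $\tilde{E}$ fits inside a unique cell $H$ of the $b^{-R}$-hypercube partition. The crux is to show that $H$ contains at least one point of $\{\bu_\infty^n\}_{n \in I_m}$. Writing $k^R - t = dR + r$ with $r \in \{0,\ldots,d-1\}$, I partition $H$ into $b^r$ elementary base-$b$ intervals of volume $b^{t-k^R}$ by increasing the dimension-exponent of $r$ of its coordinates from $R$ to $R+1$ (when $r = 0$, $H$ itself already has this volume). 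The $(t, k^R, d)$-net property applied to any one such sub-interval yields $b^t \geq 1$ net points inside $H$; pick any such index $n^* \in I_m$.

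The conclusion is then immediate: since $\bu_R^{n^*}$ is uniform on $H = H_{n^*}$ and $\tilde{E} \subseteq H$,
\[
\Prob(\bu_R^{n^*} \in \tilde{E}) \;=\; \frac{\lambda_d(\tilde{E})}{\lambda_d(H)} \;=\; b^{-d(k_\delta - t_{\delta,d} - R)} \;=:\; p(\delta) \,>\, 0,
\]
which depends only on $\delta$ and not on $j$ or $m$; since $n^* \in I_m$, the existence probability on the left-hand side of the claim is at least $p(\delta)$. The main obstacle is the decomposition argument: when $r > 0$, the cell $H$ is not itself an elementary interval of volume $b^{t-k^R}$, so one must carefully exhibit a splitting of $H$ into elementary base-$b$ boxes of exactly that volume before invoking the net property. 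Everything else is routine bookkeeping with volumes and the definition of $(t,d)_R$-sequences.
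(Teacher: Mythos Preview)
Your proof is correct and follows essentially the same route as the paper's: show that the $(t,k^R,d)$-net $\{\bu_\infty^n\}_{n\in I_m}$ places at least one point in the $b^{-R}$-cell $H$ containing the target $\tilde{E}$, then use the uniform distribution of $\bu_R^{n^*}$ on $H$ to obtain $p(\delta)=\lambda_d(\tilde{E})/\lambda_d(H)>0$. One minor remark: since $R\in\mathbb{N}$ in this setting, the definition of $k^R$ forces $k^R=dR+t$ and hence $r=0$, so the decomposition of $H$ you flag as ``the main obstacle'' never actually arises --- $H$ is itself an elementary interval of volume $b^{t-k^R}$, which is exactly how the paper argues.
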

\begin{proof}

Let $m\in\mathbb{N}$ and  note that, by the properties of $(t,s)_R$-sequence,  the point set $\{\bu_{\infty}^{n}\}_{n\in I_m}$  is a $(t,{k^R},d)$-net in base $b$. Thus, for all $j\in 1:b^{k^R-t}$, this point set contains $b^t$ points in $\tilde{E}(j, k^{R}-t)$  and,  consequently, for all $j\in 1:b^{dR}$, it contains    $b^tb^{k^R-t-dR}=b^{k^R-dR}\geq 1$ points in $\tilde{E}(j,dR)$. This implies that, for all $j\in 1:b^{dR}$, the point set  $\{\bu_R^{n}\}_{n\in I_m}$ contains  $b^{k^R-dR}\geq 1$ points in  $\tilde{E}(j,dR)$ where, for all $ n\in  I_{m_i}$, $\bu_R^{n}$ is uniformly distributed in $\tilde{E}(j_{n},dR)$ for a $j_n\in 1:b^{dR}$. 

In addition, it is easily checked that each hypercube of the set $\tilde{E}(dR)$ contains 
$$
b^{k_{\delta}-t_{\delta,d}-dR}\geq b^{k_{\delta}-t-d-dR}>  1
$$
hypercubes of the set $\tilde{E}(k_{\delta}-t_{\delta,d})$, where $k_{\delta}$ and $t_{\delta,d}$ are as in the statement of the lemma. Note that the last inequality holds because $\delta$ is chosen so that $k_{\delta}> t+d+dR$. Consequently, 
$$
\Prob\big(\exists n\in I_m:\,\, \bu_R^n\in \tilde{E}(j, k_{\delta}-t_{\delta,d})\big)\geq  p(\delta):=b^{dR+t_{\delta,d}-k_{\delta}}>0,\quad \forall j\in 1: b^{ k_{\delta}-t_{\delta,d}}
$$
and the proof is complete.
\end{proof}

\emph{Proof of Lemma \ref{lem:convPhi}:} To prove the lemma we need to introduce some additional notation. Let $\Omega=\ui^{\mathbb{N}}$, $\mathcal{B}(\ui)$ be the Borel $\sigma$-algebra on $\ui$, $\F= \mathcal{B}(\ui)^{\otimes\mathbb{N}}$ and $\P$ be the probability measure on $(\Omega,\F)$ defined by
$$
\P(A)=\prod_{i\in\mathbb{N}}\lambda_1(A_i),\quad (A_1,\dots,A_i\dots)\in \mathcal{B}(\ui)^{\otimes\mathbb{N}}.
$$
Next,  for  $\omega\in\Omega$, we denote by $\big(\bbu_R^n(\omega)\big)_{n\geq 0}$ the sequence  of points in $\ui^{d}$ defined, for all $n\geq 0$, by (using the convention that empty sums are null),
$$
\bbu_R^n(\omega)=\big(U_{R,1}^n(\omega),\dots,U_{R,d}^n(\omega)),\quad U_{R,i}^n(\omega)=\sum_{k=1}^{R}a_{ki}^nb^{-k}+b^{-R}\omega_{n d+i},\quad i\in 1:s.
$$
Note that, under $\P$, $\big(\bbu_R^n\big)_{n\geq 0}$ is a $(t,d)_R$-sequence in base $b$. Finally, for $\omega\in\Omega$, we denote by $\big(\bx^n_\omega\big)_{n\geq 0}$ the sequence of points in $\setX$ generated by Algorithm \ref{alg:SAQMC} when the sequence $\big(\bbu_R^n(\omega)\big)_{n\geq 0}$ is used as input.

Under the assumptions of the lemma there exists a set $\Omega_1\in\F$  such that $\P(\Omega_1)=1$ and
$$
\exists\bar{\varphi}_\omega\in\mathbb{R}\text{ such that }\lim_{n\rightarrow\infty}\varphi\big(\bx^n_\omega\big)=\bar{\varphi}_\omega,\quad \forall\omega\in\Omega_1.
$$

Let $\omega\in\Omega_1$. Since $\varphi$ is continuous, for any $\epsilon>0$  there exists a $N_{\omega, \epsilon}\in\mathbb{N}$ such that $\bx^n_\omega\in(\setX_{\bar{\varphi}_\omega})_{\epsilon}$ for all $n\geq N_{\omega,\epsilon}$, where we recall that  $(\setX_{\bar{\varphi}_\omega})_{\epsilon}=\{\bx\in\setX:\exists\bx'\in\setX_{\bar{\varphi}_\omega}\text{ such that } \|\bx-\bx'\|_{\infty}\leq\epsilon\}$. In addition, because $\varphi$ is continuous and  $\setX$ is compact, there exists an integer $p_{\omega, \epsilon}\in\mathbb{N}$ such that we have both $\lim_{\epsilon\cvz}p_{\omega,\epsilon}=\infty$ and
\begin{align}\label{eq:inclusion}
(\setX_{\bar{\varphi}_\omega})_{\epsilon}\subseteq (\setX_{\varphi(x')})_{2^{-p_{\omega,\epsilon}}},\quad\forall x'\in(\setX_{\bar{\varphi}_\omega})_{\epsilon}.
\end{align}
Next, let $\bx^*\in\setX$ be such that $\varphi(\bx^*)=\varphi^*$, $k^R \in (dR+t):(dR+t+d)$ be as in Lemma \ref{lem:part_2} and, for $(p,a,k)\in\mathbb{N}_+^3$, let 
\begin{align*}
\tilde{E}^{p}_{a,k}&=\Big\{\omega\in\Omega:\,\exists n\in \{ab^{k},\dots,(a+1)b^{k}-1\}:\ \bx_{\omega}^{n}\neq  \bx^{ab^{k}-1}_{\omega},\,\varphi(\bx_{\omega}^{ab^k-1})<\varphi^*\Big\}\\
&\cap \Big\{\omega\in\Omega:\,\forall n\in \{ab^{k},\dots,(a+1)b^{k}-1\}:\bx^{n}_{\omega}\in\big(\setX_{\varphi(\bx^{ab^k-1}_{\omega})}\big)_{2^{-p}}\Big\}.
\end{align*}
Then, by Lemma \ref{lem:E_Set}, there exists a $p^*\in\mathbb{N}$ such that $\P\big(\cap_{a\in\mathbb{N}} \tilde{E}^p_{a,k^R}\big)=0$ for all $p\geq p^*$, and thus the set $\tilde{\Omega}_2=\cap_{p\geq p^*}\big(\setX\setminus \cap_{a\in\mathbb{N}} \tilde{E}^p_{a,k^R}\big)$ verifies $\P(\tilde{\Omega}_2)=1$. Let $\Omega_2=\Omega_1\cap \tilde{\Omega}_2$ so that $\P(\Omega_2)=1$. 

For $\omega\in\Omega_2$ let  $\epsilon_\omega>0$ be small enough so that, for any $\epsilon\in(0,\epsilon_\omega]$, we can take  $p_{\omega,\epsilon}\geq p^*$ in \eqref{eq:inclusion}. Then,  for any $\omega\in\Omega_2$ such that $\bar{\varphi}_\omega<\varphi^*$,  there exists a subsequence $(m_i)_{i\geq 1}$ of $(m)_{m\geq 1}$  such that, for all $i\geq 1$, either 
$$
\bx^{n}_\omega=\bx^{m_ib^{k^R}-1}_\omega,\quad \forall n\in I_{m_i}:=\big\{m_i b^{k^R},\dots, (m_i+1)b^{k^R}-1\big\}
$$
or 
\begin{align}\label{eq:Crit2}
\exists n\in I_{m_i}\text{ such that }\bx^{n}_\omega
\not\in\Big(\setX_{\varphi(\bx^{m_ib^{k^R}-1}_\omega)}\Big)_{2^{-p_{\omega,\epsilon}}}.
\end{align}

Assume first that there exist  infinitely many $i\in\mathbb{N}$ such that \eqref{eq:Crit2} holds. Then,  by \eqref{eq:inclusion}, this leads to a contradiction with  the fact  that $\omega\in\Omega_2\subseteq\Omega_1$. Therefore, for any $\omega\in\Omega_2$ such that $\bar{\varphi}_{\omega}<\varphi^*$ there exists a subsequence $(m_i)_{i\geq 1}$ of $(m)_{m\geq 1}$  such that, for a $i^*$ large enough,
\begin{align}\label{eq:mbar}
\bx_\omega^{n}=\bx_\omega^{m_ib^{k^R}-1},\quad \forall  n\in I_{m_i},\quad\forall i\geq i^*.
\end{align}
Let $
\tilde{\Omega}_2=\{\omega\in\Omega_2:\,\bar{\varphi}_{\omega}<\varphi^*\}\subseteq\Omega_2$ . Then, to conclude the proof, it remains to show that $\P(\tilde{\Omega}_2)=0$. We prove this result by contradiction and thus, from henceforth, we assume $\P(\tilde{\Omega}_2)>0$.

To this end, let $\bx^*\in\setX$ be such that $\varphi(\bx^*)=\varphi^*$, $\bx\in\setX$ and $\delta\in(0,\bar{\delta}_K]$, with $\bar{\delta}_K$ as in Lemma \ref{lem:dense}. Then, using this latter,  a sufficient condition to have $F_K^{-1}(\bx,\bbu_R^{n}(\omega))\in B_{\delta}(\bx^*)$, $n\geq 1$,  is that $\bbu_R^{n}(\omega)\in \underline{W}(\bx,\bx^*,\delta)$, with $\underline{W}(\cdot,\cdot,\cdot)$   as in Lemma \ref{lem:dense}. From the proof of this latter  we know that the hypercube $\underline{W}(\bx,\bx^*,\delta)$ contains at least one hypercube of the set $\tilde{E}(k_{\delta}-t_{\delta,d})$, where $t_{\delta,d}\in t:(t+d)$ is such that $(k_{\delta}-t_{\delta,d})/d\in\mathbb{N}$ and, for $k\in\mathbb{N}$, $\tilde{E}(dk)$ is as in Lemma \ref{lem:part_2}. Hence, by this latter, for any $\delta\in(0,\delta^*]$, with $\delta^*$ such that $k_{\delta^*}>t+d+dR$ (where, for $\delta>0$,  $k_\delta$ is defined in Lemma \ref{lem:dense}), there exists a $p(\delta)>0$  such that
$$
\P\Big(\omega\in\Omega:\, \exists n\in I_m,\, F_K^{-1}\big(\bx, \bbu_R^n(\omega)\big)\in B_{\delta}(\bx^*)\Big)\geq p(\delta),\quad\forall (\bx,m)\in\setX\times\mathbb{N}
$$
and thus, using  \eqref{eq:mbar}, it is easily checked that, for any $\delta\in(0,\delta^*]$,
$$
\P_2\Big(\omega \in\tilde{\Omega}_2: F_K^{-1}\big(\bx_{\omega}^{n-1}, \bbu_R^n(\omega)\big)\in B_{\delta}(\bx^*)\text{ for infinitly many }n\in\mathbb{N}\Big)=1
$$
where $\P_2$ denotes the restriction of $\P$ on $\tilde{\Omega}_2$ (recall that we assume $\P(\tilde{\Omega}_2)>0$).

For $\delta>0$, let
$$
\Omega'_{\delta}=\Big\{\omega \in\tilde{\Omega}_2:\,\, F_K^{-1}(\bx_{\omega}^{n-1}, \bbu_R^n(\omega))\in B_{\delta}(\bx^*)\text{ for infinitly many }n\in\mathbb{N}\Big\}
$$
and let $\tilde{p}^*\in\mathbb{N}$ be such that $2^{-\tilde{p}^*}\leq\delta^*$. Then, the set $\Omega'=\cap_{\tilde{p}\geq   \tilde{p}^*}\Omega'_{2^{-\tilde{p}}}$ verifies $\P_2(\Omega')=1$.

To conclude the proof let $\omega\in \Omega'$. Then, because $\varphi$ is continuous and $\bar{\varphi}_\omega<\varphi^*$, there exists a $\tilde{\delta}_{\bar{\varphi}_\omega}>0$ so that $\varphi(\bx)>\bar{\varphi}$ for all $\bx\in B_{\tilde{\delta}_{\bar{\varphi}_\omega}}(\bx^*)$. Let  $\delta_{\bar{\varphi}_\omega}:=2^{-\tilde{p}_{\omega,\epsilon}}\geq \tilde{\delta}_{\bar{\varphi}_\omega}\wedge\bar{\delta}_K$ for an integer  $\tilde{p}_{\omega,\epsilon}\geq \tilde{p}^*$. Next, take $\epsilon$ small enough  so that  we have both $B_{\delta_{\bar{\varphi}_\omega}}(\bx^*)\cap (\setX_{\bar{\varphi}_\omega})_{\epsilon}=\emptyset$ and $\varphi(\bx)\geq \varphi(\bx')$ for all $(\bx,\bx')\in B_{\delta_{\bar{\varphi}_\omega}}(\bx^*)\times (\setX_{\bar{\varphi}_\omega})_{\epsilon}$. 

Using above computations, the set $ B_{\tilde{\delta}_{\bar{\varphi}_\omega}}(\bx^*)$ is visited infinitely many time and thus $\varphi(\bx_\omega^n)>\bar{\varphi}_\omega$ for infinitely many $n\in\mathbb{N}$, contradicting the fact that $\varphi(\bx_\omega^n)\rightarrow\bar{\varphi}_\omega$ as $n\rightarrow\infty$. Hence, the set $\Omega'$ is empty. On the other hand, as shown above,  under the assumption   $\P(\tilde{\Omega}_2)>0$  we have $\P_2(\Omega')=1$ and, consequently, $\Omega'\neq\emptyset$. Therefore, we must have $\P(\tilde{\Omega}_2)=0$ and the proof is complete.

\section{Proof of Theorem \ref{thm:conv_Univ}\label{p-thm:conv_Univ}}

Using Lemmas \ref{lem:Tn} and \ref{lem:ln}, we know that $\varphi(x^n)\rightarrow\bar{\varphi}\in\mathbb{R}$ and thus it remains to show that $\bar{\varphi}=\varphi^*$.

Assume that $\bar{\varphi}\neq\varphi^*$ and, for $\epsilon=2^{-p}$, $p\in\mathbb{N}_+$, let $N_\epsilon\in\mathbb{N}$, $p_\epsilon$ and $\delta_{\bar{\varphi}}>0$ be as in the proof of Lemma \ref{lem:convPhi} (with the dependence of $N_\epsilon$, $p_\epsilon$ and of  $\delta_{\bar{\varphi}}$ on $\omega\in\Omega$ suppressed in the notation for obvious reasons).

Let $x^*\in\setX$ be a global maximizer of $\varphi$ and  $n=a_nb^{k_{\delta_{\bar{\varphi}}}}-1$ with $a_n\in\mathbb{N}$  such that $n>N_{\epsilon}$. For $k\in\mathbb{N}$, let $E(k)=\{E(j,k)\}_{j=1}^{k}$ be the splitting of $[0,1]$ into closed hypercubes of volume $k^{-1}$. Then, by Lemma \ref{lem:upBound},  a necessary condition to have a move at iteration $n'+1\geq 1$ of Algorithm \ref{alg:SAQMC} from $x^{n'}\in (\setX_{\bar{\varphi}})_{\epsilon}$ to $x^{n'+1}\neq x^{n'}$, $x^{n'+1}\in (\setX_{\bar{\varphi}})_{\epsilon}$ is that 
$$
u_{\infty}^{n'}\in\bar{W}(\epsilon):=\bigcup_{j,j'\in J^{\bar{\varphi}}_{\epsilon, \epsilon/2}} \bar{W}(\bar{x}^j,\bar{x}^{j'},\epsilon/2)
$$
where, for $j\in 1:(\epsilon/2)^{-d}$, $\bar{x}^j$ denotes the center of $E(j,\epsilon/2)$, $J^{\bar{\varphi}}_{\epsilon, \epsilon/2}$ is as in the proof of Lemma \ref{lem:E_Set} and $\bar{W}(\cdot,\cdot,\cdot)$ is as in Lemma \ref{lem:upBound}.  Note that, using \eqref{eq:delta} with $d=1$, $|J^{\bar{\varphi}}_{\epsilon, \epsilon/2}|\leq C^*$ for a constant $C^*<\infty$ (independent of $\epsilon$).

Let $b^{k^{\delta_{\epsilon}}}$ be the largest integer $k\geq t$ such that $b^{t-k}\geq \bar{S}_{\epsilon/2}^{d}$, with $\bar{S}_{\epsilon/2}$ as in Lemma \ref{lem:upBound}, and let $\epsilon$ be small enough so that $b^{k^{\delta_{\epsilon}}}>2^dC^*b^t$. The point set $\{u_{\infty}^{n'}\}_{n'=a_nb^{k^{\delta_{\epsilon}}}}^{
(a_n+1)b^{k^{\delta_{\epsilon}}}-1}$ is a $(t, k^{\delta_{\epsilon}},d)$-net in base $b$ and thus the set $\bar{W}(\epsilon)$ contains at most $2^dC^* b^t$ points of this points set. Hence, if for $n>N_{\epsilon}$ only moves inside the set $(\setX_{\bar{\varphi}})_{\epsilon}$ occur, then, for a $\tilde{n}\in a_nb^{k^{\delta_{\epsilon}}}:\big((a_n+1)b^{k^{\delta_{\epsilon}}}-\eta_{\epsilon}-1)\big)$, the point set $\{x^{n'}\}_{n'=\tilde{n}}^{\tilde{n}+\eta_{\epsilon}}$ is such that $x^{n'}=x^{\tilde{n}}$ for all $n\in \tilde{n}:(\tilde{n}+\eta_{\epsilon})$, where $\eta_{\epsilon}\geq \lfloor \frac{b^{k^{\delta_{\epsilon}}}}{2^dC^*{}^2b^t}\rfloor$; note that $\eta_{\epsilon}\rightarrow\infty$ as $\epsilon\cvz$. 

Let $k^{\epsilon}_0$ be the largest integer which verifies $\eta_{\epsilon}\geq 2b^{k^{\epsilon}_0}$ so that $\{u_{\infty}^n\}_{n=\tilde{n}}^{\tilde{n}+\eta_{\epsilon}}$ contains at least one $(t,k^{\epsilon}_0,d)$-net in base $b$.  Note that $k^{\epsilon}_0\rightarrow\infty$ as $\epsilon\cvz$, and let $\epsilon$ be small enough so that $k^{\epsilon}_0\geq k_{\delta_{\bar{\varphi}}}$, with $k_{\delta}$ as in Lemma \ref{lem:dense}. Then, by Lemma \ref{lem:dense}, there exists at least one $n^*\in(\tilde{n}+1):(\tilde{n}+\eta_{\epsilon})$ such that $\tilde{y}^{n^*}:=F_K^{-1}(x^{\tilde{n}},u_{\infty}^{n^*})\in B_{\delta_{\bar{\varphi}}}(x^*)$. Since,  by the definition of $\delta_{\bar{\varphi}}$, for all $(x,x')\in B_{\delta_{\bar{\varphi}}}(x^*)\times (\setX_{\bar{\varphi}})_{\epsilon}$, and for $\epsilon$ small enough, we have $\varphi(x)>\varphi(x')$, it follows that $\varphi(\tilde{y}^{n^*})>\varphi(x^{\tilde{n}})$. Hence, there exists at least one $n\in\tilde{n}:(\tilde{n}+\eta_{\epsilon})$ such that $x^n\neq x^{\tilde{n}}$, which contradicts the fact that $x^n=\tilde{x}$ for all $n\in \tilde{n}:(\tilde{n}+\eta_{\epsilon})$. This shows that $\bar{\varphi}$ is indeed the global maximum of $\varphi$.

\newpage

\section{Additional figures for the example of Section \ref{sub:toy}\label{app:fig}}

\begin{figure}[h]
\centering
\begin{subfigure}{0.28\textwidth}
\centering
\includegraphics[trim=2cm 0 0 0, scale=0.23]{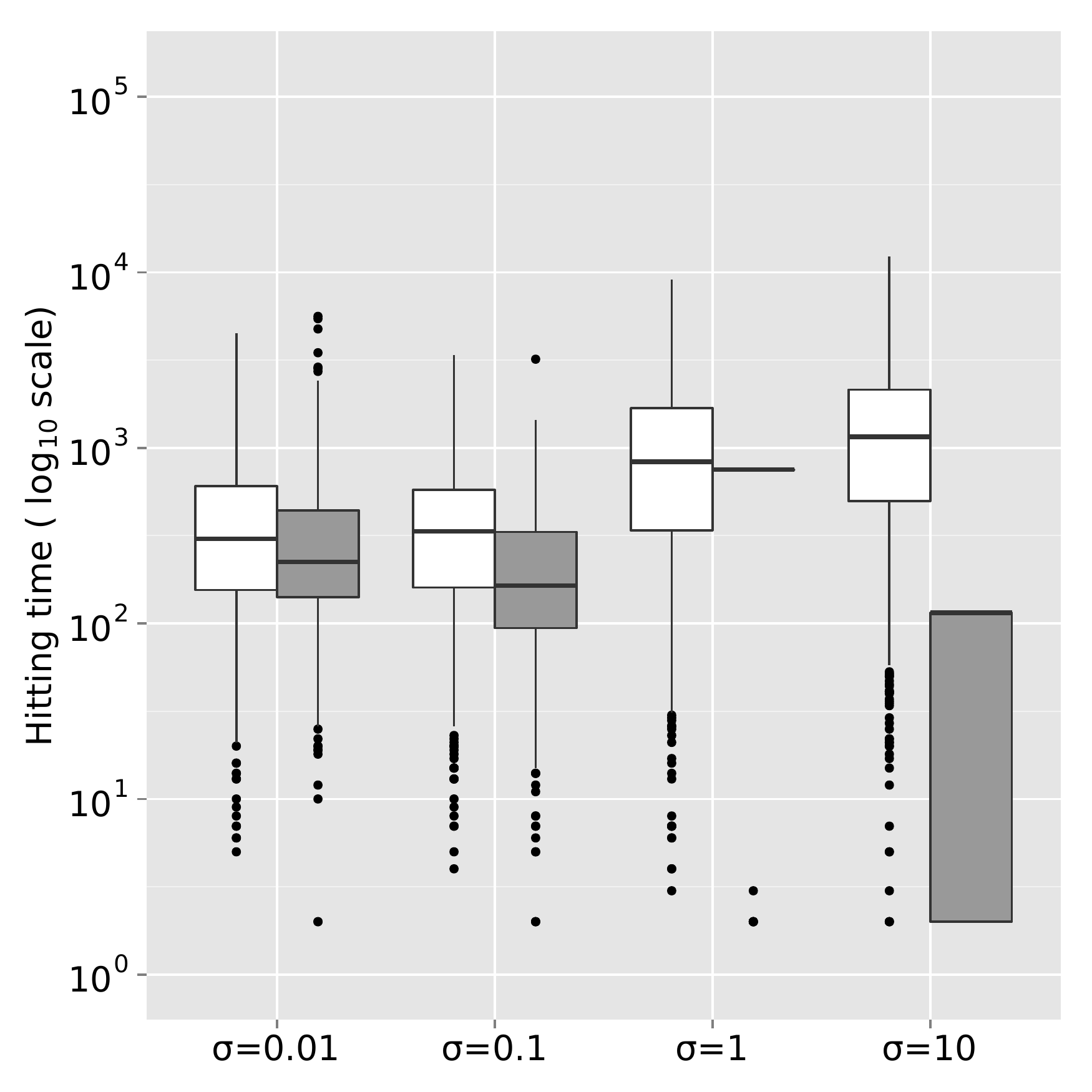}
\caption{\label{Cauchy_QMC1}}
\end{subfigure}
\hspace{1.2cm}
\begin{subfigure}{0.28\textwidth}
\includegraphics[trim=2cm 0 0 0, scale=0.23]{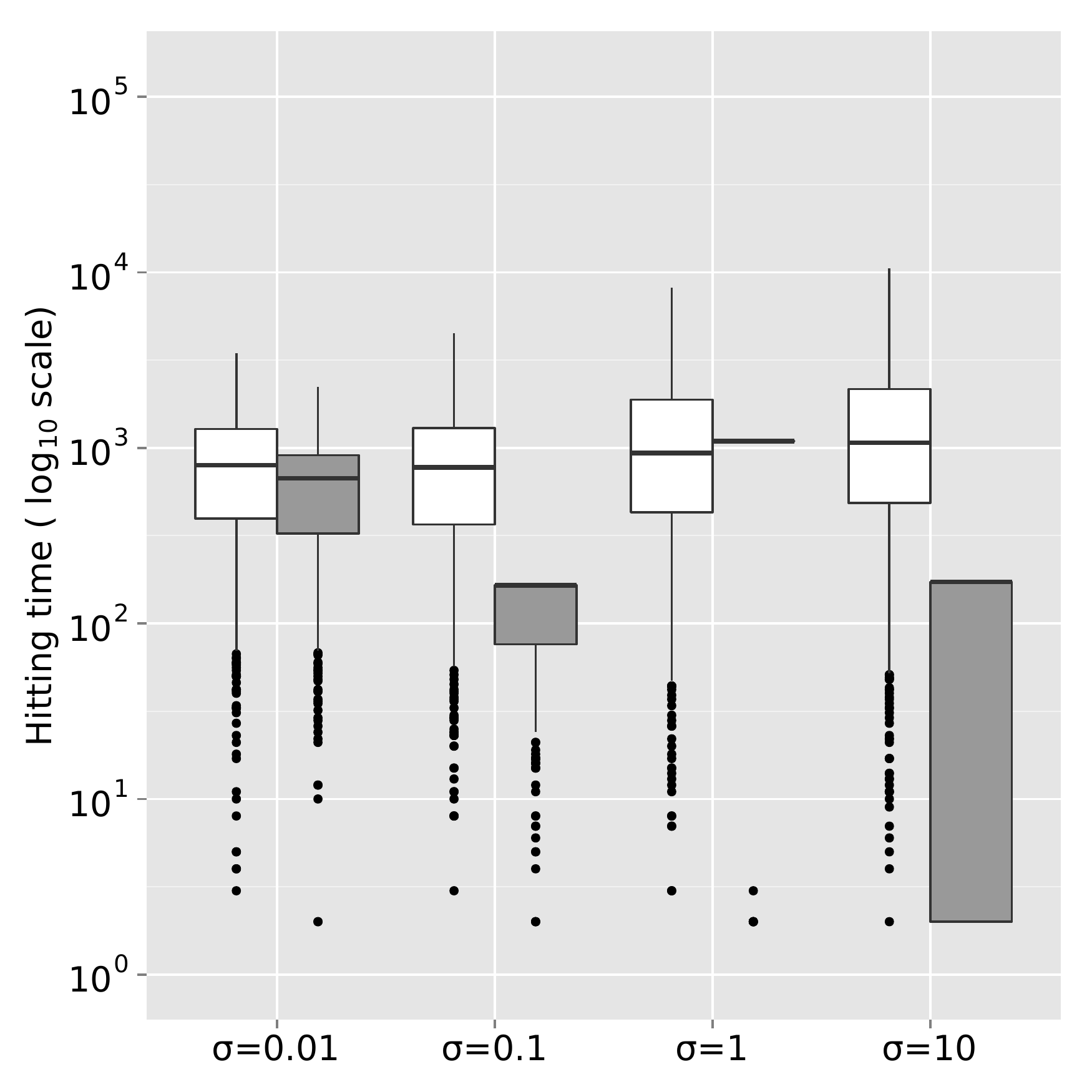}
\caption{\label{Cauchy_QMC3}}
\end{subfigure}

\begin{subfigure}{0.28\textwidth}
\centering
\includegraphics[trim=2cm 0 0 0, scale=0.23]{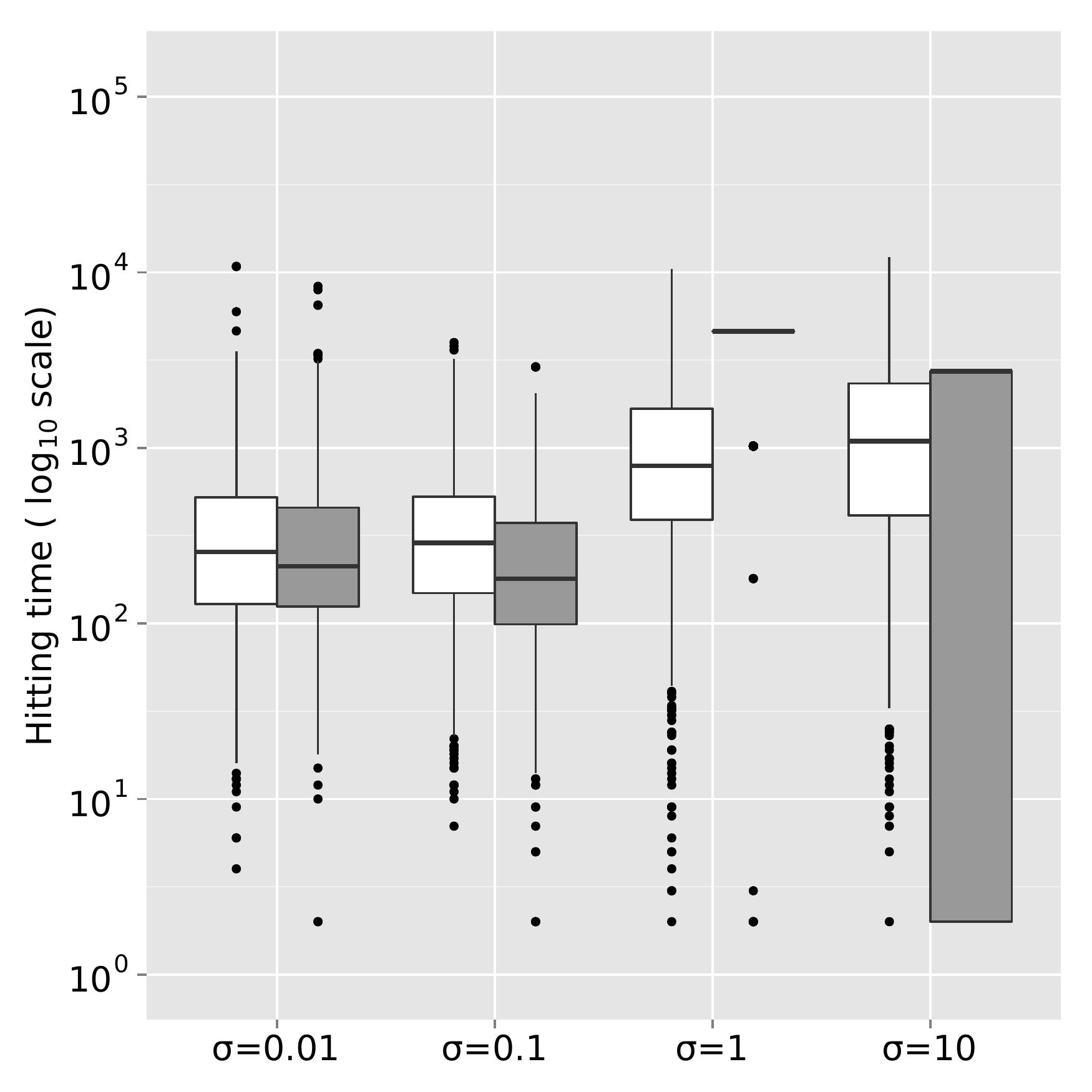}
\caption{\label{Cauchy_Lin1}}
\end{subfigure}
\hspace{1.2cm}
\begin{subfigure}{0.28\textwidth}
\includegraphics[trim=2cm 0 0 0, scale=0.235]{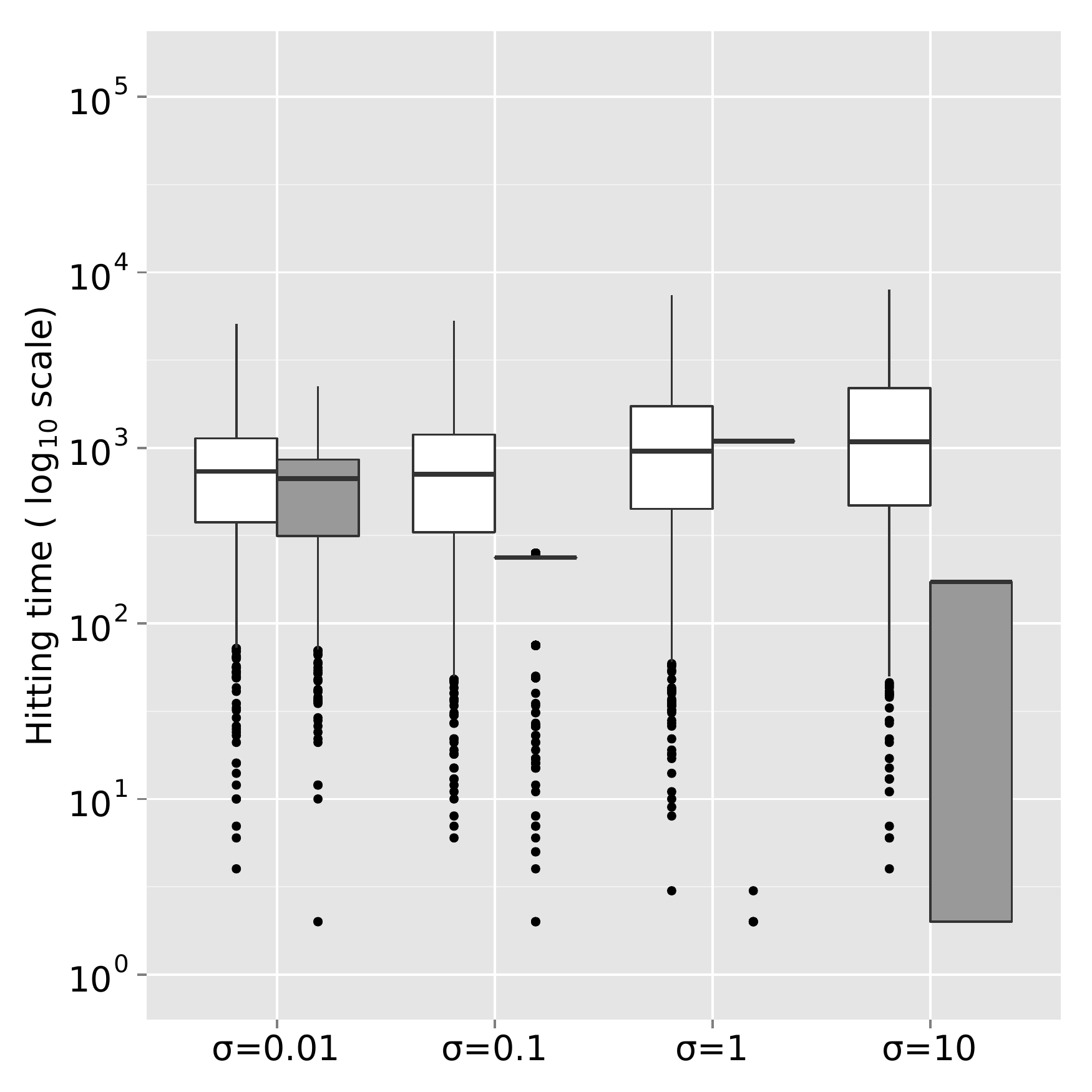}
\caption{\label{Cauchy_Lin3}}
\end{subfigure}

\caption{Minimization of $\tilde{\varphi}_1$ defined by \eqref{eq:toy} for 1\,000  starting values sampled independently and uniformly on $\setX_1$. Results are presented for the Cauchy kernel $(K_n^{(2)})_{n\geq 1}$  with $(T_n^{(1)})_{n\geq 1}$ (top plots) and for $(T_n^{(2)})_{n\geq 1}$. For $m=1,2$, simulations are done for the smallest (left plots) and the highest (right) value of $T_n^{(m)}$  given in \eqref{num:Temp2}. The plots show the minimum number of iterations needed for SA (white boxes) and QMC-SA to find a $\bx\in \setX_1$ such that $\tilde{\varphi}_1(\bx)<10^{-5}$. For each starting value, the Monte Carlo algorithm is run only once and the QMC-SA algorithm is based on the Sobol' sequence.\label{fig:Robert_Cauchy}}
\end{figure}
\newpage

\begin{figure}[h]
\centering
\begin{subfigure}{0.28\textwidth}
\centering
\includegraphics[ scale=0.25]{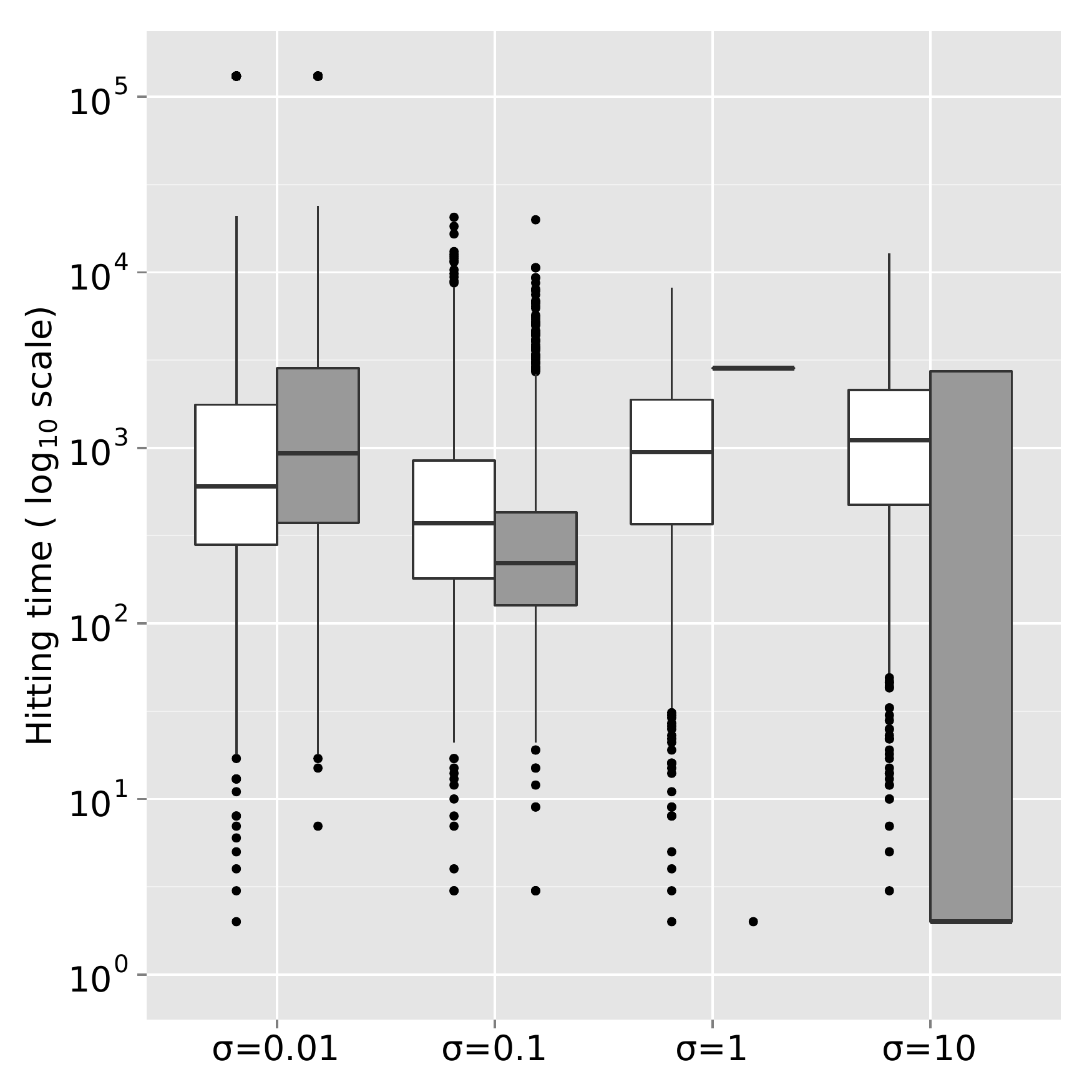}
\caption{\label{Gaussian_QMC1}}
\end{subfigure}
\hspace{1.2cm}
\begin{subfigure}{0.28\textwidth}
\includegraphics[scale=0.25]{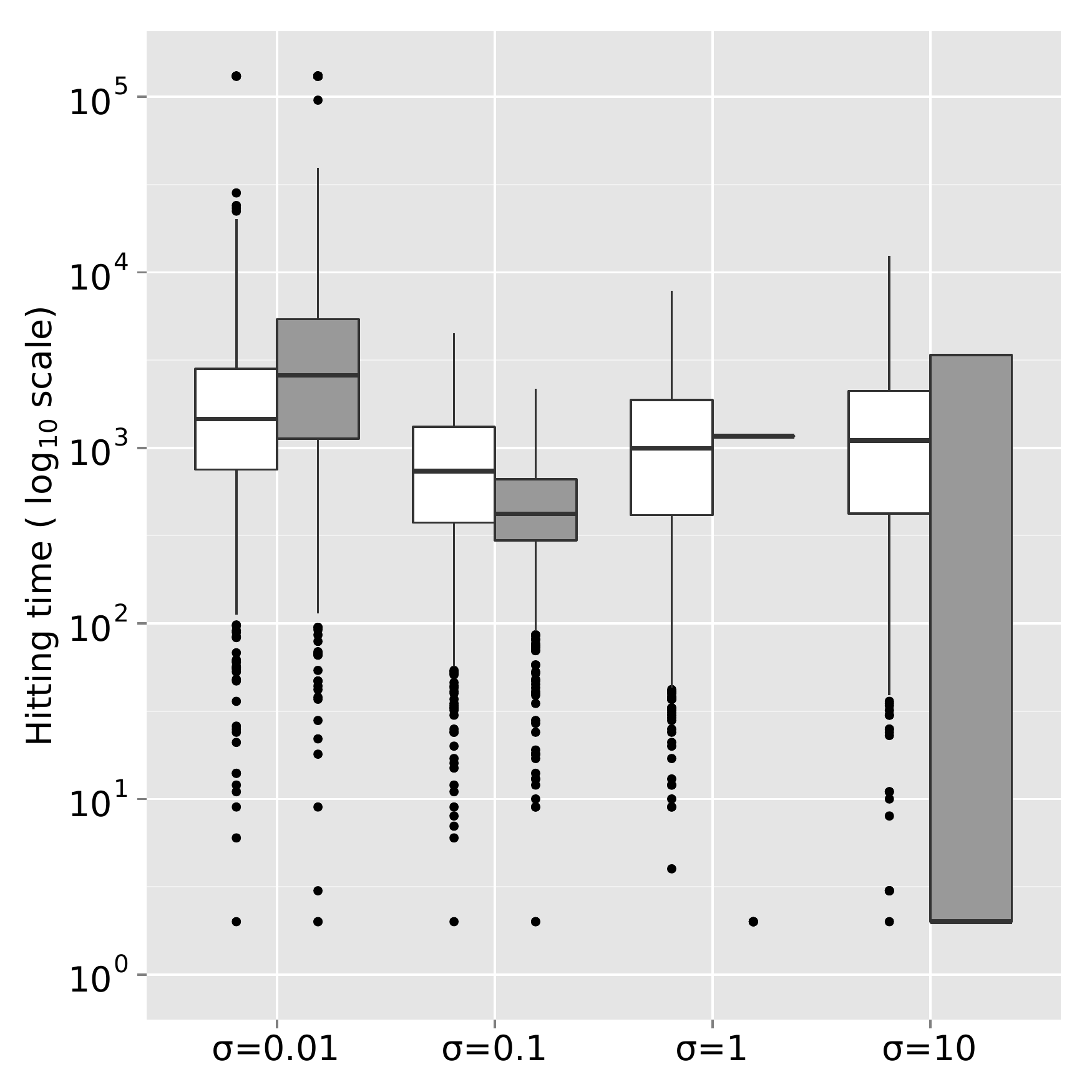}
\caption{\label{Gaussian_QMC3}}
\end{subfigure}

\begin{subfigure}{0.28\textwidth}
\centering
\includegraphics[ scale=0.25]{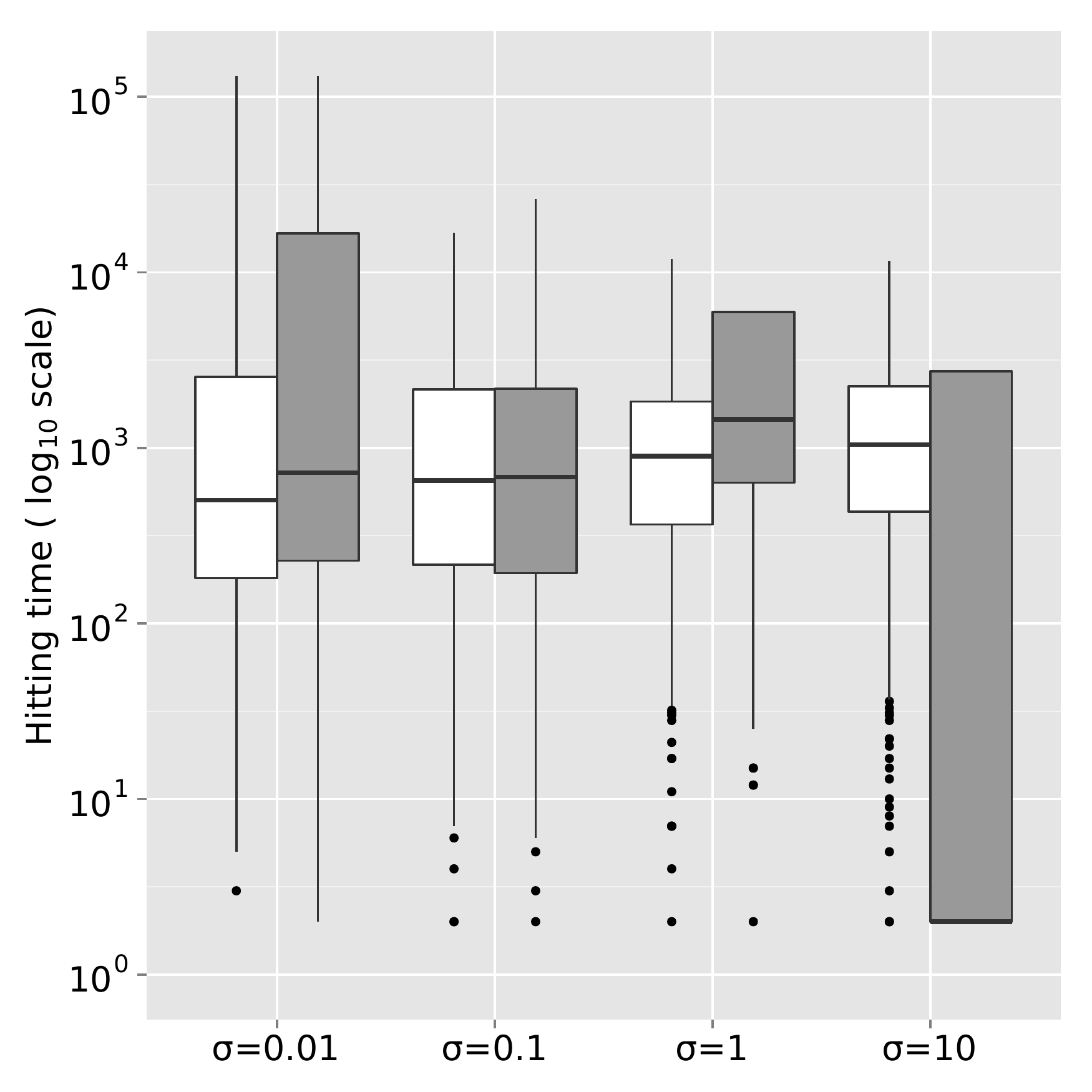}
\caption{\label{Gaussian_Log1}}
\end{subfigure}
\hspace{1.2cm}
\begin{subfigure}{0.28\textwidth}
\includegraphics[ scale=0.25]{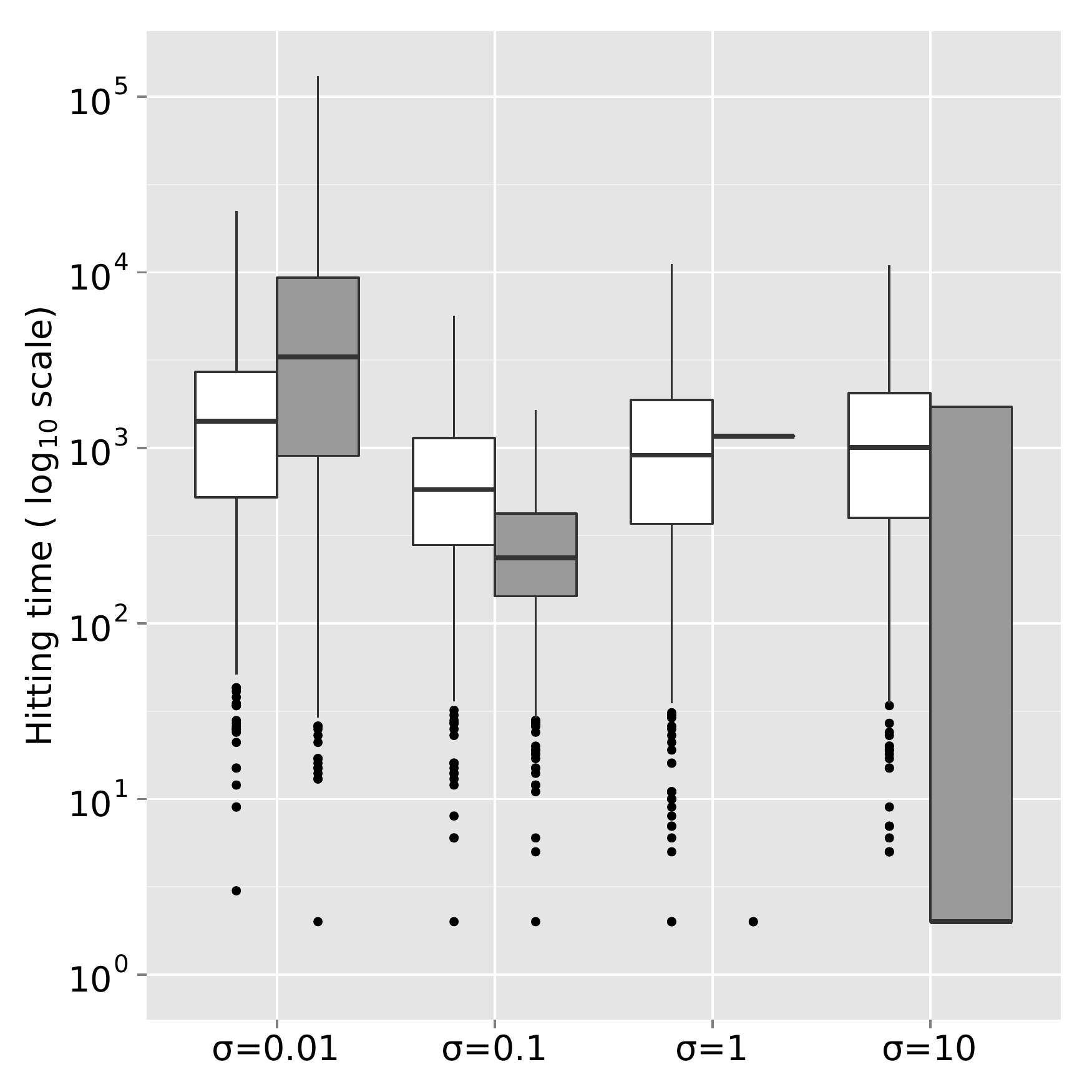}
\caption{\label{Gaussian_Log3}}
\end{subfigure}
\caption{Minimization of $\tilde{\varphi}_1$ defined by \eqref{eq:toy} for 1\,000  starting values sampled independently and uniformly on $\setX_1$. Results are presented for the Gaussian kernel $(K_n^{(3)})_{n\geq 1}$  with $(T_n^{(1)})_{n\geq 1}$ (top plots) and for $(T_n^{(3)})_{n\geq 1}$. For $m=\{1,3\}$, simulations are done for the smallest (left plots) and the highest (right) value of $T_n^{(m)}$  given in \eqref{num:Temp2}. The plots show the minimum number of iterations needed for SA (white boxes) and QMC-SA to find a $\bx\in \setX_1$ such that $\tilde{\varphi}_1(\bx)<10^{-5}$. For each starting value, the Monte Carlo algorithm is run only once and the QMC-SA algorithm is based on the Sobol' sequence.\label{fig:Robert_Gaussian}}
\end{figure}

\newpage
\bibliographystyle{apalike}
\bibliography{complete}

\begin{thebibliography}{}

\bibitem[Alabduljabbar et~al., 2008]{Alabduljabbar2008}
Alabduljabbar, A., Milanovic, J., and Al-Eid, E. (2008).
\newblock Low discrepancy sequences based optimization algorithm for tuning
  psss.
\newblock In {\em Probabilistic Methods Applied to Power Systems, 2008.
  PMAPS'08. Proceedings of the 10th International Conference on}, pages 1--9.
  IEEE.

\bibitem[Alth{\"o}fer and Koschnick, 1991]{althofer1991}
Alth{\"o}fer, I. and Koschnick, K.-U. (1991).
\newblock On the convergence of ``{T}hreshold {A}ccepting''.
\newblock {\em Applied Mathematics and Optimization}, 24(1):183--195.

\bibitem[Andrieu et~al., 2001]{Andrieu2001}
Andrieu, C., Breyer, L.~A., and Doucet, A. (2001).
\newblock Convergence of simulated annealing using {F}oster-{L}yapunov
  criteria.
\newblock {\em J. Appl. Prob.}, 38(4):975--994.

\bibitem[Andrieu and Doucet, 2000]{Andrieu2000}
Andrieu, C. and Doucet, A. (2000).
\newblock Simulated annealing for maximum a posteriori parameter estimation of
  hidden {M}arkov models.
\newblock {\em Information Theory, IEEE Transactions on}, 46(3):994--1004.

\bibitem[B\'elisle, 1992]{Belisle1992}
B\'elisle, C. J.~P. (1992).
\newblock Convergence theorems for a class of simulated annealing algorithms on
  $\mathbb{R}^d$.
\newblock {\em J. App. Prob.}, 29(4):885--895.

\bibitem[Bornn et~al., 2012]{Bornn2012}
Bornn, L., Shaddick, G., and Zidek, J.~V. (2012).
\newblock Modeling nonstationary processes through dimension expansion.
\newblock {\em J. Am. Statist. Assoc.}, 107(497):281--289.

\bibitem[Chen et~al., 2011]{Chen2011maximum}
Chen, J., Suarez, J., Molnar, P., and Behal, A. (2011).
\newblock Maximum likelihood parameter estimation in a stochastic
  resonate-and-fire neuronal model.
\newblock In {\em Computational Advances in Bio and Medical Sciences (ICCABS),
  2011 IEEE 1st International Conference on}, pages 57--62. IEEE.

\bibitem[Chen and Luk, 1999]{Chen1999}
Chen, S. and Luk, B.~L. (1999).
\newblock Adaptive simulated annealing for optimization in signal processing
  applications.
\newblock {\em Signal Processing}, 79(1):117--128.

\bibitem[Dick and Pillichshammer, 2010]{dick2010digital}
Dick, J. and Pillichshammer, F. (2010).
\newblock {\em Digital {N}ets and {S}equences: {D}iscrepancy {T}heory and
  {Q}uasi-{M}onte {C}arlo {I}ntegration}.
\newblock Cambridge University Press.

\bibitem[Dueck and Scheuer, 1990]{dueck1990}
Dueck, G. and Scheuer, T. (1990).
\newblock Threshold accepting: a general purpose optimization algorithm
  appearing superior to simulated annealing.
\newblock {\em Journal of computational physics}, 90(1):161--175.

\bibitem[Fang et~al., 1996]{Fang1996}
Fang, K., Winker, P., and Hickernell, F.~J. (1996).
\newblock Some global optimization algorithms in statistics.
\newblock In Du, D.~Z., Zhang, Z.~S., and Cheng, K., editors, {\em Operations
  Research and Its Applications}, volume~2. World Publishing Corp.
\newblock Lecture Notes in Operations Research.

\bibitem[Fang, 2002]{Fang2002}
Fang, K.-T. (2002).
\newblock Some applications of quasi-{M}onte {C}arlo methods in statistics.
\newblock In {\em Monte Carlo and Quasi-Monte Carlo Methods 2000}, pages
  10--26. Springer.

\bibitem[Gelfand and Mitter, 1991]{Gelfand1991}
Gelfand, S.~B. and Mitter, S.~K. (1991).
\newblock Recursive stochastic algorithms for global optimization in
  $\mathbb{R}^d$.
\newblock {\em SIAM Journal on Control and Optimization}, 29(5):999--1018.

\bibitem[Gelfand and Mitter, 1993]{Gelfand1993}
Gelfand, S.~B. and Mitter, S.~K. (1993).
\newblock Metropolis-type annealing algorithms for global optimization in
  $\mathbb{R}^d$.
\newblock {\em SIAM Journal on Control and Optimization}, 31(1):111--131.

\bibitem[Geman and Hwang, 1986]{Geman1986}
Geman, S. and Hwang, C.-R. (1986).
\newblock Diffusions for global optimization.
\newblock {\em SIAM Journal on Control and Optimization}, 24(5):1031--1043.

\bibitem[Gerber and Chopin, 2015]{SQMC}
Gerber, M. and Chopin, N. (2015).
\newblock Sequential {Q}uasi-{M}onte {C}arlo.
\newblock {\em J. R. Statist. Soc. B}, 77(3):509--579.

\bibitem[Girard et~al., 2001]{Girard2001}
Girard, T., Staraj, R., Cambiaggio, E., and Muller, F. (2001).
\newblock A simulated annealing algorithm for planar or conformal antenna array
  synthesis with optimized polarization.
\newblock {\em Microwave and Optical Technology Letters}, 28(2):86--89.

\bibitem[Goffe et~al., 1994]{Goffe1994}
Goffe, W.~L., Ferrier, G.~D., and Rogers, J. (1994).
\newblock Global optimization of statistical functions with simulated
  annealing.
\newblock {\em J. Econometrics}, 60(1):65--99.

\bibitem[Haario and Saksman, 1991]{Haario1991}
Haario, H. and Saksman, E. (1991).
\newblock Simulated annealing process in general state space.
\newblock {\em Advances in Applied Probability}, pages 866--893.

\bibitem[He and Owen, 2015]{He2015}
He, Z. and Owen, A.~B. (2015).
\newblock Extensible grids: uniform sampling on a space filling curve.
\newblock {\em J. R. Statist. Soc. B}.

\bibitem[Hickernell and Yuan, 1997]{Hickernell1997}
Hickernell, F.~J. and Yuan, Y.-X. (1997).
\newblock A simple multistart algorithm for global optimization.
\newblock {\em OR Transactions}, 1(2):1--12.

\bibitem[Hong and Hickernell, 2003]{Hong2003}
Hong, H.~S. and Hickernell, F.~J. (2003).
\newblock Algorithm 823: Implementing scrambled digital sequences.
\newblock {\em ACM Trans. Math. Softw.}, 29(2):95--109.

\bibitem[Ingber, 1989]{Ingber1989}
Ingber, L. (1989).
\newblock Very fast simulated re-annealing.
\newblock {\em Mathematical and computer modelling}, 12(8):967--973.

\bibitem[Ireland, 2007]{Ireland2007}
Ireland, J. (2007).
\newblock Simulated annealing and {B}ayesian posterior distribution analysis
  applied to spectral emission line fitting.
\newblock {\em Solar Physics}, 243(2):237--252.

\bibitem[Jiao et~al., 2006]{Jiao2006}
Jiao, Y.-C., Dang, C., Leung, Y., and Hao, Y. (2006).
\newblock A modification to the new version of the price's algorithm for
  continuous global optimization problems.
\newblock {\em Journal of Global Optimization}, 36(4):609--626.

\bibitem[Lecchini-Visintini et~al., 2010]{Lecchini2010}
Lecchini-Visintini, A., Lygeros, J., and Maciejowski, J.~M. (2010).
\newblock Stochastic optimization on continuous domains with finite-time
  guarantees by markov chain monte carlo methods.
\newblock {\em Automatic Control, IEEE Transactions on}, 55(12):2858--2863.

\bibitem[Lei, 2002]{Lei2002}
Lei, G. (2002).
\newblock Adaptive random search in quasi-{M}onte {C}arlo methods for global
  optimization.
\newblock {\em Computers \& Mathematics with Applications}, 43(6):747--754.

\bibitem[Locatelli, 1996]{Locatelli1996}
Locatelli, M. (1996).
\newblock Convergence properties of simulated annealing for continuous global
  optimization.
\newblock {\em J. Appl. Prob.}, pages 1127--1140.

\bibitem[Locatelli, 2000]{Locatelli2000b}
Locatelli, M. (2000).
\newblock Convergence of a simulated annealing algorithm for continuous global
  optimization.
\newblock {\em Journal of Global Optimization}, 18(3):219--233.

\bibitem[Locatelli, 2002]{Locatelli2002}
Locatelli, M. (2002).
\newblock Simulated annealing algorithms for continuous global optimization.
\newblock In {\em Handbook of global optimization}, pages 179--229. Springer.

\bibitem[Moscato and Fontanari, 1990]{Moscato1990}
Moscato, P. and Fontanari, J.~F. (1990).
\newblock Stochastic versus deterministic update in simulated annealing.
\newblock {\em Physics Letters A}, 146(4):204--208.

\bibitem[Niederreiter, 1983]{Niederreiter1983b}
Niederreiter, H. (1983).
\newblock A quasi-{M}onte {C}arlo method for the approximate computation of the
  extreme values of a function.
\newblock In {\em Studies in pure mathematics}, pages 523--529. Springer.

\bibitem[Niederreiter, 1987]{Niederreiter1987}
Niederreiter, H. (1987).
\newblock Point sets and sequences with small discrepancy.
\newblock {\em Monatshefte f{\"u}r Mathematik}, 104(4):273--337.

\bibitem[Niederreiter, 1992]{Niederreiter1992}
Niederreiter, H. (1992).
\newblock {\em Random {N}umber {G}eneration and {Q}uasi-{M}onte {C}arlo
  {M}ethods}.
\newblock CBMS-NSF Regional conference series in applied mathematics.

\bibitem[Niederreiter and Peart, 1986]{Niederreiter1986}
Niederreiter, H. and Peart, P. (1986).
\newblock Localization of search in quasi-{M}onte {C}arlo methods for global
  optimization.
\newblock {\em SIAM journal on scientific and statistical computing},
  7(2):660--664.

\bibitem[Nikolaev and Jacobson, 2010]{Nikolaev2010}
Nikolaev, A.~G. and Jacobson, S.~H. (2010).
\newblock Simulated annealing.
\newblock In {\em Handbook of Metaheuristics}, pages 1--39. Springer.

\bibitem[Owen, 1995]{Owen1995}
Owen, A.~B. (1995).
\newblock Randomly permuted $(t, m, s)$-nets and $(t, s)$-sequences.
\newblock In {\em Monte Carlo and Quasi-Monte Carlo Methods in Scientific
  Computing. Lecture Notes in Statististics}, volume 106, pages 299--317.
  Springer, New York.

\bibitem[Pistov{\v{c}}{\'a}k and Breuer, 2004]{Pistovvcak2004}
Pistov{\v{c}}{\'a}k, F. and Breuer, T. (2004).
\newblock Using quasi-{M}onte {C}arlo scenarios in risk management.
\newblock In {\em Monte Carlo and Quasi-Monte Carlo Methods 2002}, pages
  379--392. Springer.

\bibitem[Robert and Casella, 2004]{RobCas}
Robert, C.~P. and Casella, G. (2004).
\newblock {\em {M}onte {C}arlo Statistical Methods, 2nd ed.}
\newblock Springer-Verlag, New York.

\bibitem[Rosenblatt, 1952]{Rosenblatt1952}
Rosenblatt, M. (1952).
\newblock Remarks on a multivariate transformation.
\newblock {\em Ann. Math. Statist.}, 23(3):470--472.

\bibitem[Rubenthaler et~al., 2009]{Rubenthaler2009}
Rubenthaler, S., Ryd{\'e}n, T., and Wiktorsson, M. (2009).
\newblock Fast simulated annealing in $\mathbb{R}^d$ with an application to
  maximum likelihood estimation in state-space models.
\newblock {\em Stochastic Processes and their Applications}, 119(6):1912--1931.

\bibitem[Winker and Maringer, 2007]{Winker2007}
Winker, P. and Maringer, D. (2007).
\newblock The threshold accepting optimisation algorithm in economics and
  statistics.
\newblock In {\em Optimisation, Econometric and Financial Analysis}, pages
  107--125. Springer.

\bibitem[Zhang et~al., 2011]{Zhang2011}
Zhang, H., Bonilla-Petriciolet, A., and Rangaiah, G.~P. (2011).
\newblock A review on global optimization methods for phase equilibrium
  modeling and calculations.
\newblock {\em The Open Thermodynamics Journal}, 5(S1):71--92.

\end{thebibliography}

\end{document}